\def\tcb#1{{{\color{blue}#1}}}
\def\tcr#1{{{\color{red}#1}}}
\pgfplotsset{compat=1.9}
\def\F{\mathcal{F}}
\def\E{\mathbb{E}}
\def\R{\mathbb{R}}
\def\O{\mathcal{O}}
\def\V{{\mathcal V}}
\def\I{{\mathcal I}}
\def\1{{\bf 1}}
\def\e{\epsilon}
\def\bA{{\mathbf A}}
\def\bbx{\bar{\mathbf x}}
\def\bbw{\bar{\mathbf w}}
\def\bbg{\bar{\mathbf g}}
\def\bx{{\mathbf x}}
\def\bw{{\mathbf w}}
\def\bz{{\mathbf z}}
\def\bg{{\mathbf g}}
\def\bu{{\mathbf u}}
\def\bv{{\mathbf v}}
\def\by{{\mathbf y}}
\def\hbg{\hat{\mathbf g}}
\def\tbg{\tilde{\mathbf g}}
\newtheorem{theorem}{Theorem}
\newtheorem{lemma}{Lemma}
\newtheorem{corollary}{Corrollary}
\newtheorem{assumption}{Assumption}
\title{Communication-efficient SGD: From Local SGD to One-Shot Averaging}
\author{%
  Artin Spiridonoff
\\
  Division of Systems Engineering\\
  Boston University\\
  Boston, MA 02215\\
  \texttt{artin@bu.edu} \\
   \And
   Alex Olshevsky\\
   Division of Systems Engineering\\
   Boston University\\
   Boston, MA 02215\\
   \texttt{alexols@bu.edu} \\
   \AND
   Ioannis Ch.~Paschalidis \\
   Division of Systems Engineering\\
   Boston University\\
   Boston, MA 02215\\
   \texttt{yannisp@bu.edu} \\
}
\begin{document}

\maketitle

\begin{abstract}
	We consider speeding up stochastic gradient descent (SGD) by parallelizing it across multiple workers. We assume the same data set is shared among $N$ workers, who can take SGD steps and coordinate with a central server. While it is possible to obtain a linear reduction in the variance by averaging all the stochastic gradients at every step, this requires a lot of communication between the workers and the server, which can dramatically reduce the gains from parallelism.
	The Local SGD method, proposed and analyzed in the earlier literature, suggests machines should make many local steps between such communications. While the initial analysis of Local SGD showed it needs $\Omega ( \sqrt{T} )$ communications for $T$ local gradient steps in order for the error to scale proportionately to $1/(NT)$, this has been successively improved in a string of papers, with the state of the art requiring  $\Omega \left( N \left( \mbox{ poly} (\log T) \right) \right)$ communications. In this paper, we suggest a Local SGD scheme that communicates less overall by communicating less frequently as the number of iterations grows.  Our analysis shows that this can achieve an error that scales as $1/(NT)$ with a number of communications that is completely independent of $T$. In particular, we show that $\Omega(N)$ communications are sufficient. Empirical evidence suggests this bound is close to tight as we further show that $\sqrt{N}$ or $N^{3/4}$ communications fail to achieve linear speed-up in simulations. Moreover, we show that under mild assumptions, the main of which is twice differentiability on any neighborhood of the optimal solution, one-shot averaging which only uses a single round of communication can also achieve the optimal convergence rate asymptotically.
\end{abstract}

%
%
%
%
%

%

\section{Introduction}
Stochastic Gradient Descent (SGD) is a widely used algorithm to minimize convex functions $f$ in which model parameters are updated iteratively as 
\begin{align*}
\bx^{t+1} = \bx^t - \eta_t \hbg^t,
\end{align*}
where $\hbg^t$ is a stochastic gradient of $f$ at the point $\bx^t$ and $\eta_t$ is the learning rate.
This algorithm can be naively parallelized by adding more workers independently  to compute a gradient and then average them at each step to reduce the variance in estimation of the true gradient $\nabla f(\bx^t)$ \citep{dekel2012optimal}. This method requires each worker to share their computed gradients with each other at every iteration. We will refer to this method as "synchronized parallel SGD."

However, it is widely acknowledged that communication is a major bottleneck of this method for large scale optimization applications \citep{mcmahan2016communication,konevcny2016federated,lin2017deep}.
Often, mini-batch parallel SGD is suggested to address this issue by increasing the computation to communication ratio. Nonetheless, too large mini-batch size might degrade performance \citep{lin2018don}. Along the same lines of increasing the computation over communication effort, \textit{local} SGD has been proposed to reduce communications \citep{mcmahan2016communication,dieuleveut2019communication}. In  this method, workers compute (stochastic) gradients and update their parameters locally, and communicate only once in a while to obtain the average of their parameters.
Local SGD improves the communication efficiency not only by reducing the number of communication rounds, but also alleviates the synchronization delay caused by waiting for slow workers and evens out the variations in workers' computing time \citep{wang2018cooperative}.

On the other hand, since individual gradients of each worker are calculated at different points, this method introduces residual error as opposed to fully synchronized SGD. Therefore, there is a trade-off between having fewer communication rounds and introducing additional errors to the gradient estimates.

The idea of making local updates is not new and has been used in practice for a while  \citep{mangasarian1995parallel,konevcny2016federated}. However, until recently, there have been few successful efforts to analyze Local SGD theoretically and therefore it is not fully understood yet.
\citet{zhang2016parallel} show that for quadratic functions, when the variance of the noise is higher far from the optimum, frequent averaging leads to faster convergence.
The first question we try to answer in this work is: how many communication rounds are needed for Local SGD to have the \textit{similar} convergence rate of a synchronized parallel SGD while achieving performance that linearly improves in the number of workers?

\citet{stich2018local} was among the first who sought to answer this question for general strongly convex and smooth functions and showed that the communication rounds can be reduced up to a factor of $H = \O(\sqrt{T/N})$, without affecting the asymptotic convergence rate (up to constant factors), where $T$ is the total number of iterations and $N$ is number of parallel workers. 

Focusing on smooth and possibly non-convex functions which satisfy  a Polyak-Lojasiewicz condition, \citet{haddadpour2019local} demonstrate that only $R = \Omega((TN)^{1/3})$ communication rounds are sufficient to achieve asymptotic performance that scales proportionately to $1/N$.

Recently, \citet{khaled2019tighter} and \citet{stich2019error} improve upon the previous works by showing linear speed-up for Local SGD with only $\Omega \left(N \mbox{ poly log }(T) \right)$ communication rounds when data is identically distributed among workers and $f$ is strongly convex. Their works also consider the cases when $f$ is not necessarily strongly convex as well as the case of data being heterogeneously distributed among workers. 

More recently, \cite{FedAC} proposed a new \emph{accelerated} method that requires only $\Omega \left(N^{1/3} \mbox{ poly log }(T) \right)$ communication rounds for linear speed-up. While their results improve upon the earlier work, the communication requirements remain dependent on the total iterations $T$.

One-Shot Averaging (OSA), a method that takes an extreme approach to reducing communication, involves workers performing local updates until the very end when they average their parameters \citep{mcdonald2009efficient,zinkevich2010parallelized,zhang2013communication,rosenblatt2016optimality,godichon2020rates}. This method can be seen as an extreme case of Local SGD with $R=1$ and $H=T$ local steps. \citet{dieuleveut2019communication,godichon2020rates} provide an analysis of OSA and show that asymptotically, linear speed-up in the number of workers is achieved for a weighted average of iterates. However, both of these works make  restrictive assumptions such as uniformly three-times continuously differentiability and bounded second and third derivatives or twice differentiability almost everywhere with bounded Hessian, respectively.
The second question we attempt to answer in this work, is whether these assumptions can be relaxed and OSA can achieve linear speed-up in more general scenarios.

In this work, we focus on smooth and strongly convex functions with a  general noise model. Our contributions are three-fold:
\begin{enumerate}
	\item We propose a communication strategy which requires only $R = \Omega(N)$ communication rounds to achieve performance that scales as $1/N$ in the number of workers. To the best of the authors' knowledge, this is the only work to show that the number of communications can be taken to be completely independent of $T$. All previous papers required a number of communications which was at least $N$ times a polynomial in $\log(T)$, or had a stronger scaling with $T$. A comparison of our result  to the available literature can be found in Table \ref{table: comparison}.
	\item We show under mild additional assumptions, in particular twice differentiability on a neighborhood of  the optimal point, OSA reaches linear speed-up asymptotically, i.e., with only one communication round we achieve the convergence rate of $\O(1/(NT))$.
	\item We simulate a simple  example which is not twice differentiable at the optimum and observe that our bounds for part 1. are reasonably close  to being tight. In particular, using $1$ or $\sqrt{N}$ or $N^{3/4}$ communications does not appear to result in a linear speed-up in the number of workers (while $N$ communications does give a linear speed-up).
\end{enumerate}

\begin{table}
  \caption{Comparison of Similar Works}
  \label{table: comparison}
  \begin{threeparttable}
  	\centering
  	\resizebox{\textwidth}{!}{
  		\begin{tabular}{@{}llll@{}}
  			\toprule
  			Reference & \makecell[l]{Convergence rate \\ $f(\hat \bx^T) - f^*$\tnote{\tcb{a}}} & \makecell[l]{Communication \\ Rounds $R$}  & \makecell[l]{Noise \\ model} \\ 
  			\midrule
  			\cite{stich2018local} & $\O \left(
  			\frac{\xi^0}{R^3} + \frac{\sigma^2}{\mu N T} + 
  			\frac{\kappa G^2}{\mu R^2} \right)$\tnote{\tcb{b}}  & $\Omega(\sqrt{TN})$ & \small{uniform}  \\ 
  			\cite{haddadpour2019local} &
  			$\O \left(
  			\frac{\xi^0}{R^3}+\frac{\kappa \sigma^2}{\mu N T} + 
  			\frac{\kappa^2 \sigma^2}{\mu N T R} \right)$ & $\Omega((TN)^{1/3})$ & \makecell[l]{\small uniform with\\ \small strong-growth\tnote{\tcb{c}}} \\ 
  			\cite{stich2019error} &
  			$\tilde \O\left(
  		    \text{exp. decay} + 
  			\frac{\sigma^2}{\mu N T}\right)$\tnote{\tcb{d}} & $ \Omega(N*\text{poly} (\log T))$ & \makecell[l]{\small uniform with\\ \small strong-growth} \\ 
  			\cite{woodworth2020local} & $\O \left( \text{exp. decay} + \frac{\sigma^2}{\mu N T} + \frac{\kappa \sigma^2 \log(9+ T/\kappa)}{\mu T R}\right)$ & $ \Omega(N*\text{poly} (\log T))$ & \small{uniform}\\
  			\cite{khaled2019tighter} & $\tilde \O\left( 
  			\frac{\kappa \xi^0}{T^2} + 
  			\frac{\kappa \sigma^2}{\mu N T} + \frac{\kappa^2 \sigma^2}{\mu T R} \right)$ & $ \Omega(N*\text{poly} (\log T))$ & \small{uniform} \\ 
  			\cite{FedAC} & $\tilde \O \left(\text{exp. decay} + \frac{\sigma^2}{\mu N T} + \frac{\kappa^2 \sigma^2}{\mu T R^3} \right)$\tnote{\tcb{e}} & $ \Omega(N^{1/3}*\text{poly} (\log T))$ & \small{uniform}
  			\\
  			\textbf{This Paper} & $\O \left( 
  			\frac{(1 + c\kappa^2\ln(TR^{-2}))\xi^0}{\kappa^{-2}T^2} + 
  			\frac{\kappa \sigma^2}{\mu N T} + \frac{\kappa^2 \sigma^2}{\mu T R}\right)$\tnote{\tcb{f}} & $\Omega(N)$ & \makecell[l]{\small uniform with\\ \small strong-growth} \\ 
  			\bottomrule
  		\end{tabular}
  	}
  \begin{tablenotes}
	\item [\tcb{a}] Depending on the work, $\hat \bx^T$ is either the last iterate or a weighted average of iterates up to $T$.
  	\item [\tcb{b}] $G$ is the uniform upper bound assumed for the $l_2$ norm of gradients in the corresponding work.
  	\item [\tcb{c}] This noise model is defined in Assumption~\ref{asm: noise strong growth}.
  	\item [\tcb{d}] $\tilde \O (.)$ ignores the poly-logarithmic and constant factors.
  	\item [\tcb{e}] This is the bound for FedAC-II. FedAC-I requires $R= \Omega (N^{1/2}*\text{poly}(\log T))$.
  	\item [\tcb{f}] $c$ is the multiplicative factor in the noise model defined in Assumption~\ref{asm: noise strong growth}.
  \end{tablenotes}
  \end{threeparttable}
\end{table}

We notice that FedAC \citep{FedAC} has a better dependence on the number of workers $N$, in expense of (poly logarithmic) dependence on $T$.  
With that in mind, we still believe our communication strategy is of independent interest, particularly in the framework of non-accelerated methods. We have performed extensive numerical experiments and comparisons between the two methods and highlighted the regimes where each method outperforms the other.

It is worth mentioning that although the the communication complexity by \citet{woodworth2020local} depends on $T$, their bound has a lower dependence on condition number $\kappa$. Hence, their results are stronger than ours only when $\kappa = \Omega(\log T)$.

The rest of this paper is organized as follows. In the following subsection we outline the related literature and ongoing works. In Section  \ref{sec: Problem} we define the main problem and state our assumptions. We present our theoretical findings in Section \ref{sec: convergence} followed by numerical experiments in Section \ref{sec: Numerical } and conclusion remarks in Section \ref{sec: conclusion}.

\subsection{Related work}
There has been a lot of effort in the recent research to take into account the communication delays and training time in designing faster algorithms \citep{mcdonald2010distributed,zhang2015deep,bijral2016data,kairouz2019advances}.
See \citep{tang2020communication} for a comprehensive survey of communication efficient distributed training algorithms considering both system-level and algorithm-level optimizations.

Many works study the communication complexity of distributed methods for convex optimization \citep{arjevani2015communication,woodworth2020local}  and statistical estimation \citep{zhang2013information}.
\citet{woodworth2020local} present a rigorous comparison of Local SGD with $H$ local steps and mini-batch SGD with  $H$ times larger mini-batch size and the same number of communication rounds (we will refer to such a method as large mini-batch SGD) and show regimes in which each algorithm performs better: they show that Local SGD is strictly better than large mini-batch SGD when the functions are quadratic. Moreover, they prove a lower bound on the worst case of Local SGD that is higher than the worst-case error of large mini-batch SGD in a certain regime.
\citet{zhang2013information} study the minimum amount of communication required to achieve centralized minimax-optimal rates by establishing lower bounds on minimax risks for distributed statistical estimation under a communication budget.

A parallel line of work studies the convergence of Local SGD with non-convex functions \cite{zhou2017convergence}.
\cite{yu2019parallel} was among the first works to present provable guarantees of Local SGD with linear speed-up.
\citet{wang2018cooperative} and \citet{koloskova2020unified} present unified frameworks for analyzing decentralized SGD with local updates, elastic averaging or changing topology.
The follow-up work of \citet{wang2018adaptive} presents ADACOMM, an adaptive communication strategy that starts with infrequent averaging and then increases the communication frequency in order to achieve a low error floor. They analyze the error-runtime trade-off of Local SGD 
with nonconvex functions and propose communication times to achieve faster runtime.

Another line of work reduces the communication by compressing the gradients and hence limiting the number of bits transmitted in every message between workers \citep{lin2017deep,alistarh2017qsgd,wangni2018gradient,stich2018sparsified,stich2019error}.

Asynchronous methods have been studied widely due to their advantages over synchronized methods which suffer from synchronization delays due to the slower workers \citep{olshevsky2018robust}.
\citet{wang2019matcha} study the error-runtime trade-off in decentralized optimization and proposes MATCHA, an algorithm which parallelizes inter-node communication by decomposing the topology into matchings. 
However, these methods are relatively more involved and they often require full knowledge of the network, solving a semi-definite program and/or calculating communication probabilities (schedules) as in \citet{hendrikx2019accelerated}.

\paragraph{The homogeneous data assumption.}
In this work, we focus on the case when the data distribution is the same across workers.  A number of previous works \citep{khaled2019tighter,haddadpour2019local,stich2018local,dieuleveut2019communication} studied local SGD under this assumption. The assumption is valid when the same data set is either shared across multiple workers in the same cluster, or the assignment of data points to workers is random so that any distributional differences are small. Sharing the  data set across multiple workers in this way is a popular strategy to speed up training. For example, such data sharing is implemented in  \citep{chen2012pipelined,yadan2013multi,zhang2013asynchronous} to speed up training of deep neural networks with multiple GPUs within a single server.
While there are many widely used mechanisms such as Horovod \citep{sergeev2018horovod} for \emph{synchronous} data-parallel distributed training, they share a major communication bottleneck of broadcasting gradients to all workers \citep{grubic2018synchronous}.
Local SGD improves on these methods by reducing the communication of model parameters from every iteration to a smaller number of rounds during the entire optimization process. 
Our approach further reduces the communication overhead by communicating less as the number of iterations grows. 


\subsection{Notation}
For a positive integer $s$, we define $[s]:=\{1,\ldots,s\}$. We use bold letters to represent vectors. We denote vectors of all $0$s and $1$s by $\mathbf{0}$ and $\mathbf{1}$, respectively. We use $\Vert \cdot \Vert$ for the Euclidean norm of a vector and spectral norm of a matrix. Finally, $\mathcal{N}(\mu, \sigma^2)$ denotes a normal distribution with mean $\mu$ and variance $\sigma^2$.

\section{Problem formulation}\label{sec: Problem}
Suppose there are $N$ workers  $\V=\{1, \ldots, N\}$, trying to minimize $f:\R^d \rightarrow \R$ in parallel.
We assume all workers have access to $f$ through noisy gradients. In Local SGD, workers perform local gradient steps and occasionally calculate the average of all workers' iterates. 
Each worker $i$ holds a local parameter $\bx_i^t$ at iteration $t$. There is a set $\I \subset [T]$ of communication times and nodes perform the following update:
\begin{align}\label{eq: Local SGD process}
\bx_i^{t+1} = \begin{cases}
x_i^t - \eta_t \hbg_i^t, \qquad &\text{if } t+1 \notin \I,\\
\frac{1}{N}\sum_{j=1}^N (\bx_j^t - \eta_t \hbg_j^t),
\qquad &\text{if } t+1 \in \I,
\end{cases}
\end{align}
where $\hbg_i^t$ is an unbiased stochastic gradient of $f$ at $\bx_i^t$.
When $\I = [T]$, we recover  fully synchronized parallel SGD while $\I = \{T\}$ recovers one-shot averaging.
Pseudo-code for Local SGD is provided as Algorithm \ref{alg: Local SGD}.
\begin{algorithm}
	\caption{Local SGD}
	\begin{algorithmic}[1]\label{alg: Local SGD}
		\STATE Input: $\bx_i^0 = \bx^0$ for all $i \in [n]$, total number of iterations $T$, the step-size sequence $\{\eta_t\}_{t=0}^{T-1}$, and $\I \subseteq [T]$
		\FOR{$t=0,\ldots,T-1$}
		\FOR{$j=1,\ldots,N$}
		\STATE evaluate a stochastic gradient $\hbg_j^t$
		\IF{$t+1 \in \mathcal{I}$}
		\STATE $\bx_j^{t+1} = \frac{1}{N}\sum_{i=1}^N (\bx_i^t - \eta_t \hbg_i^t)$
		\ELSE
		\STATE $\bx_j^{t+1} = \bx_j^t - \eta_t \hbg_j^t$
		\ENDIF
		\ENDFOR
		\ENDFOR
	\end{algorithmic}
\end{algorithm}

Next we state the assumptions that we will use in our results. Note that we will not require all of them to hold at once.
\begin{assumption}[smoothness] \label{asm: smoothness}
	The function $f:\R^d\rightarrow \R$ is continuously differentiable and its gradients are $L$-Lipschitz, i.e.,
	\begin{align*}
		\Vert \nabla f(\bx) - \nabla f(\by) \Vert \leq L \Vert \bx - \by \Vert, \qquad \forall \bx, \by.
	\end{align*}
\end{assumption}

\begin{assumption}[strong convexity] \label{asm: strong convexity}
	$f$ is $\mu$-strongly convex with $\mu>0$, i.e.,
	\begin{align*}
	f(\bx) + \langle \bg, \by - \bx \rangle + \frac{\mu}{2}\Vert \bx - \by \Vert^2 \leq f(\by), \qquad \forall \bx,\by \in \R^d, \forall \bg \in \partial f(\bx),
	\end{align*}
where $\partial f(\bx)$ denotes the set of subgradients of $f$ at $\bx$. When $f$ is also continuously differentiable, $\partial f(\bx) = \{ \nabla f(\bx) \}$. 
\end{assumption}

Note that when $f$ satisfies Assumption \ref{asm: strong convexity}, it has a \emph{unique} optimal point $\bx^*$ where $f(\bx^*) = f^*$ where $f^*= \min_\bx f(\bx)$.

\begin{assumption}[Polyak-{\L}ohasiewicz condition] \label{asm: PL}
	$f$ is $\mu$-Polyak-{\L}ohasiewicz ($\mu$-PL for short) if
	\begin{align*}
	\Vert \nabla f(\bx) \Vert^2 \geq 2\mu(f(\bx) - f^*), \qquad \forall \bx.
	\end{align*}
	where $f^*= \min_\bx f(\bx)$ is the global minimum of $f$. We further assume that $f$ has a \emph{unique} optimal point $\bx^*$ where $f(\bx^*) = f^*$.
\end{assumption}
When $f$ satisfies both Assumptions \ref{asm: smoothness} and \ref{asm: strong convexity} or Assumptions \ref{asm: smoothness} and \ref{asm: PL}, we define $\kappa=L/\mu$ as the condition number of $f$.

Strong convexity implies the PL condition but the reverse does not always hold. For instance, the logistic regression loss function satisfies the PL condition over any compact set  \citep{karimi2016linear}. In fact, a PL function is not even necessarily convex. 
\cite{charles2018stability} show that deep networks with linear activation functions are PL
almost everywhere in the parameter space. \citet{allen2018convergence} show, with high probability over random initializations, that sufficiently wide recurrent neural networks satisfy the PL condition. 
Therefore, the PL condition is more applicable, especially in the context of neural networks \citep{madden2020high}.

\begin{assumption}[twice differentiability at the optimum]\label{asm: 2-time diff}
	$f$ is twice continuously differentiable on an open set containing the optimal point $\bx^*$.
\end{assumption}

We make the following assumption on the noise of stochastic gradients, using  $\bw_i^t = \hbg_i^t - \nabla f(\bx_i^t)$ to denote the difference between the stochastic and true gradients.

\begin{assumption}[uniform with strong-growth noise]\label{asm: noise strong growth}
		Conditioned on the iterate $\bx_i^t$, the random variable $\bw_i^t$ is zero-mean and independent with its expected squared norm error bounded as,
	\begin{align*}
	\E[\Vert \bw_i^t \Vert^2|\bx_i^t] \leq c\Vert \nabla f(\bx_i^t) \Vert^2 + \sigma^2,
	\end{align*}
	where $\sigma^2,c\geq0$ are constants.
\end{assumption}
The noise model of Assumption \ref{asm: noise strong growth} is very general and it includes the common case with uniformly bounded squared norm error when $c=0$. As it is noted by \cite{zhang2016parallel}, the advantage of periodic averaging compared to one-shot averaging only appears when $c/\sigma^2$ is large. Therefore, to study Local SGD, it is important to consider a noise model as in Assumption \ref{asm: noise strong growth} to capture the effects of frequent averaging.
Among the related works mentioned in Table \ref{table: comparison}, only \cite{stich2019error} and \cite{haddadpour2019local} analyze this noise model while the rest study the special case with $c=0$.
SGD under this noise model with $c>0$ and $\sigma^2=0$ was first studied in \cite{schmidt2013fast} under the name \textit{strong-growth condition}. Therefore we refer to the noise model considered in this work as \textit{uniform with strong-growth}.

\begin{assumption}[sub-Gaussian noise]\label{asm: noise sub gaussian}
	Conditioned on the iterate $\bx_i^t$, random variable $\bw_i^t$ is zero-mean, independent and $[\bw_i^t]_l$ is $(\sigma/\sqrt{d})$-sub-Gaussian, for $l=1,\ldots,d$, i.e., 
	\[ \E[\exp(\lambda([\bw_i^t]_l - \E[\bw_i^t]_l))| \bx_i^t] \leq \exp\left(\frac{\lambda^2 \sigma^2}{2d} \right), \qquad \forall \lambda \in \R, l=1,\ldots,d.  \]
	Thus, it has uniformly bounded variance $\E[\Vert\bw_i^t \Vert^2| \bx_i^t] \leq \sigma^2$.
\end{assumption}
A sub-Gaussian noise model is commonly assumed for deriving concentration bounds for SGD, which we will use to prove our results for OSA.

As already mentioned in the Introduction, the main goal of this paper is to study the effect of communication times on the convergence of the Local SGD and provide better theoretical guarantees. In what follows, we claim that by carefully choosing the communication times, linear speed-up of parallel SGD can be attained with only a small number of communication instances. Moreover, we will obtain a set of sufficient conditions for OSA to achieve linear speed-up.

\section{Convergence results}\label{sec: convergence}
In this section we present our main convergence results for Local SGD and OSA.
In what follows, we denote by $\bbx^t := (\sum_{i=1}^N \bx_i^t)/N$  the average of the iterates of all workers. Notice that $\bx_i^t = \bbx^t$ for $t\in \I$ and $i=1,\ldots,N$.

\subsection{Local SGD}

Let us introduce the notation $$0=\tau_0 <\tau_1<\ldots<\tau_R = T,$$ for the communication times. Further, let us define $H_i := \tau_{i+1} - \tau_i$ to be the $i$'th interc-communication interval. Our first theorem gives a performance bound under the assumption that $H_i$ grows linearly with $i$. 

\begin{theorem}\label{thm: logT}
	Suppose Assumptions~\ref{asm: smoothness} (smoothness), \ref{asm: strong convexity} (strong convexity) and \ref{asm: noise strong growth} (uniform with strong growth noise) hold.
	
	Choose the parameters as follows:  $R$ 
	such that $1 \leq R \leq \sqrt{2T}$ and  $a:=\lceil 2T/R^2 \rceil \geq 1$, $H_i = a(i+1)$ and $\tau_{i+1} = \min(\tau_i + H_i, T)$ for $i=0,\ldots,R-1$.
	Choose $\beta \geq \max \{9\kappa, 12\kappa^2c \max\{\ln(3), \ln(1+T/(4 \kappa R^2))\} + 3\kappa(1+c/N) \}$ and set the learning rate as $\eta_t = 3/ \mu (t+\beta), t=0,1,\ldots,T-1$. 
	
	Then using Algorithm \ref{alg: Local SGD} we have,
	\begin{align*} 
	\E[f(\bbx^T)] - f^* \leq \frac{ \beta^2 (f(\bbx^0) - f^*)}{T^2}
	+  \frac{9 L \sigma^2}{2 \mu^2 N T} 
	+ \frac{144 L^2 \sigma^2}{\mu^3 RT}.
	\end{align*}
\end{theorem}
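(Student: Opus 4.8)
The plan is to track only the averaged iterate $\bbx^t=\frac1N\sum_{i\in\V}\bx_i^t$, which satisfies the perturbed‑SGD recursion $\bbx^{t+1}=\bbx^t-\eta_t\big(\frac1N\sum_{i\in\V}\nabla f(\bx_i^t)+\bbw^t\big)$ with $\bbw^t=\frac1N\sum_{i\in\V}\bw_i^t$: this is ordinary SGD on $f$, except that the true gradients are evaluated at the displaced points $\bx_i^t$ rather than at $\bbx^t$. Writing $\delta_t:=\E[f(\bbx^t)]-f^*$ and $e_t:=\frac1N\sum_{i\in\V}\E\|\bx_i^t-\bbx^t\|^2$ (the expected consensus error), the first step is a one‑step inequality: apply the smoothness descent lemma to $f(\bbx^{t+1})$, bound the gradient mismatch $\big\langle\nabla f(\bbx^t),\frac1N\sum_i(\nabla f(\bbx^t)-\nabla f(\bx_i^t))\big\rangle$ by Young's inequality and $L$‑Lipschitzness in terms of $\|\nabla f(\bbx^t)\|^2$ and $e_t$, bound $\E\|\bbw^t\|^2\le\frac{\sigma^2}{N}+\frac{c}{N^2}\sum_i\E\|\nabla f(\bx_i^t)\|^2$ via Assumption~\ref{asm: noise strong growth}, and then invoke $2\mu\,\delta_t\le\|\nabla f(\bbx^t)\|^2\le 2L\,\delta_t$ (Assumptions~\ref{asm: smoothness} and \ref{asm: strong convexity}). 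With the Young parameters chosen suitably and $\eta_t\le 1/(9L)$ (which $\beta\ge 9\kappa$ ensures), one obtains a recursion of the schematic form \[\delta_{t+1}\ \le\ (1-\mu\eta_t)\,\delta_t\ +\ \frac{L\eta_t^2\sigma^2}{2N}\ +\ C_1 L^2\eta_t\,e_t,\] with all $c$‑dependent pieces absorbed into the constant $C_1$ and into $e_t$.

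\smallskip

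\noindent Bounding $e_t$ is the new ingredient and the main obstacle. For $t$ inside the $k$‑th round ($\tau_k\le t<\tau_{k+1}$), all workers began the round from the common point $\bbx^{\tau_k}$, so $\bx_i^t-\bbx^t=-\sum_{s=\tau_k}^{t-1}\eta_s\big[(\nabla f(\bx_i^s)-\frac1N\sum_j\nabla f(\bx_j^s))+(\bw_i^s-\bbw^s)\big]$. I would bound $\|\bx_i^t-\bbx^t\|^2$ by separating the martingale (noise) part, whose conditional cross‑terms vanish so it contributes only $\sum_{s=\tau_k}^{t-1}\eta_s^2\cdot O(\sigma^2+c\,\E\|\nabla f(\bx_i^s)\|^2)$ \emph{without} a factor of the round length $H_k$, from the deterministic‑gradient part, which incurs a Cauchy--Schwarz factor $H_k$ and is bounded by $H_k\sum_{s=\tau_k}^{t-1}\eta_s^2\cdot O(\E\|\nabla f(\bx_i^s)\|^2)$. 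Converting $\|\nabla f(\bx_i^s)\|^2\le O(L\,\delta_s+L^2 e_s)$, one finds that $e_t$ is controlled by $\sigma^2 H_k\eta_{\tau_k}^2$ plus a weighted sum of $\delta_s$ and $e_s$ over the round; thus $e$ and $\delta$ form a \emph{coupled} system, and one cannot simply substitute a closed form for $e_t$ into the recursion of the first step. The resolution is that over one round the self‑amplification of $e$ is governed by $\sum_{s=\tau_k}^{\tau_{k+1}-1}\eta_s\le\frac3\mu\cdot\frac{H_k}{\tau_k+\beta}$; since the schedule gives $\tau_k=\frac a2 k(k+1)$ and $H_k=a(k+1)$, hence $H_k/\tau_k=2/k$, this sum is $O(1/\mu)$ for every round $k\ge 1$, whereas for the first round ($\tau_0=0$) it is $\frac3\mu\ln(1+H_0/\beta)\le\frac3\mu\ln(1+a/\beta)$. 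The stated lower bound on $\beta$ — in particular the term $12\kappa^2 c\max\{\ln 3,\ln(1+T/(4\kappa R^2))\}$, noting $T/R^2\asymp a$ — is exactly what keeps this amplification below a fixed constant over all rounds, the logarithm being the first round's contribution; once this is in place the coupling can be closed (e.g.\ by a short induction within each round), giving $e_t\lesssim\sigma^2 H_{k(t)}\eta_t^2$ up to the now‑tamed $\delta$‑feedback.

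\smallskip

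\noindent The final step is to unroll the decoupled recursion with weights $w_t=(t+\beta)^2$, which satisfy $w_t(1-\mu\eta_t)=(t+\beta)(t+\beta-3)\le(t+\beta-1)^2=w_{t-1}$. Multiplying by $w_t$ and telescoping over $t=0,\dots,T-1$ yields $w_{T-1}\delta_T\le\beta^2\delta_0+\sum_{t=0}^{T-1}w_t\big(\frac{L\eta_t^2\sigma^2}{2N}+C_1L^2\eta_t e_t\big)$; dividing by $w_{T-1}\ge T^2$ (valid since $\beta\ge1$) gives the term $\beta^2\delta_0/T^2$ at once, while $\frac1{T^2}\sum_t(t+\beta)^2\frac{9}{\mu^2(t+\beta)^2}\frac{L\sigma^2}{2N}=\frac{9L\sigma^2}{2\mu^2 NT}$ is the second term exactly. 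For the consensus term, inserting $e_t\lesssim\sigma^2 H_{k(t)}\eta_t^2$ turns $w_tL^2\eta_t e_t$ into $O\!\big(\tfrac{L^2\sigma^2 H_{k(t)}}{\mu^3(t+\beta)}\big)$, so it remains to estimate $\sum_t\frac{H_{k(t)}}{t+\beta}=\sum_{k=0}^{R-1}\sum_{t=\tau_k}^{\tau_{k+1}-1}\frac{H_k}{t+\beta}\lesssim\sum_k\frac{H_k^2}{\tau_k}$; with $\tau_k=\frac a2 k(k+1)$ and $H_k=a(k+1)$ this is $\lesssim aR$, and since $R\le\sqrt{2T}$ forces $a=\lceil 2T/R^2\rceil\le 4T/R^2$, we get $\sum_t\frac{H_{k(t)}}{t+\beta}\lesssim T/R$, which after division by $T^2$ yields the third term $O\!\big(\tfrac{L^2\sigma^2}{\mu^3 TR}\big)$. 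The schedule $H_i=a(i+1)$ with $a=\lceil 2T/R^2\rceil$ is chosen precisely so that $\sum_{i=0}^{R-1}H_i=\frac a2 R(R+1)\ge T$ (so the rounds reach $T$), so that $\sum_k H_k^2/\tau_k=O(aR)=O(T/R)$ stays small, and so that the first round $[0,\tau_1)$ is short enough ($H_0=a\asymp T/R^2$) that $\beta$ need only depend logarithmically on $T$.
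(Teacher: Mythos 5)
Your overall architecture (one-step descent recursion for $\delta_t$, a separate bound on the consensus error $e_t$, then telescoping with weights $(t+\beta)^2$ and summing $\sum_t H_{k(t)}/(t+\beta)\lesssim aR\lesssim T/R$) matches the paper, and your first and third steps are essentially correct. The gap is in the middle step, and it is exactly the step where the paper's improvement over prior work lives. You bound the deterministic drift part of $\bx_i^t-\bbx^t$ by Cauchy--Schwarz over the round, picking up a factor $H_k$, and then use Lipschitzness, so the self-feedback coefficient of $e$ over one round is of order $H_k\sum_{s}\eta_s^2L^2\approx (LH_k\eta_{\tau_k})^2 = 9\kappa^2H_k^2/(\tau_k+\beta)^2$. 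For your ``short induction within each round'' to close, this must be bounded below $1$. With the schedule $H_k=a(k+1)$, $\tau_k=\tfrac a2k(k+1)$, this factor is $\approx 36\kappa^2/k^2$ for $k\ge 1$ (fails for all rounds $k\lesssim\kappa$) and $\approx 9\kappa^2a^2/\beta^2$ for the first round, which under the stated $\beta=\Theta(\kappa)$ (the case $c=0$) is of order $a^2=\Theta(T^2/R^4)$ --- astronomically large. Your claim that the amplification is governed by $\sum_s\eta_s\le\frac3\mu\frac{H_k}{\tau_k+\beta}$ and that the stated lower bound on $\beta$ tames it is therefore not correct: when $c=0$ the stated condition is just $\beta\ge 9\kappa$, which cannot control a Cauchy--Schwarz blow-up of order $T/R^2$ per round. (You have also misread the role of the $12\kappa^2c\ln(1+T/(4\kappa R^2))$ term: in the paper it arises from cancelling the accumulated $c\,\E\|\nabla f(\bx_i^s)\|^2$ noise contributions against the \emph{negative} term $-\frac{\eta_t}{2N}\E\sum_i\|\nabla f(\bx_i^t)\|^2$ produced by the descent lemma, and it disappears entirely when $c=0$; it has nothing to do with drift amplification.)

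The missing idea is the paper's Lemma~\ref{lem: consensus 1}: by strong convexity, $\sum_i\langle\bg_i^t,\bx_i^t-\bbx^t\rangle\ge\frac\mu2\sum_i\|\bx_i^t-\bbx^t\|^2$, and by smoothness $\sum_i\|\bg_i^t-\bbg^t\|^2\le 2L\sum_i\langle\bg_i^t,\bx_i^t-\bbx^t\rangle$, so the deterministic drift \emph{contracts} the consensus error at each step with factor $(1-\eta_t\mu+\eta_t^2\mu L)\le 1-2/(t+\beta)$, with no $H_k$ factor and no accumulation whatsoever; only the noise accumulates, giving $e_t\lesssim\sum_{k=\tau(t)}^{t-1}\eta_k^2(\sigma^2+c\,\E[G^k])$ times a contraction product bounded via Lemma~\ref{lem: products}. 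This contraction is what allows $\beta$ to be independent of $T$ (up to the $c$-weighted logarithm) and is the reason $R=\Omega(N)$ suffices; the Cauchy--Schwarz treatment you propose is essentially the divergence bound of earlier analyses and would force either $\beta=\Omega(\kappa T/R^2)$ (destroying the $\beta^2\xi^0/T^2$ term) or many more rounds. To repair your proof, replace your round-length Cauchy--Schwarz bound on the drift with this per-step contraction estimate; the rest of your argument then goes through, though you would still need to track constants to land on the stated $144L^2\sigma^2/(\mu^3RT)$.
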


\medskip

\begin{corollary}\label{cor: R=N}
    Under the assumptions of Theorem \ref{thm: logT}, selecting the number of communications $R=\Omega(\kappa N)$ we obtain
    \begin{align*}
        \E[f(\bbx^T)] - f^* \leq \frac{ \beta^2 (f(\bbx^0) - f^*)}{T^2}
	+ \O \left( \frac{L \sigma^2}{\mu^2 N T} \right).
    \end{align*}
\end{corollary}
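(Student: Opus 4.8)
The plan is to obtain Corollary~\ref{cor: R=N} as a direct specialization of Theorem~\ref{thm: logT}: fix the number of communication rounds at $R = c_0 \kappa N$ for an absolute constant $c_0 \geq 1$, substitute into the three-term bound of the theorem, and check that the last term is absorbed into the second. Using $\kappa = L/\mu$, the third term becomes
\begin{align*}
\frac{144 L^2 \sigma^2}{\mu^3 R T} \;=\; \frac{144 \kappa L \sigma^2}{\mu^2 R T} \;=\; \frac{144}{c_0}\cdot \frac{L \sigma^2}{\mu^2 N T} \;=\; \O\!\left(\frac{L\sigma^2}{\mu^2 N T}\right),
\end{align*}
which is exactly the order of the middle term $\tfrac{9 L \sigma^2}{2\mu^2 N T}$. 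Summing the two and retaining the first term $\beta^2(f(\bbx^0)-f^*)/T^2$ verbatim gives the stated inequality, so structurally there is nothing to prove beyond the substitution.

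The only items that need attention are bookkeeping rather than genuine difficulties. First, the parameter choices of Theorem~\ref{thm: logT} must remain admissible: the hypothesis $1 \leq R \leq \sqrt{2T}$ forces $c_0 \kappa N \leq \sqrt{2T}$, i.e.\ the mild regime assumption $T = \Omega(\kappa^2 N^2)$, which is implicit in the corollary, while the condition $a = \lceil 2T/R^2 \rceil \geq 1$ is automatic. Second, $\beta$ (and hence the first error term) depends on $R$ through the factor $\ln\!\big(1 + T/(4\kappa R^2)\big)$ inside its definition; enlarging $R$ to $\Omega(\kappa N)$ only shrinks this logarithm, so no new term is introduced and the first term is, if anything, improved --- in particular one does not need to re-run any part of the proof of Theorem~\ref{thm: logT}.

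I expect the ``hardest'' aspect to be purely expository: making the hidden constant in $\Omega(\cdot)$ transparent. Any $c_0 \geq 1$ suffices for the asymptotic claim, and taking $c_0$ larger trades more communication for a smaller coefficient in front of $L\sigma^2/(\mu^2 N T)$; for fixed condition number $\kappa$ this is precisely the headline statement from the introduction that $R = \Omega(N)$ rounds already yield error scaling as $1/(NT)$, with a communication budget independent of $T$.
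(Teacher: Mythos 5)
Your proposal is correct and is exactly the intended (and only) argument: the paper states the corollary as a direct specialization of Theorem~\ref{thm: logT}, where setting $R = \Omega(\kappa N)$ makes the third term $\tfrac{144 L^2\sigma^2}{\mu^3 RT} = \O\bigl(\tfrac{L\sigma^2}{\mu^2 NT}\bigr)$ and absorbs it into the second. Your additional bookkeeping on the admissibility constraint $R \leq \sqrt{2T}$ and the monotone dependence of $\beta$ on $R$ is a correct and slightly more careful account than the paper provides.
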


The choice of communication times in Theorem \ref{thm: logT} aligns with the intuition that workers need to communicate more frequently at the beginning of the optimization. As the the step-sizes become smaller and workers' local parameters get closer to the global minimum, they diverge more slowly from each other and therefore, less communication is required to re-align them.
The advantage of this communication strategy over fixed periodic averaging has been only empirically shown in \citet{haddadpour2019local}.
The proof of Theorem \ref{thm: logT} can be found in Appendix~\ref{sec: apx local sgd}.


\subsection{One-shot averaging}
The previous literature literature has shown OSA achieves asymptotic linear speed-up under some restrictive assumptions. For instance, 
\cite{dieuleveut2019communication} show this for three times continuously differentiable functions with second
and third uniformly bounded derivatives. 
Similarly, \cite{godichon2020rates} require the objective function to be strongly convex, twice continuously differentiable almost everywhere, 
with a bounded Hessian everywhere and gradients satisfying the following condition for some constant $C_m$ and all $\bx\in \R^d$,
\begin{align*}
    \left\Vert \nabla f(\bx) - \nabla^2 f(\bx^*)(\bx - \bx^*) \right\Vert \leq C_m \Vert \bx - \bx^* \Vert^2. 
\end{align*}
This inequality is similar to the assumption from \cite{dieuleveut2019communication} of uniformly bounded third derivatives. 
In the following theorem, we relax these assumptions and show that OSA achieves linear speed-up under considerably milder assumptions.

Before proceeding, let us define the step-size sequence $\{\theta_t\}$ as
\begin{align}\label{eq: theta}
\theta_t = \begin{cases}
\frac{1}{L}, \qquad &\text{for } t= 0,\ldots,t_0-1, \\
\frac{2t}{\mu(t+1)^2}, \qquad &\text{for } t\geq t_0,
\end{cases}
\end{align}
where $t_0 = \lfloor 2L/\mu \rfloor$. Notice that $\theta_t \leq 1/L$ for all $t$.

\begin{theorem}\label{thm: osa main}
	Under Assumptions \ref{asm: smoothness} (smoothness), \ref{asm: PL} (PL condition), \ref{asm: 2-time diff} (twice differentiability at the optimum) and \ref{asm: noise sub gaussian} (sub-Gaussian noise) and with step-size sequence $\{ \eta_t\} = \{ \theta_t\}$ defined in \eqref{eq: theta}, we have for $T \geq t_0$,
	\begin{align*}
	\E \left[ \left\Vert \bbx^T - \bx^* \right\Vert^2 \right] \leq 
    \frac{4\sigma^2}{3 \mu^2 NT} + o\left( \frac{1}{T} \right).
	\end{align*}
\end{theorem}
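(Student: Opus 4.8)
The plan is to reduce the $N$‑worker one‑shot average to a (sharp) single‑worker SGD analysis, and then to carry out that single‑worker analysis by linearizing the dynamics around $\bx^*$. Since OSA runs $N$ independent trajectories from the common start $\bx^0$ and only averages at time $T$, the iterates $\bx_1^T,\dots,\bx_N^T$ are i.i.d.\ copies of a single SGD iterate, which I will write simply as $\bx^T$. A bias–variance split gives
\begin{align*}
\E\big[\|\bbx^T-\bx^*\|^2\big]=\big\|\E[\bx^T-\bx^*]\big\|^2+\tfrac1N\Big(\E\big[\|\bx^T-\bx^*\|^2\big]-\big\|\E[\bx^T-\bx^*]\big\|^2\Big).
\end{align*}
Hence it suffices to establish two single‑worker facts: (i) $\E[\|\bx^T-\bx^*\|^2]\le \tfrac{4\sigma^2}{3\mu^2 T}+o(1/T)$, and (ii) $\|\E[\bx^T-\bx^*]\|=o(1/\sqrt T)$. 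Given (i)–(ii), the squared‑bias term is $o(1/T)$ and the bracketed term is $\tfrac{4\sigma^2}{3\mu^2 T}(1+o(1))$, which yields the theorem.

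First I would record coarse single‑worker estimates. Smoothness (Assumption~\ref{asm: smoothness}) together with $\theta_t\le 1/L$ gives the usual descent inequality along the trajectory; combining this with the PL condition (Assumption~\ref{asm: PL}) and the schedule $\theta_t=\Theta(1/t)$ for $t\ge t_0$, a standard recursion yields $\E[f(\bx^t)-f^*]=O(1/t)$, and then the quadratic‑growth consequence of PL (uniqueness of $\bx^*$) gives $\E[\|\bx^t-\bx^*\|^2]=O(1/t)$. I would then invoke the sub‑Gaussian noise model (Assumption~\ref{asm: noise sub gaussian}) through a martingale/Freedman‑type concentration argument to upgrade this to a higher‑moment control $\E[\|\bx^t-\bx^*\|^{2+\alpha}]=O(t^{-1-\alpha/2})$ for some $\alpha>0$ (equivalently: for each fixed $\delta>0$, $\|\bx^t-\bx^*\|\le\delta$ for all large $t$ off an event of probability $o(1/t^2)$). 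This is what places the iterate, eventually, inside the open set where $f$ is $C^2$.

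Next, set $H:=\nabla^2 f(\bx^*)$ (Assumption~\ref{asm: 2-time diff}) and write $\nabla f(\bx)=H(\bx-\bx^*)+\bphi(\bx)$, so that $\|\bphi(\bx)\|\le 2L\|\bx-\bx^*\|$ globally (smoothness) while $\|\bphi(\bx)\|/\|\bx-\bx^*\|\to0$ as $\bx\to\bx^*$ ($C^2$ at $\bx^*$). A preliminary observation: PL plus twice‑differentiability plus uniqueness of $\bx^*$ force $H\succeq\mu I\succ0$ — nonzero eigenvalues satisfy $\bu^\top H^2\bu\ge\mu\,\bu^\top H\bu$ hence are $\ge\mu$, and a zero eigenvalue would give a direction along which $f$ is flat to second order, which contradicts PL near $\bx^*$. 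For $t\ge t_0$ decompose $\bx^t-\bx^*=\bv^t+\br^t$ with $\bv^{t_0}=\bx^{t_0}-\bx^*$ and $\bv^{t+1}=(I-\theta_t H)\bv^t-\theta_t\bw^t$ the linearized recursion, so $\br^{t_0}=0$ and $\br^{t+1}=(I-\theta_t H)\br^t-\theta_t\bphi(\bx^t)$. For the \emph{linear} part, conditional mean‑zero independence of the $\bw^t$ gives $\E\|\bv^T\|^2=\|\E\bv^T\|^2+\sum_{t=t_0}^{T-1}\theta_t^2\,\E\|\Phi_{T,t+1}\bw^t\|^2$ with $\Phi_{T,m}=\prod_{s=m}^{T-1}(I-\theta_s H)$; diagonalizing $H$ (eigenvalues $\lambda_j\ge\mu$) and using $\theta_t=\tfrac{2t}{\mu(t+1)^2}=\tfrac{2}{\mu t}(1+o(1))$, a direct computation shows each coefficient $\sum_t\theta_t^2\prod_{s>t}(1-\theta_s\lambda_j)^2$ is $\le\tfrac{4}{3\mu^2 T}(1+o(1))$, the worst case being $\lambda_j=\mu$ where $2\cdot\tfrac2\mu\cdot\mu-1=3$; together with $\|\E\bv^T\|^2=O(1/T^2)$ (since $\prod_{s=t_0}^{T-1}(1-\theta_s\mu)=O(1/T^2)$) this gives $\E\|\bv^T\|^2\le\tfrac{4\sigma^2}{3\mu^2 T}+O(1/T^2)$, using $\sum_j\E[(w^t_j)^2]\le\sigma^2$.

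For the \emph{remainder}, I would first show $\E\|\bphi(\bx^t)\|^2=o(1/t)$: write $\|\bphi(\bx^t)\|\le\min\{2L,\zeta(\|\bx^t-\bx^*\|)\}\,\|\bx^t-\bx^*\|$ with $\zeta(r)\to0$, split at $\|\bx^t-\bx^*\|\le\delta$, bound the main part by $\zeta(\delta)^2\,O(1/t)$ and the tail by $(2L)^2\delta^{-\alpha}\E\|\bx^t-\bx^*\|^{2+\alpha}=o(1/t)$, and let $\delta\to0$. Feeding this into the $\br$‑recursion, whose homogeneous factor contracts as $(1-\theta_t\mu)^2(1+\tfrac12\theta_t\mu)=1-\tfrac32\theta_t\mu+O(\theta_t^2)$ after a Young split of the cross term, and solving $e_{t+1}\le(1-\tfrac{3}{2t}(1+o(1)))e_t+o(1/t^2)$ yields $\E\|\br^T\|^2=o(1/T)$. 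Then Cauchy–Schwarz on the cross term gives $\E\|\bx^T-\bx^*\|^2=\E\|\bv^T\|^2+2\sqrt{\E\|\bv^T\|^2}\sqrt{\E\|\br^T\|^2}+\E\|\br^T\|^2=\tfrac{4\sigma^2}{3\mu^2 T}+o(1/T)$, which is (i), and $\|\E[\bx^T-\bx^*]\|\le\|\E\bv^T\|+\E\|\br^T\|\le O(1/T)+o(1/\sqrt T)$, which is (ii). The main obstacle is the remainder $\br^T$: showing the cumulative effect of the Hessian's nonlinearity is genuinely $o(1/T)$ requires both the sub‑Gaussian concentration (to keep the iterate inside the $C^2$ neighborhood with overwhelming probability) and the shrinking‑$\delta$ double‑limit argument to turn a naive $O(1/T)$ into $o(1/T)$; and recovering the exact constant $4/3$ hinges on the tedious‑but‑mechanical variance computation for the linear recursion with the prescribed step size.
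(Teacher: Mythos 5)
Your proposal is correct and would prove the theorem, but it is organized quite differently from the paper's argument, so a comparison is worthwhile. The paper works directly with the averaged iterate: since no communication occurs before $T$, $\bbx^{t+1}=\bbx^t-\eta_t\tbg^t$, and substituting $\nabla f(\bx_i^t)=\nabla^2 f(\bx^*)(\bx_i^t-\bx^*)+\bv_i^t$ yields the one-step recursion $\E\|\bbx^{t+1}-\bx^*\|^2\le(1-\eta_t\mu)^2\E\|\bbx^t-\bx^*\|^2+\eta_t^2\sigma^2/N+o(1/t^2)$ (Lemma~\ref{lem: one step}), which is then unrolled; the constant $4/3$ comes from the same $\sum_t\theta_t^2\prod_{s>t}(1-\theta_s\mu)^2$ computation you do, but performed scalar-wise via $\Vert I-\eta_t\nabla^2 f(\bx^*)\Vert\le 1-\eta_t\mu$ rather than eigenvalue by eigenvalue. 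You instead exploit that the $N$ trajectories are i.i.d., reduce to a single worker by a bias--variance split, and decompose that worker's iterate into a linear part $\bv^t$ and a nonlinear remainder $\br^t$; this buys an exact variance formula for the linear recursion and isolates the bias, at the cost of having to prove separately that $\|\E[\bx^T-\bx^*]\|=o(1/\sqrt T)$, which you do correctly via $\E\|\br^T\|^2=o(1/T)$. The crucial shared ingredient is that the Taylor residual satisfies $\E\|\bphi(\bx^t)\|^2=o(1/t)$ (the paper's Lemma~\ref{lem: vt o(1/t)}): the paper proves it by inverting the high-probability bound of Theorem~\ref{thm: concentration} and integrating the tail of the CDF of $\|\bx_i^t-\bx^*\|^2$, whereas you derive a higher-moment bound $\E\|\bx^t-\bx^*\|^{2+\alpha}=O(t^{-1-\alpha/2})$ from the same sub-Gaussian concentration and apply a $\delta$-split with Markov on the tail; both routes are valid and use the sub-Gaussian assumption for exactly the same purpose. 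Three minor points that do not affect correctness: the zero-eigenvalue case of $\nabla^2 f(\bx^*)\succeq\mu I$ is excluded by the quadratic-growth consequence of PL (Theorem~\ref{thm: quadratic growth (QG)}, as in Lemma~\ref{lem: A> mu}) rather than by PL directly; the contraction rate of your remainder recursion is $1-3/t+O(1/t^2)$ rather than $1-\tfrac{3}{2t}$ (either exponent exceeds $1$, so the convolution with $o(1/t^2)$ is still $o(1/T)$); and your variance identity for $\bv^T$ should account for the randomness of the initial condition $\bv^{t_0}$, whose contribution is $O(1/T^4)$ in any case.
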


We are thus able to relax the conditions from the earlier literature, which required everywhere or almost everywhere  higher derivatives with uniform bounds on third derivatives to merely twice differentiability at a single point. As a bonus, we also replace strong convexity with the PL condition. 

This theorem is proved in Appendix~\ref{sec: apx osa}. The main difference between Theorem~\ref{thm: osa main} and Corollary \ref{cor: R=N} is that Theorem \ref{thm: osa main} shows a linear speed-up with only one communication round but with slightly more restrictive assumptions such as sub-Gaussian noise model and twice-differentiable objective function at the optimal point. On the other hand, our results for OSA only require the PL-condition instead of strong convexity.

\section{Numerical experiments}\label{sec: Numerical }
To verify our findings and compare different communication strategies in Local SGD, we performed the following numerical experiments, using an Nvidia GTX-1060 GPU and Intel Core i7-7700k processor.

\subsection{Quadratic function with strong-growth condition}
As discussed in \cite{zhang2016parallel,dieuleveut2019communication}, under uniformly bounded variance, one-shot averaging performs asymptotically as well as mini-batch SGD, at least for quadratic functions. Therefore, to fully capture the importance of the choice of communication times $\I$, we design a \textit{hard} problem, where noise variance is uniform with strong-growth condition, defined in Assumption \ref{asm: noise strong growth}.  Let us define,
\begin{align}\label{eq: f-zeta}
F(\bx) = \E_\zeta f(\bx, \zeta), \qquad f(\bx, \zeta) := \sum_{i=1}^d \frac{i}{2}x_i^2 (1 + z_{1,i}) + \bx^\top \bz_2,
\end{align}
where $\zeta = (\bz_1, \bz_2)$ and $\bz_1, \bz_2 \in \R^d, z_{1,i}\sim \mathcal{N}(0,c_1)$ and $z_{2,i} \sim \mathcal{N}(0,c_2)$, $\forall i \in[d]$, are random variables with normal distributions. We assume at each iteration $t$, each worker $i$ samples a $\zeta_i^t$ and uses $\nabla f(\bx, \zeta_i^t)$ as a stochastic estimate of $\nabla F(\bx)$.  It is easy to verify that $F(\bx)$ is $1$-strongly convex and $d$-smooth, $F^* = 0$ and $\E_\zeta[\Vert \nabla f(\bx, \zeta) - \nabla F(\bx) \Vert^2] = c \Vert \nabla F(\bx) \Vert^2 + \sigma^2$, where $c = c_1$ and $\sigma^2 = dc_2$.

We use Local SGD to minimize $F(\bx)$ using different communication strategies, namely, synchronized SGD where $H=1$,  $H \approx \sqrt{TN}$ \cite{stich2018local}, $H \approx (TN)^{1/3}$ \cite{haddadpour2019local}, $R=N$ with constant $H \approx T/N$ \cite{stich2019error,khaled2019tighter} and finally the communication strategy proposed in this work with $R=N$ and linearly growing $H_i$ local steps.
We used $N=20$ workers, $T=1000$ iterations, $c_1 = 1.0$ and $c_2= 10^{-10}$ with $d=3$ and step-size sequence $\eta_t = 3/(\mu(t+1))$. To estimate the expected value of errors, we repeated the optimization using each strategy $100$ times and reported the average and $1$-standard-deviation error bar in Figure~\ref{fig: quadratic}.

\begin{figure}
	\centering
	\begin{subfigure}{0.5\textwidth}
		\includegraphics[height=5cm]{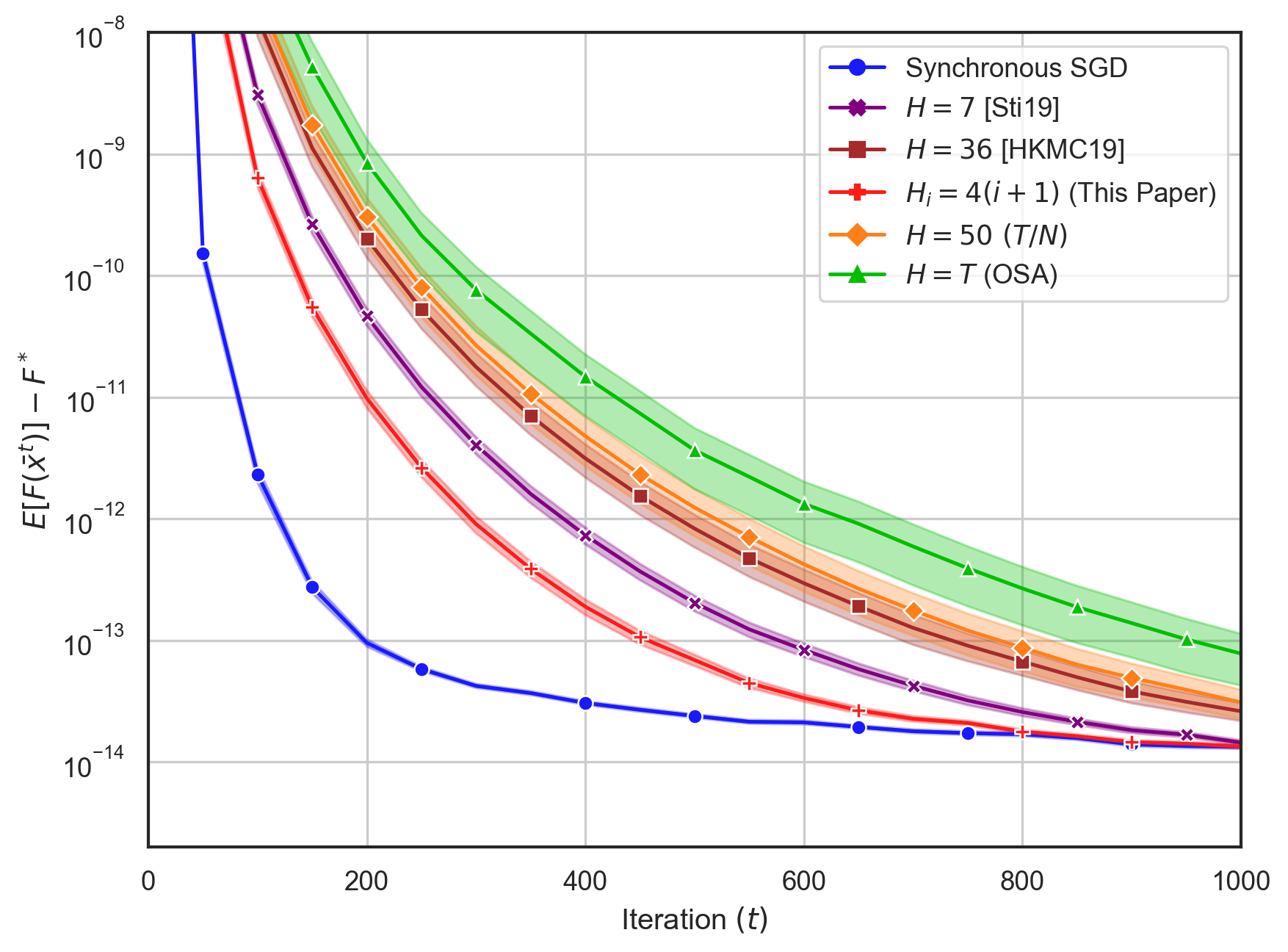}
		\caption{Error over iteration.}
	\end{subfigure}
	\begin{subfigure}{0.49\textwidth}
		\includegraphics[height=5cm]{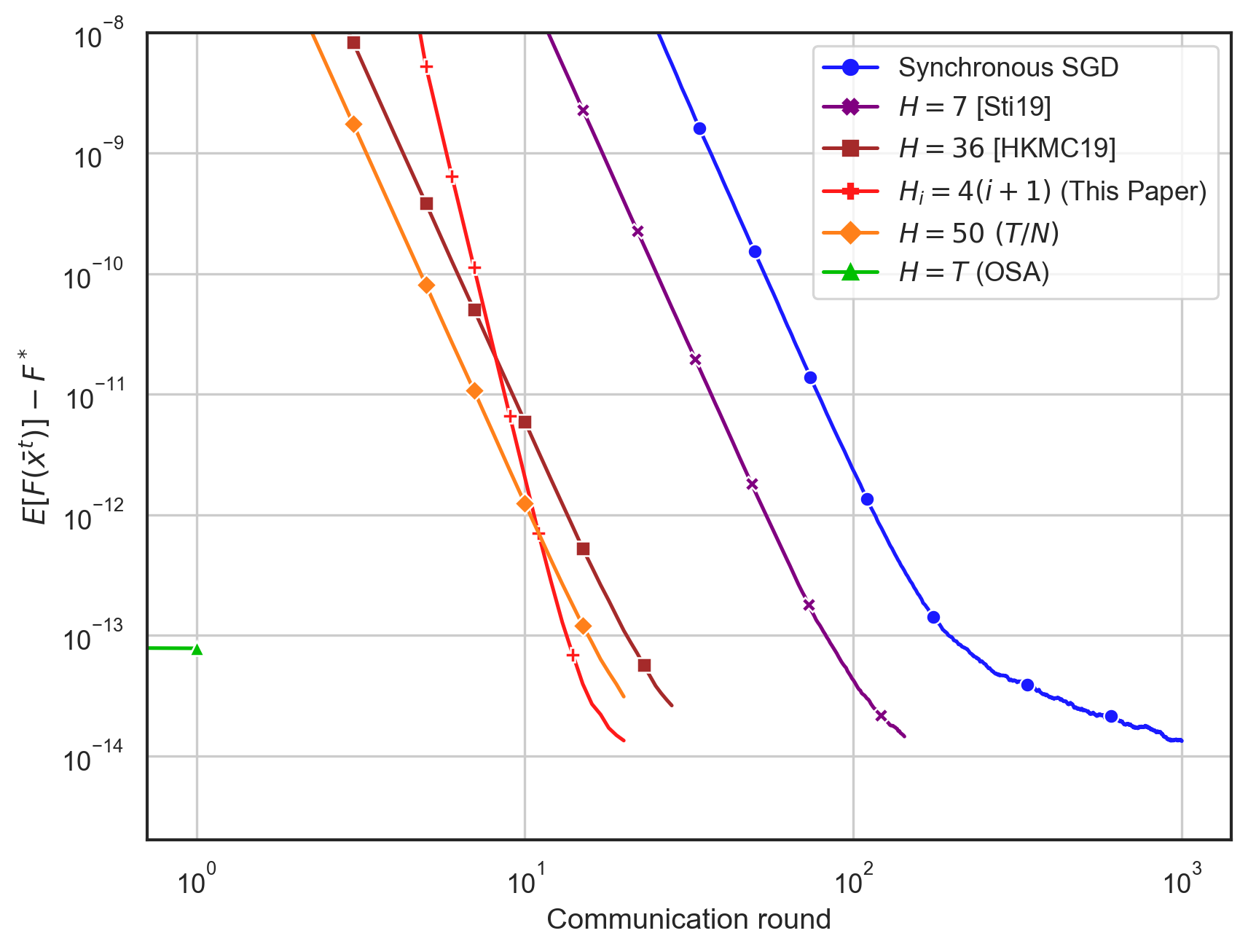}
		\caption{Error over communication round.}
	\end{subfigure}
	\caption{Minimizing \eqref{eq: f-zeta} using Local SGD with different communication strategies. Figures (a) and (b) show the error over iteration and communication rounds, respectively.}
	\label{fig: quadratic}
\end{figure}

We make the following observations from Figure~\ref{fig: quadratic}:
\begin{itemize}
	\item
	Figure~\ref{fig: quadratic}(a) shows that a communication strategy with increasing local steps (proposed in this work), outperforms all the other methods, both in transient and final error performance, specifically the one with the same number of communication rounds evenly spread throughout the whole optimization.
	This confirms the advantage of more frequent communication at the beginning of the optimization, especially when the ratio of $c$ to $\sigma^2$ in the noise with growth condition is large (see the definition in Assumption \ref{asm: noise strong growth}).
	\item
	Figure\ref{fig: quadratic}(b) shows that our communication method uses fewer communication rounds, $20$ versus $28$ \citep{haddadpour2019local}, $143$ \citep{stich2018local} and $1000$ rounds for synchronized SGD.
	\item
	OSA appears to perform relatively well despite using only one communication round, 
	though not quite as well as other methods. This shows that the choice of communication is important in this experiment. In other words, it is not true that the success of our communication strategy is  merely a byproduct of the experiment design, where any communication strategy, as long as it communicates at least once, will succeed.
\end{itemize}


\subsection{Speed-up curves}
In this experiment, we minimize a one-dimensional function defined as,
\begin{align}\label{eq: osa fail}
	F(x) = \begin{cases}
	\frac{1}{2}x^2, \qquad &x\leq 0,\\
	x^2, \qquad &x>0,
	\end{cases}
\end{align}
using Local SGD with gradients corrupted by a normal noise $\mathcal{N}(0,\sigma^2)$. We chose this specific cost function since it is not twice continuously differentiable at the minimizer $x^*=0$ and does not satisfy Assumption \ref{asm: 2-time diff} required by Theorem \ref{thm: osa main} for OSA to achieve linear speed-up. The results of this experiment will help us understand whether twice differentiability is a necessary assumption for OSA to obtain a linear speed-up.

The speed-up curve is derived by dividing the \emph{expected} error of a single worker SGD by the \emph{expected} error of each method at the final iterate $T$, over different number of workers $N$. Thus in the case where the error decreases linearly in the number of workers, we should expect to see a straight line on the graph. 

We plot the speed-up curve for $N$ workers using different communication strategies: synchronized SGD, $R=N$ communication rounds with linearly increasing number of local steps $H_i$, $R=N$ with constant number of local steps $H\approx T/R$, as well as OSA with only $R=1$ communication at the end. 
We use the step-size sequence $\eta_t = \min\{1/L, 2/(\mu (t+1))\}$ with $\mu=1, L=2$, and $\sigma=8$, $T=1000$.

\begin{figure}
	\centering
	\begin{subfigure}{0.45\textwidth}
		\includegraphics[height=5cm]{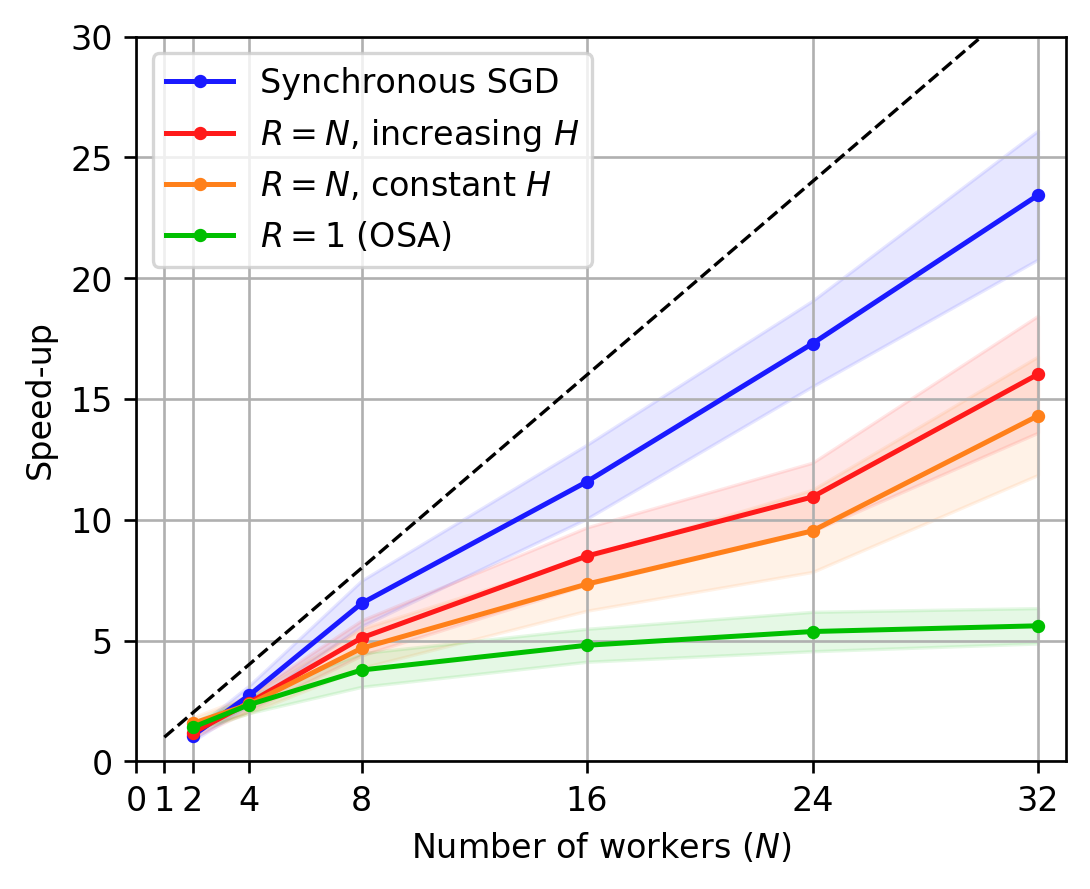}
		\caption{}
	\end{subfigure}
	\begin{subfigure}{0.53\textwidth}
		\includegraphics[height=5cm]{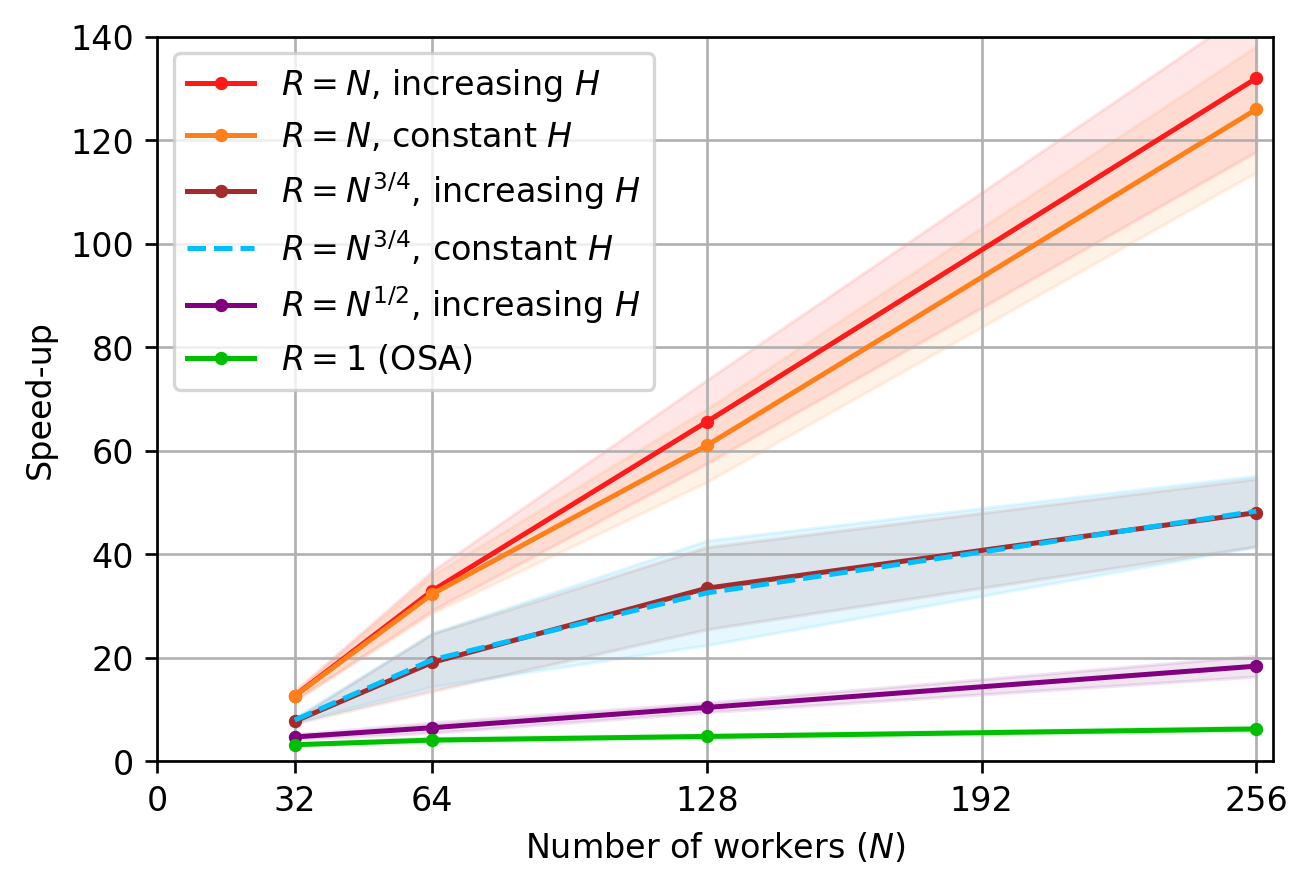}
		\caption{}
	\end{subfigure}
	\caption{Speed-up curves for different communication strategies, over different ranges of $N$ and $T$. Figure (a) establishes the linear speed-up of local SGD with $R=N$ communication rounds as well as failure of OSA to achieve speed-up even with small number of workers $N\leq 32$ over $T=1000$ iterations. Figure (b) additionally plots speed-up curves for $R \approx N^{3/4}$ and $R \approx N^{1/2}$ for larger values of $32\leq N \leq 256$ and $T=8000$.}
	\label{fig: speed-up}
\end{figure}

Our results in Figure~ \ref{fig: speed-up}(a) show that Local SGD with $R=N$ (increasing or constant $H$) achieves linear speed-up in the number of workers, albeit with a worse constant compared to synchronized SGD. However, OSA fails to scale as $N$ increases. This suggests that the condition of  twice differentiability (Assumption~\ref{asm: 2-time diff}) is necessary for Theorem \ref{thm: osa main}, as this function satisfies all the other assumptions of that theorem.

While our theoretical results provide only an upper bound on $R$ to achieve linear speed-up, this setting gives us a chance to find out if smaller number of communication rounds are enough. Therefore we repeat this experiment for larger number of workers $N$ and $T=8000$, using $R \approx N^{3/4}$ and $R \approx N^{1/2}$ communication rounds.  
Our results in Figure~\ref{fig: speed-up}(b) show that $R=N$ clearly achieves speed-up for larger values of $N$, as expected and $R=1$ and $R\approx N^{1/2}$ fail to speed-up. However, $R \approx N^{3/4}$ also struggles to \emph{linearly} speed-up in the number of workers, as the slope of the speed-up curve declines with $N$ increasing. It would be of interest to look into a more granular choice of communication rounds such as $R \approx N^{0.9}$ or even $R \approx N^{0.99}$  but this would require much larger values of $N$ and $T$ and thus more repeated simulations, which is beyond our computational resources, which were already exhausted by generating Figure 2(b).

It is worth mentioning that in both experiments of Figure \ref{fig: speed-up}(a) and \ref{fig: speed-up}(b), $R=N$ with increasing $H$ outperforms the one with constant $H$, even though the noise model used in this experiment is simply uniformly bounded, without strong-growth condition. This further endorses the use of more frequent averaging at the beginning of optimization, when paired with decreasing step-size sequence.

\subsection{Regularized logistic regression}
We also performed additional numerical experiments with regularized logistic regression using two large real datasets: (i) a national dataset (NSQIP) of surgeries performed in the U.S., seeking to predict short-term hospital re-admissions, which consists of  $\bf {722101}$ data points (surgeries) each characterized by $d=231$ features, (ii) the a9a dataset from 
LIBSVM \citep{CC01a} which includes ${\bf 32561}$ data points with $d = 124$ features. The results of these experiments are presented and discussed in Appendix~\ref{sec: apx numerical}.

\section{Conclusion}\label{sec: conclusion}
In this work, we studied the communication complexity of Local SGD and provided an analysis that shows that $R=\Omega(N)$ number of communication rounds, independent of the total number of iterations $T$, is sufficient to achieve linear speed-up. Moreover, we showed only a single round of averaging is needed provided that the objective is twice differentiable at the optimum point. This assumption appears to be necessary, as our simulations show that not only one-shot averaging but using $N^{1/2}$ or $N^{3/4}$ communications in local SGD fails to deliver linear speed-up on a simple example which is not twice differentiable at the optimum.

\newpage
\begin{ack}
The research was partially supported by the NSF under grants DMS-1664644, CNS-1645681, ECCS-1933027
and IIS-1914792, by the ONR under grants N00014-19-1-2571 and N00014-21-1-2844, by the ARO under grant W911NF-1-1-0072, 
by the NIH under grants R01
GM135930 and UL54 TR004130, by the DOE under grants DE-AR-0001282 and NETL-EE0009696, and by the Boston University  Kilachand Fund
for Integrated Life Science and Engineering.

\end{ack}


\bibliographystyle{rusnat}
\bibliography{References}


\newpage
\section*{Checklist}


\begin{enumerate}

\item For all authors...
\begin{enumerate}
  \item Do the main claims made in the abstract and introduction accurately reflect the paper's contributions and scope?
    \answerYes{}
  \item Did you describe the limitations of your work?
    \answerYes{}
  \item Did you discuss any potential negative societal impacts of your work?
    \answerNA{}
  \item Have you read the ethics review guidelines and ensured that your paper conforms to them?
    \answerYes{}
\end{enumerate}

\item If you are including theoretical results...
\begin{enumerate}
  \item Did you state the full set of assumptions of all theoretical results?
    \answerYes{}
	\item Did you include complete proofs of all theoretical results?
    \answerYes{See Appendix}
\end{enumerate}

\item If you ran experiments...
\begin{enumerate}
  \item Did you include the code, data, and instructions needed to reproduce the main experimental results (either in the supplemental material or as a URL)?
    \answerYes{}
  \item Did you specify all the training details (e.g., data splits, hyperparameters, how they were chosen)?
    \answerYes{}
	\item Did you report error bars (e.g., with respect to the random seed after running experiments multiple times)?
    \answerYes{}
	\item Did you include the total amount of compute and the type of resources used (e.g., type of GPUs, internal cluster, or cloud provider)?
    \answerYes{}
\end{enumerate}

\item If you are using existing assets (e.g., code, data, models) or curating/releasing new assets...
\begin{enumerate}
  \item If your work uses existing assets, did you cite the creators?
    \answerYes{}
  \item Did you mention the license of the assets?
    \answerNA{}
  \item Did you include any new assets either in the supplemental material or as a URL?
    \answerYes{}
  \item Did you discuss whether and how consent was obtained from people whose data you're using/curating?
    \answerNA{}
  \item Did you discuss whether the data you are using/curating contains personally identifiable information or offensive content?
    \answerNA{}
\end{enumerate}

\item If you used crowdsourcing or conducted research with human subjects...
\begin{enumerate}
  \item Did you include the full text of instructions given to participants and screenshots, if applicable?
    \answerNA{}
  \item Did you describe any potential participant risks, with links to Institutional Review Board (IRB) approvals, if applicable?
    \answerNA{}
  \item Did you include the estimated hourly wage paid to participants and the total amount spent on participant compensation?
    \answerNA{}
\end{enumerate}

\end{enumerate}


\newpage
\appendix
\section{More numerical experiments}\label{sec: apx numerical}
In this section we present additional numerical experiments.
We consider binary classification and select $l_2$-regularized logistic regression with its corresponding loss function as the objective function $F$ to be minimized, i.e.,
\begin{align}\label{eq: logistic loss}
F(\bx) = \frac{1}{M} \sum_{j=1}^M \left( \ln(1+\exp(\bx^\top \mathbf A_j)) - 1_{(b_j = 1)} \bx^\top \mathbf A_j \right) 
+ \frac{\lambda}{2} \Vert \bx \Vert_2^2,
\end{align}
where $\lambda$ is the regularization parameter, $\mathbf A_j \in \R^d$ and $b_j \in \{ 0,1 \}$, $j=1,\ldots,M$ are features (data points) and their corresponding class labels, respectively.

\subsection{Fixed number of workers}
Here we use two large datasets. 
One, a real dataset from the American College of Surgeons National Surgical Quality Improvement Program (NSQIP) to predict whether a specific patient will be re-admitted within 30 days from discharge after general surgery. This dataset consists of $M = \num{722101}$ data points for training with $d=231$ features including (i) baseline demographic and healthcare status characteristics, (ii) procedure information and (iii) pre-operative, intra-operative, and post-operative variables.
Second, the a9a dataset from LIBSVM \citep{CC01a}. This dataset consists of $M=\num{32561}$ data points for training with $d = 124$ features.

We perform Local SGD with $N=10$ workers, $\lambda = 0.05$, step-size sequence $\eta_t = 3/(\mu(t+1))$ ($\beta=1$), $T=1000$ iterations and batch size of $b=1$  with different communication strategies:
(i) synchronized SGD with $H=1$,
(ii) a strategy with the time varying communication intervals with $H_i = a(i+1), a \approx 18$ and $R = 10$ communication rounds proposed in this paper,
(iii) a strategy with the same number of communications however with a fixed $H=T/N = 100$, and finally, (iv) one-shot averaging with $H=T$.
Each simulation has been repeated $10$ times and the average of their performance is reported in Figure \ref{fig: logistic}.

It can be seen from Figure~\ref{fig: logistic} that all of the communication methods, including OSA, have similar terminal error as synchronized SGD. This further validated our results, especially Theorem \ref{thm: osa main}, since the logistic loss is both twice differentiable and satisfies the PL condition, due to strong convexity of the $l_2$-regularization. Moreover, we do not notice any significant difference between the performance of the varying and constant local steps, mainly because even a method with only one communication round (OSA) performs just as well.

\begin{figure}
	\centering
	\begin{subfigure}{0.49\textwidth}
		\includegraphics[width=6.5cm]{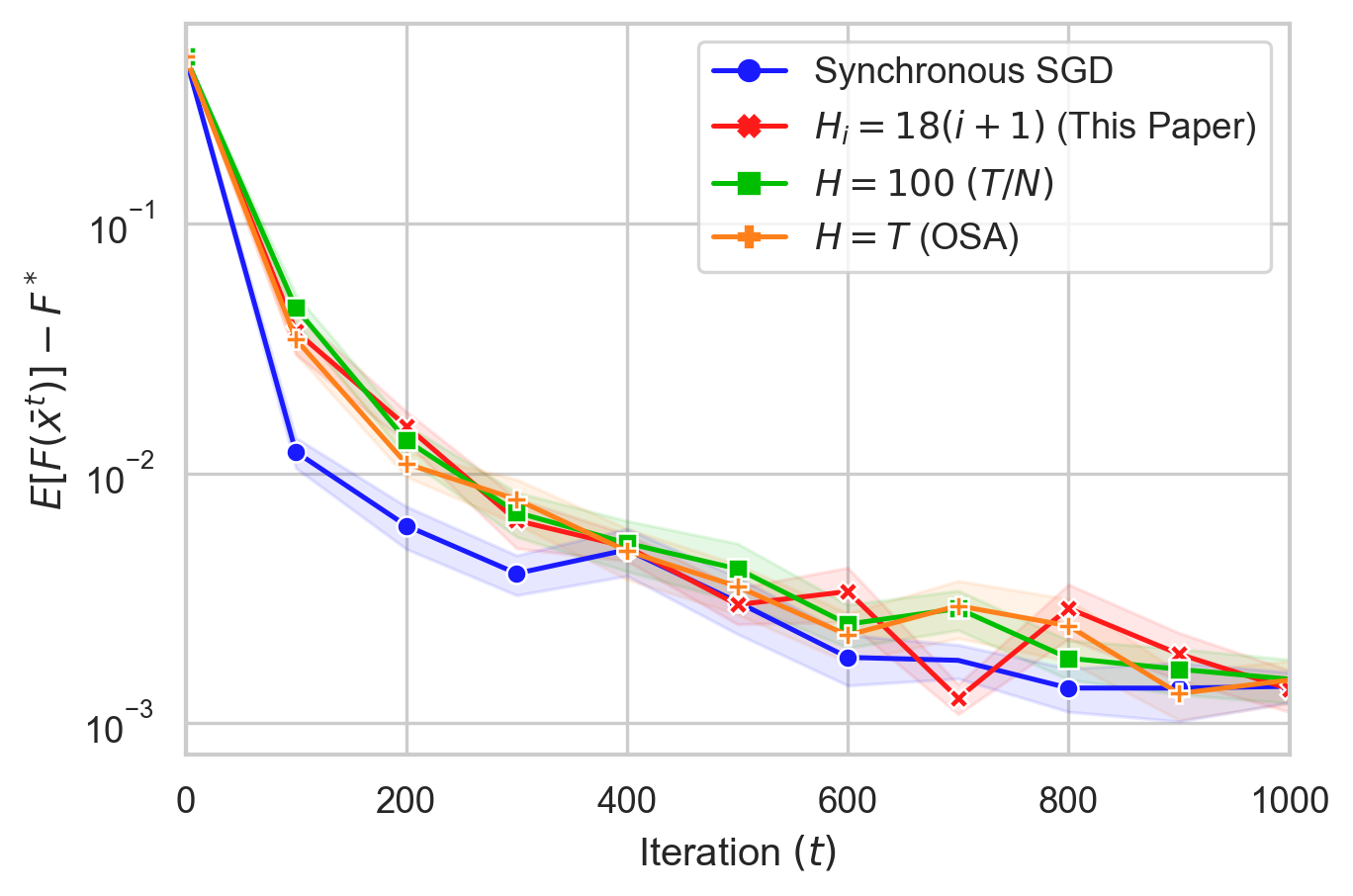}
		\caption{NSQIP data set.}
	\end{subfigure}
	\begin{subfigure}{0.49\textwidth}
		\includegraphics[width=6.5cm]{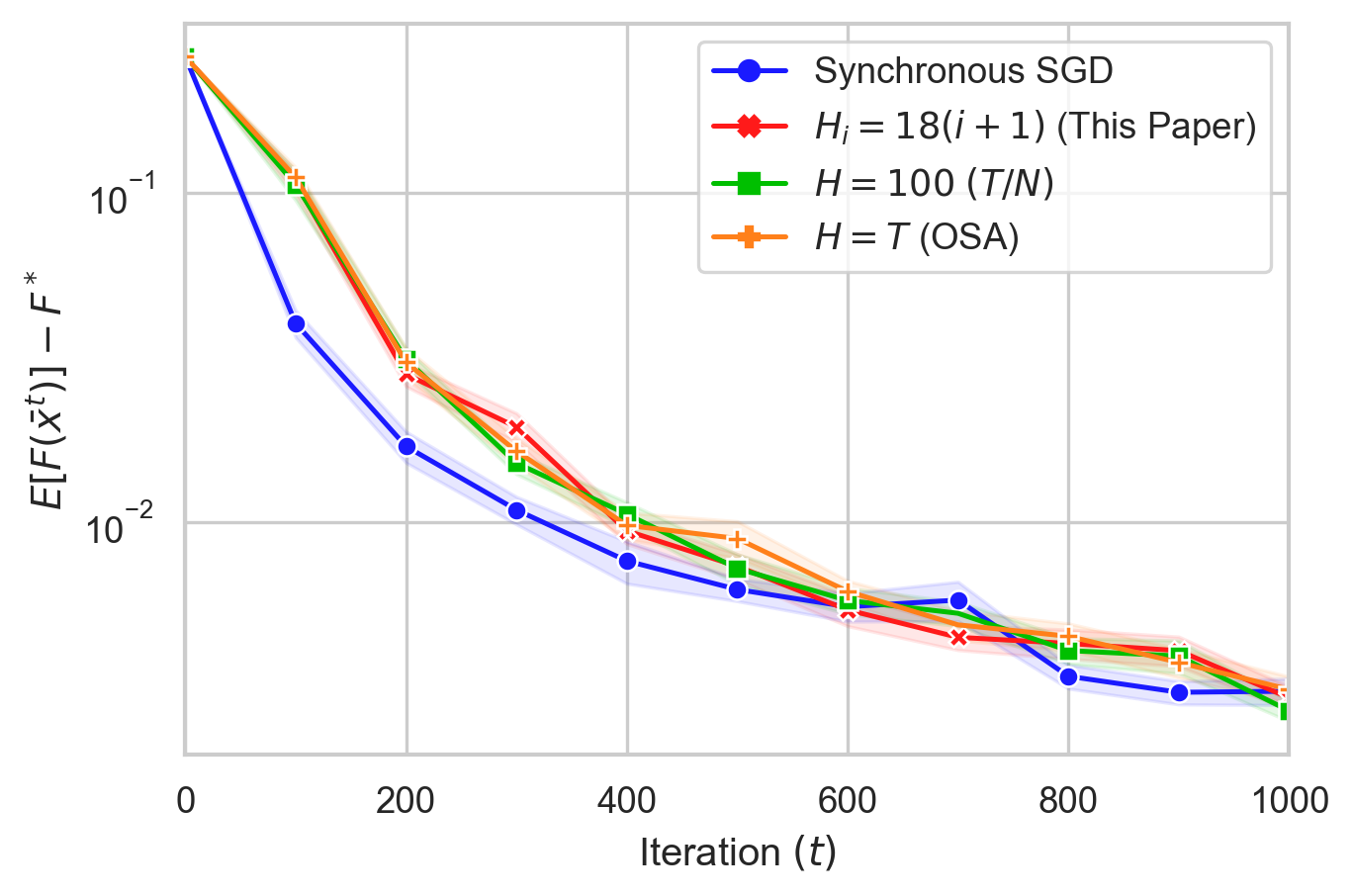}
		\caption{a9a data set.}
	\end{subfigure}
	\caption{Minimizing \eqref{eq: logistic loss} using Local SGD with different communication strategies. Figures (a) and (b) show the error over iteration for NSQIP and a9a datasets, respectively. The shaded areas show the $1$-standard deviation error bar.}
	\label{fig: logistic}
\end{figure}

\newpage
\subsection{Comparison with FedAC}

Here, we perform an extensive comparison between different methods using different number of workers $N$ and communication rounds $R$. We adopt a setting similar to that of Figure 4 in \citet{FedAC}. More specifically, we compare our communication strategy with other baselines and FedAC, using logistic regression \eqref{eq: logistic loss} on the a9a dataset with $\lambda=0.01$ and $T=8192$.

The results in Figures \ref{fig:FedAC over N} and \ref{fig:FedAC over H} are obtained by tuning the fixed learning rate $\eta$ over the set $\{1e^{-3}, 2e^{-3}, 5e^{-3}, 1e^{-2}, \ldots, 2, 5, 10 \}$ for all the methods except for Local SGD with growing intervals, where we used $\eta_t = 3/(\mu(t+1))$ without any tuning.

\begin{figure}[h]
    \centering
    \includegraphics[width=\textwidth]{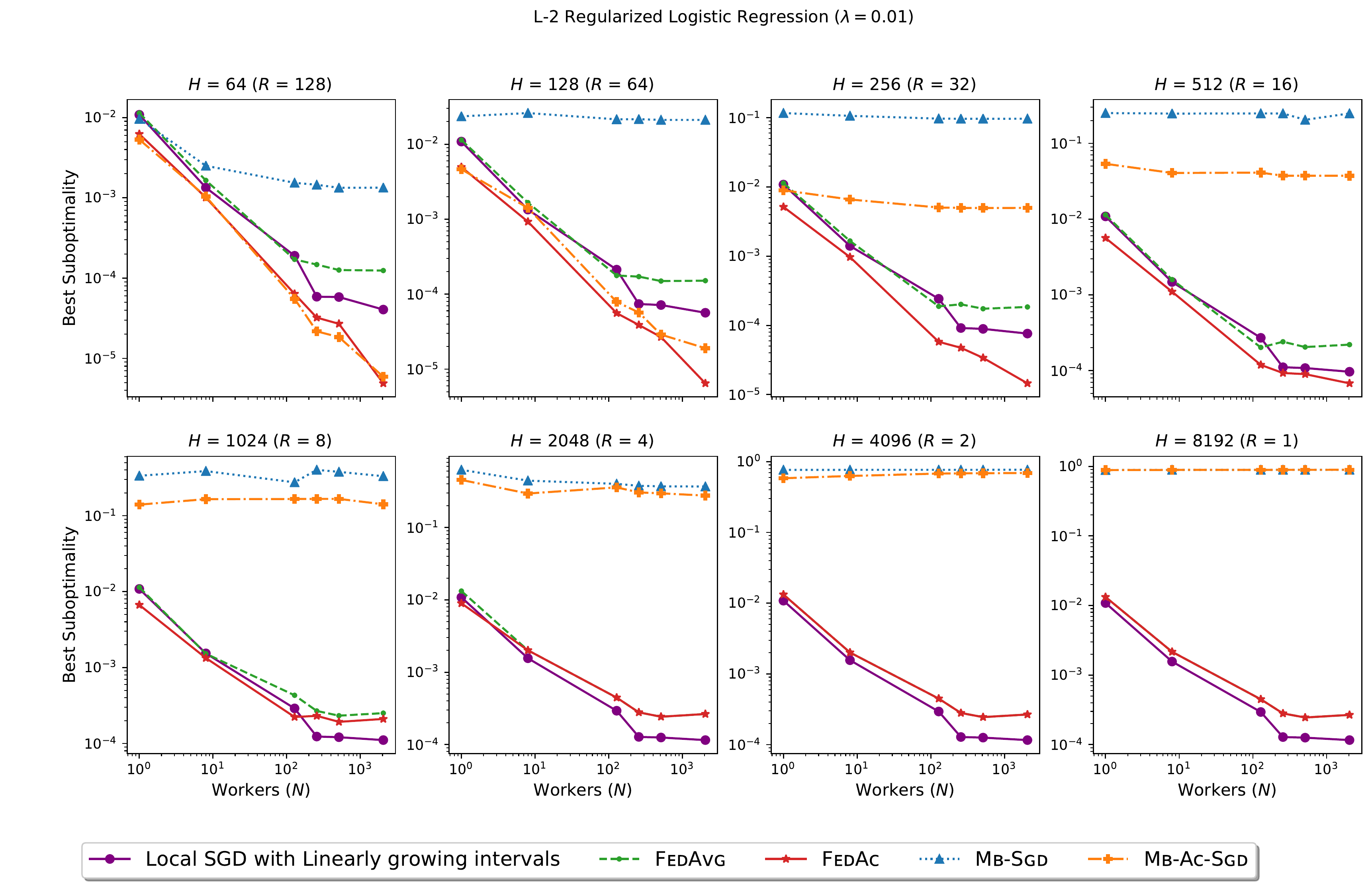}
    \caption{Comparison of Local SGD with (linearly) growing communication intervals introduced in this paper with other baseline methods on the observed linear speed-up w.r.t. $N$ workers ($\lambda=0.01$).}
    \label{fig:FedAC over N}
\end{figure}

We observe from Figure \ref{fig:FedAC over N} that when the number of communications $R$ is large ($R\geq 16$), FedAC has better performance across different values of $N$.
However, as the communication becomes sparse, Local SGD with growing communication intervals outperforms all the other methods, specifically as the number of workers increases. 
We also notice that both Mini-Batch SGD and its accelerated version have a relatively poor performance as $N$ or $H$ increase.
Similar observations can be made from Figure \ref{fig:FedAC over H}.

\begin{figure}
    \centering
    \includegraphics[width=0.9\textwidth]{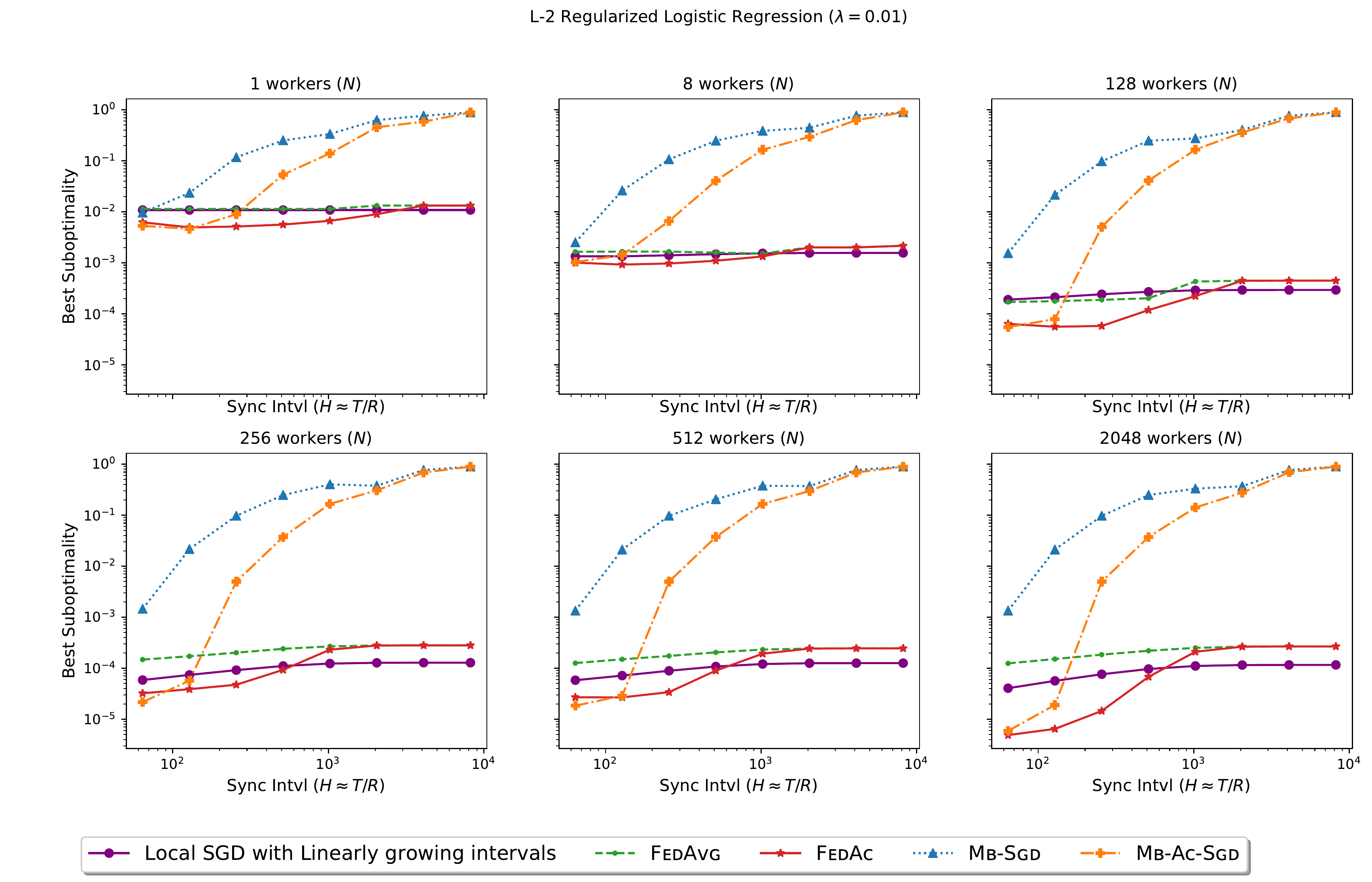}
    \caption{Comparison of Local SGD with (linearly) growing communication intervals introduced in this paper with other baseline methods on the dependency on number of communications ($\lambda=0.01$).}
    \label{fig:FedAC over H}
\end{figure}

\newpage
We notice that increasing strong convexity to $\lambda=1.0$, results in our communication strategy to uniformly outperform all the other methods, across all values of $N$ and $R$ (see Figure \ref{fig:FedAC lambda 1}).

\begin{figure}[h]
    \centering
    \includegraphics[width=0.9\textwidth]{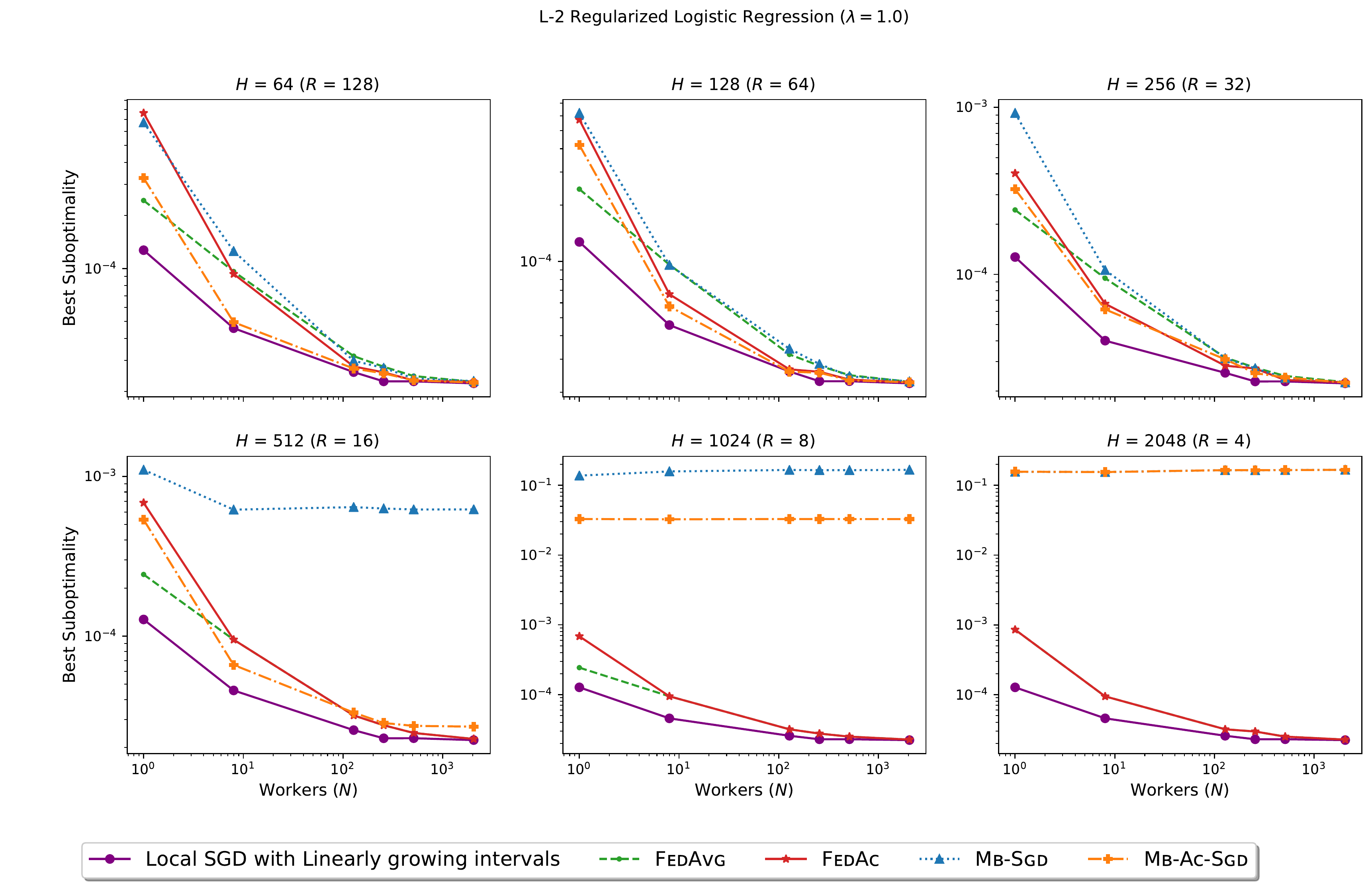}
    \caption{Comparison of Local SGD with (linearly) growing communication intervals introduced in this paper with other baseline methods on the observed linear speed-up w.r.t. $N$ workers ($\lambda=1.0$).}
    \label{fig:FedAC lambda 1}
\end{figure}

\newpage
\section{Local SGD}\label{sec: apx local sgd}
Here we present a few results which will be used later to prove Theorem~\ref{thm: logT} as well as to better understand the choice of varying number of local steps.
In the following theorem, we show an upper bound for the sub-optimality error, in the sense of function value, for any choice of communication times $\I$.
Theorem \ref{thm: logT} will be obtained by specializing the following bound. 

First, let us introduce some notation. Let $0=\tau_0<\tau_1 < \ldots < \tau_R = T$ be the communication times and denote the most recent communication time by $\tau(t) := \max\{t' \in \mathcal{I}| t' \leq t \}$. Define $H_i := \tau_{i+1} - \tau_i$, as the length of the $(i+1)$-st inter-communication interval, for $i=0,\ldots,R-1$. 

\begin{theorem}\label{thm: general}
	Suppose Assumptions~\ref{asm: smoothness}, \ref{asm: strong convexity} and \ref{asm: noise strong growth} hold. Choose $\beta \geq 9\kappa $ and communication times $\I = \{\tau_i|i=1,\ldots,R\}$ such that it holds for $i=0,\ldots,R-1,$
	\begin{align}\label{eq: beta condition}
	12\kappa^2c \ln(1 + \frac{H_i - 1}{\tau_i + \beta}) + 3\kappa(1+\frac{c}{N}) - (\tau_i + \beta)\leq 0.
	\end{align}
	Set step-sizes $\eta_t = 3/ (\mu (t+\beta))$, $t=0,1,\ldots,T-1$. Then, using Algorithm \ref{alg: Local SGD}, we have 
	\begin{align}\label{eq: opt E3}
	\E[f(\bbx^T)] - f^* \leq \frac{\beta^2 (f(\bbx^0) - f^*)}{T^2}  +  \frac{9L \sigma^2}{2 \mu^2 N T} 
	+ \frac{18L^2 \sigma^2}{ \mu^3 T^2} \sum_{t=0}^{T-1} \frac{t-\tau(t)}{t+\beta},
	\end{align}
\end{theorem}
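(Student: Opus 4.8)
The plan is to collapse the whole scheme to a perturbed single-sequence SGD recursion for the averaged iterate $\bbx^t$ and then telescope a weighted version of that recursion; the communication schedule enters only through a bound on the consensus error $V^t:=\frac{1}{N}\sum_i\|\bx_i^t-\bbx^t\|^2$. \textbf{Step 1 (reduction to perturbed SGD).} The starting observation is that, whether or not $t+1\in\I$, the average obeys $\bbx^{t+1}=\bbx^t-\eta_t\bbg^t$ with $\bbg^t:=\frac{1}{N}\sum_{i=1}^N\hbg_i^t$, since averaging is linear and the two branches of \eqref{eq: Local SGD process} coincide after averaging. Writing $\bbg^t=\bg^t+\bbw^t$ with $\bg^t:=\frac{1}{N}\sum_i\nabla f(\bx_i^t)$ and $\bbw^t:=\frac{1}{N}\sum_i\bw_i^t$, conditional independence and Assumption~\ref{asm: noise strong growth} give that $\bbw^t$ is conditionally zero-mean with $\E[\|\bbw^t\|^2\mid\F_t]\le\frac{c}{N^2}\sum_i\|\nabla f(\bx_i^t)\|^2+\frac{\sigma^2}{N}$, where $\F_t$ denotes the history through iteration $t$.

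\textbf{Step 2 (one-step descent).} Apply the $L$-smoothness descent lemma to $\bbx^{t+1}=\bbx^t-\eta_t\bbg^t$, take conditional expectation, estimate $\langle\nabla f(\bbx^t),\bg^t\rangle$ by Young's inequality using $\|\bg^t-\nabla f(\bbx^t)\|\le\frac{L}{N}\sum_i\|\bx_i^t-\bbx^t\|$ (smoothness), bound $\E[\|\bbg^t\|^2\mid\F_t]$ using Step~1 together with $\|\nabla f(\bx_i^t)\|^2\le2\|\nabla f(\bbx^t)\|^2+2L^2\|\bx_i^t-\bbx^t\|^2$, and convert $-\|\nabla f(\bbx^t)\|^2$ into $-2\mu(f(\bbx^t)-f^*)$ by strong convexity. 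Because $\beta\ge9\kappa$ keeps $\eta_t\le1/(3L)$, the positive $\|\nabla f(\bbx^t)\|^2$-feedback produced by $\E[\|\bbg^t\|^2\mid\F_t]$ is absorbed by the descent term, leaving a recursion of the form
\[
\E\big[f(\bbx^{t+1})-f^*\mid\F_t\big]\ \le\ (1-\mu\eta_t)\big(f(\bbx^t)-f^*\big)\ +\ C_1L^2\eta_t\,V^t\ +\ \frac{L\eta_t^2\sigma^2}{2N},
\]
for an absolute constant $C_1$.

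\textbf{Step 3 (consensus error within an interval).} Since all workers coincide at $\tau(t)$, unrolling $\bx_i^{s+1}-\bbx^{s+1}=(\bx_i^s-\bbx^s)-\eta_s(\hbg_i^s-\bbg^s)$ gives $\bx_i^t-\bbx^t=-\sum_{s=\tau(t)}^{t-1}\eta_s(\hbg_i^s-\bbg^s)$. The conditionally zero-mean part of $\hbg_i^s-\bbg^s$ contributes to $\E\|\bx_i^t-\bbx^t\|^2$ only a sum $\lesssim\sum_{s=\tau(t)}^{t-1}\eta_s^2\big(c\,\E\|\nabla f(\bx_i^s)\|^2+\sigma^2\big)$ (the terms being uncorrelated across $s$, by Assumption~\ref{asm: noise strong growth}), while its deterministic gradient-deviation part costs a Cauchy--Schwarz factor $t-\tau(t)$ but is controlled by $L^2\,\E[V^s]$ through smoothness and is therefore of lower order. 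Substituting $\|\nabla f(\bx_i^s)\|^2\le2L(f(\bx_i^s)-f^*)$ and using $\sum_{s=\tau_i}^{\tau_{i+1}-1}\eta_s\le\frac{3}{\mu}\ln\!\big(1+\frac{H_i-1}{\tau_i+\beta}\big)$, the hypothesis \eqref{eq: beta condition} is precisely the statement that, over the $i$-th interval, the $c$-dependent strong-growth part of this bound is dominated by the contraction already present; this lets the $\E[V^s]$ terms be folded back, leaving $\E[V^t]\lesssim\sigma^2\sum_{s=\tau(t)}^{t-1}\eta_s^2$ up to lower-order terms.

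\textbf{Step 4 (weighted telescoping) and the main obstacle.} Feeding the Step~3 bound into the recursion of Step~2 gives a closed recursion $e_{t+1}\le(1-\mu\eta_t)e_t+\frac{L\eta_t^2\sigma^2}{2N}+C_2L^2\sigma^2\eta_t\sum_{s=\tau(t)}^{t-1}\eta_s^2$ for $e_t:=\E[f(\bbx^t)-f^*]$. Multiplying by the weights $w_t:=(t+\beta)^2$, the choice $\eta_t=3/(\mu(t+\beta))$ is made so that $\mu\eta_t\le1/3$ (as $\beta\ge9\kappa\ge9$) and $w_{t+1}(1-\mu\eta_t)\le w_t$; summing over $t=0,\dots,T-1$ then telescopes the left side, and dropping the surplus negative terms yields $w_Te_T\le w_0e_0+\sum_{t=0}^{T-1}w_{t+1}r_t$ with $r_t$ the remainder. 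Dividing by $w_T\ge T^2$ produces the three pieces of \eqref{eq: opt E3}: $w_0e_0/w_T\le\beta^2(f(\bbx^0)-f^*)/T^2$; a variance piece $\frac{1}{w_T}\sum_tw_{t+1}\frac{L\eta_t^2\sigma^2}{2N}\lesssim\frac{9L\sigma^2}{2\mu^2NT}$, the constant $9/2$ being $3^2\cdot\tfrac{1}{2}$ from $\mu\eta_t(t+\beta)=3$ and the $\tfrac{1}{2}$ of the descent lemma; and, after substituting $\eta_t=3/(\mu(t+\beta))$ and rearranging the double sum $\sum_tw_{t+1}\eta_t\sum_{s=\tau(t)}^{t-1}\eta_s^2$, a consensus piece $\lesssim\frac{L^2\sigma^2}{\mu^3T^2}\sum_{t=0}^{T-1}\frac{t-\tau(t)}{t+\beta}$, which with the constants tracked is $\frac{18L^2\sigma^2}{\mu^3T^2}\sum_{t=0}^{T-1}\frac{t-\tau(t)}{t+\beta}$. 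The genuinely delicate point is Step~3: because the noise carries the strong-growth term $c\|\nabla f\|^2$, the consensus error obeys a recursion whose multiplicative factor can exceed $1$, so $\E[V^t]$ could in principle grow geometrically over a long inter-communication interval and overwhelm the $(1-\mu\eta_t)$ contraction; condition \eqref{eq: beta condition}, whose term $12\kappa^2c\ln(1+(H_i-1)/(\tau_i+\beta))$ equals $\kappa^2c$ times $\sum_{s}\eta_s$ over the $i$-th interval up to constants, is engineered so that this growth stays bounded by an absolute constant. Reconciling the constants across the descent inequality, the consensus bound and this condition, together with the double-sum bookkeeping needed to land on $\sum_t\frac{t-\tau(t)}{t+\beta}$ with constant $18$, is where the real work lies.
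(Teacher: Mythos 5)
Your architecture matches the paper's (average the update to get a perturbed SGD recursion, descent lemma, consensus-error recursion that resets at communication times, then telescope with weights $(t+\beta)^2$), and your accounting for the $\sigma^2$ pieces and the final constants is plausible. The gap is in Steps 3--4, exactly at the point you yourself flag as delicate: the treatment of the strong-growth term $c\Vert\nabla f(\bx_i^s)\Vert^2$ inside the consensus error. After substituting $\Vert\nabla f(\bx_i^s)\Vert^2\le 2L(f(\bx_i^s)-f^*)$ you assert that these contributions are ``lower-order'' and can be ``folded back,'' leaving $\E[V^t]\lesssim\sigma^2\sum_s\eta_s^2$ and hence a closed, $c$-free recursion for $e_t$. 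That step is circular: $\E[f(\bx_i^s)-f^*]$ is, up to consensus terms, the very optimality gap $e_s$ you are trying to bound, and the consensus recursion's contraction factor $(1-\eta_s\mu+\eta_s^2\mu L)$ cannot absorb it --- that factor only contracts $\Vert\bx_i^s-\bbx^s\Vert^2$, not the distance of $\bbx^s$ from the optimum. Without either an a priori bound $e_s=\O(1/s)$ (which would require a separate induction you do not set up) or an explicit cancellation, the accumulated $c\,\E[G^k]$ terms for $k\in[\tau(t),t)$ survive into the recursion for $e_{t+1}$ with positive coefficients.

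The paper closes this gap by \emph{not} spending the negative term $-\frac{\eta_t}{2N}\sum_i\Vert\nabla f(\bx_i^t)\Vert^2$ from the descent lemma on the time-$t$ variance feedback alone (as you do in Step 2); it carries that term, together with the $c\,\E[G^k]$ terms retained in the consensus bound, explicitly through the weighted summation over each inter-communication interval, and verifies that the net coefficient of each $\E[G^t]$,
\[
\sum_{k=t+1}^{\tau_{i+1}-1}\frac{18L^2c}{\mu^3(k+\beta)}+\frac{9L}{2\mu^2}\Bigl(1+\frac{c}{N}\Bigr)-\frac{3(t+\beta)}{2\mu},
\]
is non-positive --- which is precisely what condition \eqref{eq: beta condition} states (the logarithm being the integral bound on $\sum_k 1/(k+\beta)$). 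Each past gradient-norm term that resurfaces through the consensus error at later times in the interval is thus matched against the single negative multiple of $\E[G^t]$ generated by the descent step at time $t$. This matching across times, rather than a term-by-term ``lower-order'' dismissal, is the missing mechanism in your outline. (A smaller point: absorbing the $(1+c/N)\E[G^t]$ variance feedback in your Step 2 requires $\eta_t\le 1/(L(1+c/N))$, i.e.\ $\beta\ge 3\kappa(1+c/N)$, which follows from \eqref{eq: beta condition} but not from $\beta\ge 9\kappa$ alone when $c$ is large.)
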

The last term in Equation \eqref{eq: opt E3} is due the to disagreement between workers (consensus error), introduced by local computations without any communication. As the inter-communication intervals become larger, $t-\tau(t)$ becomes larger as well and increases the overall optimization error. This term explains the trade-off between communication efficiency and the optimization error.

Note that condition \eqref{eq: beta condition} is mild. For instance, it suffices to set $\beta \geq \max \{12\kappa^2 c \ln(1 + T/(9\kappa)) + 3\kappa(1+c/N), 9\kappa\}$. Moreover, the bound in \eqref{eq: opt E3} is for the last iterate $T$, and does not require keeping track of a weighted average of all the iterates.

Theorem \ref{thm: general} not only bounds the optimization error, but introduces a methodological approach to select the communication times to achieve smaller errors. For the scenarios when the user can afford to have a certain number of a communications, they can select $\tau_i$ to minimize the last term in \eqref{eq: opt E3}.

\paragraph{One-shot averaging.} Plugging $H=T$ in Theorem~\ref{thm: general}, we obtain a convergence rate of $\O(\kappa^2 \sigma^2/(\mu T))$ without any linear speed-up. Among previous works, only \cite{khaled2019tighter} show a similar result.

\subsection{Fixed-length intervals}
A simple way to select the communication times $\I$, is to split the whole training time $T$ to $R$ intervals of length at most $H$. Then we can use the following bound in Equation \eqref{eq: opt E3}, 
\begin{align*}
\sum_{t=0}^{T-1} \frac{t-\tau(t)}{t+\beta} \leq (H-1)\sum_{t=0}^{T-1} \frac{1}{t+\beta} \leq (H-1) \ln(1+\frac{T}{ \beta - 1}).
\end{align*}
We state this result formally in the following corollary.
\begin{corollary}
	Suppose assumptions of Theorem~\ref{thm: general} hold and in addition, workers communicate at least once every $H$ iterations. Then,
	\begin{align}\label{eq: fixed-int}
	\E[f(\bbx^T)] - f^* \leq \frac{\beta^2(f(\bbx^0) - F^*)}{T^2} +  \frac{9L \sigma^2}{2\mu^2 N T} 
	+ \frac{18 L^2 \sigma^2 (H-1)}{ \mu^3 T^2} \ln(1 + \frac{T}{\beta - 1}).
	\end{align}
\end{corollary}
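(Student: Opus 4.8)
The plan is to simply specialize Theorem~\ref{thm: general} to the case where the communication times $\I$ are chosen so that no inter-communication gap exceeds $H$. Theorem~\ref{thm: general} already delivers the bound
\[
\E[f(\bbx^T)] - f^* \leq \frac{\beta^2 (f(\bbx^0) - f^*)}{T^2}  +  \frac{9L \sigma^2}{2 \mu^2 N T} + \frac{18L^2 \sigma^2}{ \mu^3 T^2} \sum_{t=0}^{T-1} \frac{t-\tau(t)}{t+\beta},
\]
so the only work left is to control the last (consensus-error) sum, and to check that the hypothesis~\eqref{eq: beta condition} on $\beta$ is not an extra obstruction here. For the sum, I would use that by assumption $t - \tau(t) \leq H-1$ for every $t$ (the most recent communication is at most $H-1$ steps in the past), hence
\[
\sum_{t=0}^{T-1} \frac{t-\tau(t)}{t+\beta} \leq (H-1)\sum_{t=0}^{T-1} \frac{1}{t+\beta},
\]
which is the first inequality already displayed in the text preceding the corollary. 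Then I would bound the harmonic-type sum by an integral: $\sum_{t=0}^{T-1} \frac{1}{t+\beta} \leq \int_{-1}^{T-1}\frac{dt}{t+\beta} = \ln\!\big(\frac{T+\beta-1}{\beta-1}\big) = \ln\!\big(1 + \frac{T}{\beta-1}\big)$, using $\beta \geq 9\kappa > 1$ so the integrand is well-defined and decreasing on the relevant range. Substituting this into the Theorem~\ref{thm: general} bound gives exactly~\eqref{eq: fixed-int}.

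The one point that needs a sentence of care is condition~\eqref{eq: beta condition}: Theorem~\ref{thm: general} requires $12\kappa^2 c \ln\!\big(1 + \frac{H_i-1}{\tau_i+\beta}\big) + 3\kappa(1+c/N) - (\tau_i+\beta) \leq 0$ for each $i$. Since $\tau_i \geq 0$ and $H_i \leq H \leq T$, the logarithmic term is at most $\ln(1 + T/\beta) \leq \ln(1 + T/(9\kappa))$, so it suffices to take $\beta \geq \max\{12\kappa^2 c \ln(1+T/(9\kappa)) + 3\kappa(1+c/N),\, 9\kappa\}$, which is exactly the mild sufficient choice already noted in the remark after Theorem~\ref{thm: general}; one just assumes $\beta$ is chosen this way (or larger), so the hypotheses of Theorem~\ref{thm: general} are met and the corollary follows. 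There is no real obstacle here — the entire content is the elementary estimate $\sum_{t=0}^{T-1}(t+\beta)^{-1} \leq \ln(1+T/(\beta-1))$ together with the uniform gap bound $t-\tau(t)\leq H-1$; the corollary is purely a packaging of Theorem~\ref{thm: general}.
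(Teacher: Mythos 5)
Your proposal is correct and follows exactly the paper's own route: bound $t-\tau(t)\leq H-1$, estimate $\sum_{t=0}^{T-1}(t+\beta)^{-1}\leq \ln\bigl(1+\tfrac{T}{\beta-1}\bigr)$ by the integral comparison (valid since $\beta\geq 9\kappa>1$), and substitute into the bound of Theorem~\ref{thm: general}. Your additional remark verifying that condition~\eqref{eq: beta condition} is satisfiable via $\beta \geq \max\{12\kappa^2 c \ln(1+T/(9\kappa)) + 3\kappa(1+c/N),\, 9\kappa\}$ matches the remark the paper makes after Theorem~\ref{thm: general}.
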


\paragraph{Linear speed-up.}
Setting $H= \O(T/(N\ln(T)))$ we achieve linear speed-up in the number of workers, which is equivalent to a communication complexity of $R = \Omega(N \ln(T))$. To the best of the authors' knowledge, this is the tightest communication complexity that is shown to achieve linear speed-up. \cite{khaled2019tighter} and \cite{stich2019error} have shown a similar communication complexity.


\paragraph{Recovering synchronized SGD.} When $H=1$, the last term in \eqref{eq: fixed-int} disappears and we recover the convergence rate of parallel SGD, albeit, with a worse dependence on $\kappa$.

\subsection{Sketch of proof}\label{sec: sketch of proof}
Here we give an outline of the proofs for the Local SGD results presented in this paper. The proof of the following lemmas are provided in the next section. 

\paragraph{Perturbed iterates.}
A common approach in analyzing parallel algorithms such as Local SGD is to study the evolution of the sequence $\{\bbx^t\}_{t\geq0}$. We have, 
\begin{align}\label{eq: average x update}
\bbx^{t+1} = \bbx^t - \frac{\eta_t}{N}\sum_{i=1}^N \hbg_i^t 
= \bbx^t - \eta_t \tbg^t,
\end{align}
where $\tbg^t := (\sum_{i=1}^N \hbg_i^t)/N$ is the average of the stochastic gradient estimates of all workers. 

Let us define $\xi^t:= \E[f(\bbx^t)] - f^*$ to be the optimality error. The following lemma, which is similar to a part of the proof found in \cite{haddadpour2019local}, bounds the optimality error at each iteration recursively.
\begin{lemma}\label{lem: error decay}
	Let Assumptions \ref{asm: smoothness}, \ref{asm: strong convexity} and \ref{asm: noise strong growth} hold. Then,
	\begin{align*}
	\xi^{t+1} \leq \xi^t(1 - \mu \eta_t) + \frac{L^2 \eta_t}{2N} \E \left[ \sum_{i=1}^N \Vert \bbx^t - \bx_i^t \Vert^2 \right] 
	+ \frac{\eta_t^2 L}{2} \E[\Vert \tbg^t \Vert_2^2 ] 
	- \frac{\eta_t}{2N} \E \left[ \sum_{i=1}^N \Vert \nabla f(\bx_i^t) \Vert^2\right].
	\end{align*}
\end{lemma}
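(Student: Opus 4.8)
The plan is to follow the standard "perturbed iterates" framework. Recall $\bbx^{t+1} = \bbx^t - \eta_t \tbg^t$ with $\tbg^t = \frac1N\sum_i \hbg_i^t$. By $L$-smoothness,
\begin{align*}
f(\bbx^{t+1}) \leq f(\bbx^t) + \langle \nabla f(\bbx^t), \bbx^{t+1} - \bbx^t\rangle + \frac{L}{2}\Vert \bbx^{t+1} - \bbx^t\Vert^2 = f(\bbx^t) - \eta_t \langle \nabla f(\bbx^t), \tbg^t\rangle + \frac{\eta_t^2 L}{2}\Vert \tbg^t\Vert^2 .
\end{align*}
First I would take the expectation conditioned on all the iterates $\{\bx_i^t\}$ at time $t$. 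Since the noise $\bw_i^t = \hbg_i^t - \nabla f(\bx_i^t)$ is zero-mean and independent across workers (Assumption~\ref{asm: noise strong growth}), $\E[\tbg^t \mid \cdot] = \frac1N\sum_i \nabla f(\bx_i^t)$. The cross term becomes $-\eta_t \langle \nabla f(\bbx^t), \frac1N\sum_i \nabla f(\bx_i^t)\rangle$.

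The key step is to handle the inner product $\langle \nabla f(\bbx^t), \frac1N\sum_i \nabla f(\bx_i^t)\rangle$. I would write $\nabla f(\bx_i^t) = \nabla f(\bbx^t) + (\nabla f(\bx_i^t) - \nabla f(\bbx^t))$, so that this equals $\Vert \nabla f(\bbx^t)\Vert^2 + \frac1N\sum_i \langle \nabla f(\bbx^t), \nabla f(\bx_i^t) - \nabla f(\bbx^t)\rangle$. The first piece gives the descent; combined with strong convexity via $\Vert \nabla f(\bbx^t)\Vert^2 \geq 2\mu(f(\bbx^t)-f^*)$ it produces the contraction factor $(1-\mu\eta_t)$ on $\xi^t$ (keeping a bit of $\Vert\nabla f(\bbx^t)\Vert^2$ in reserve). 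For the cross term I would apply Young's inequality $\langle \ba, \bb\rangle \leq \frac12\Vert\ba\Vert^2 + \frac12\Vert\bb\Vert^2$ in a form weighted so the $\Vert\nabla f(\bbx^t)\Vert^2$ it generates is absorbed, and bound $\Vert \nabla f(\bx_i^t) - \nabla f(\bbx^t)\Vert^2 \leq L^2 \Vert \bx_i^t - \bbx^t\Vert^2$ by $L$-Lipschitzness; this yields the consensus-error term $\frac{L^2\eta_t}{2N}\sum_i \Vert \bbx^t - \bx_i^t\Vert^2$. Then I would take full expectation and rearrange into the stated recursion, making sure the leftover negative multiple of $\frac1N\sum_i \E\Vert\nabla f(\bx_i^t)\Vert^2$ matches — this requires one more use of $\Vert\nabla f(\bbx^t)\Vert^2 \leq \frac1N\sum_i \Vert \nabla f(\bx_i^t)\Vert^2$ (convexity of $\Vert\cdot\Vert^2$) or a direct split, and careful bookkeeping of the $\frac12$ constants so that exactly $-\frac{\eta_t}{2N}$ remains.

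The main obstacle is the constant bookkeeping: the inequality as stated is tight in the sense that the coefficients ($\frac{L^2\eta_t}{2N}$ on consensus error, $-\frac{\eta_t}{2N}$ on $\sum_i\Vert\nabla f(\bx_i^t)\Vert^2$, $(1-\mu\eta_t)$ contraction) must come out exactly, so I need to be careful about which portion of $\Vert\nabla f(\bbx^t)\Vert^2$ is spent on strong convexity versus absorbing the Young's-inequality cross term, and to convert $\Vert\nabla f(\bbx^t)\Vert^2$ to the per-worker average without loss. A secondary point is justifying $\E[\Vert\tbg^t\Vert^2\mid\cdot]$ appears directly in the bound (it does, since that term is just carried through without expanding), so no work is needed there. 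Everything else is routine; the only genuine inequalities used are smoothness, strong convexity, $L$-Lipschitz gradients, Young's inequality, and convexity of the squared norm.
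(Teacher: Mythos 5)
Your setup (the $L$-smoothness descent step and conditioning so that $\E[\tbg^t \mid \F^t] = \frac1N\sum_i \nabla f(\bx_i^t)$) matches the paper, but your handling of the inner product has a genuine gap. You split $\nabla f(\bx_i^t) = \nabla f(\bbx^t) + (\nabla f(\bx_i^t)-\nabla f(\bbx^t))$ and apply Young's inequality to the cross term; with weight $\lambda$ this gives
\begin{align*}
\Bigl\langle \nabla f(\bbx^t), \tfrac1N\textstyle\sum_i \nabla f(\bx_i^t)\Bigr\rangle \;\geq\; \Bigl(1-\tfrac{\lambda}{2}\Bigr)\Vert \nabla f(\bbx^t)\Vert^2 - \frac{L^2}{2\lambda N}\sum_i \Vert \bx_i^t - \bbx^t\Vert^2 .
\end{align*}
To reproduce the consensus coefficient $\frac{L^2}{2N}$ you must take $\lambda=1$, which leaves exactly $\frac12\Vert\nabla f(\bbx^t)\Vert^2$; strong convexity then converts all of it into $\mu(f(\bbx^t)-f^*)$, and nothing remains to generate the piece $+\frac{1}{2N}\sum_i\Vert\nabla f(\bx_i^t)\Vert^2$ in the lower bound — equivalently, the $-\frac{\eta_t}{2N}\E[\sum_i\Vert\nabla f(\bx_i^t)\Vert^2]$ term in the lemma. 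Young's inequality is lossy in precisely the place where that term lives, and no choice of $\lambda$ fixes this. Your proposed patch, $\Vert\nabla f(\bbx^t)\Vert^2 \leq \frac1N\sum_i\Vert\nabla f(\bx_i^t)\Vert^2$, does not help: it is false in general (convexity of $\Vert\cdot\Vert^2$ would bound $\Vert\frac1N\sum_i\nabla f(\bx_i^t)\Vert^2$, not $\Vert\nabla f(\bbx^t)\Vert^2$ — the gradient of the average is not the average of the gradients), and in any case it points the wrong way, since you need a \emph{lower} bound that produces a positive multiple of $\sum_i\Vert\nabla f(\bx_i^t)\Vert^2$.

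The correct step, and the one the paper uses, is the polarization identity $\langle a,b\rangle = \frac12\Vert a\Vert^2+\frac12\Vert b\Vert^2-\frac12\Vert a-b\Vert^2$ applied termwise with $a=\nabla f(\bbx^t)$ and $b=\nabla f(\bx_i^t)$. Being an equality, it yields all three pieces simultaneously: $\frac12\Vert\nabla f(\bbx^t)\Vert^2$ (spent on strong convexity to get the $(1-\mu\eta_t)$ contraction), $\frac{1}{2N}\sum_i\Vert\nabla f(\bx_i^t)\Vert^2$ (kept, becoming the negative term in the lemma), and $-\frac{1}{2N}\sum_i\Vert\nabla f(\bbx^t)-\nabla f(\bx_i^t)\Vert^2$ (bounded via $L$-Lipschitz gradients to give the consensus term). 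Retaining that negative term is not cosmetic: it is what later cancels the $\E[G^t]$ contributions from the strong-growth noise in the proof of Theorem~\ref{thm: general}, so the lemma as stated cannot be weakened on this point.
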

Equipped with Lemma \ref{lem: error decay}, we can bound the consensus error ($\E[\sum_{i=1}^N \Vert \bbx^t - \bx_i^t \Vert^2]$) as well as the term $\E[\Vert \tbg^t \Vert^2]$ in the following lemmas. 

\paragraph{Consensus error.}
In the following lemmas, we utilize the structure of the problem to bound the consensus error recursively. Let us define $\bg_i^t = \nabla f(\bx_i^t)$ as the true gradient at worker $i$'s iterate at time $t$.
\begin{lemma}\label{lem: consensus 1}
	Let Assumptions \ref{asm: smoothness}, \ref{asm: strong convexity} and \ref{asm: noise strong growth} hold. Then,
	\begin{multline}\label{eq: consensus 2}
	\E\left[ \sum_{i=1}^N \Vert \bx_i^{t+1} - \bbx^{t+1} \Vert^2 \right] \leq  
	\E\left[ \sum_{i=1}^N \Vert \bx_i^{t} - \bbx^t \Vert^2 \right](1 - \eta_t \mu + \eta_t^2 \mu L) \\
	+ (N-1)\eta_t^2\sigma^2 + \left(1-\frac{1}{N} \right)\eta_t^2 c\E \left[\sum_{i=1}^N \Vert \bg_i^t \Vert^2 \right].
	\end{multline}
\end{lemma}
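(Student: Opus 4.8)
The plan is to prove \eqref{eq: consensus 2} by expanding the deviation $\bx_i^t-\bbx^t$ over one iteration, treating communication and non-communication rounds separately. If $t+1\in\I$, then $\bx_i^{t+1}=\bbx^{t+1}$ for every $i$, so the left-hand side of \eqref{eq: consensus 2} vanishes and the inequality is trivial, since every term on the right is nonnegative (the quadratic $1-\eta_t\mu+\eta_t^2\mu L$ has negative discriminant in $\eta_t$, hence is positive). So suppose $t+1\notin\I$. Writing $\hbg_i^t=\bg_i^t+\bw_i^t$, $\bbg^t:=\tfrac1N\sum_j\bg_j^t$ and $\bbw^t:=\tfrac1N\sum_j\bw_j^t$, relation \eqref{eq: average x update} gives
\[
\bx_i^{t+1}-\bbx^{t+1}=\big[(\bx_i^t-\bbx^t)-\eta_t(\bg_i^t-\bbg^t)\big]-\eta_t(\bw_i^t-\bbw^t).
\]
Next I would square this, sum over $i$, and take the expectation conditioned on $\bx_1^t,\dots,\bx_N^t$. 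Conditionally the $\bw_i^t$ are zero-mean and independent across $i$ (Assumption \ref{asm: noise strong growth}), so the bracketed term is conditionally deterministic and its cross term with $\bw_i^t-\bbw^t$ drops; taking also the total expectation,
\[
\E\Big[\sum_i\Vert\bx_i^{t+1}-\bbx^{t+1}\Vert^2\Big]=\E\Big[\sum_i\Vert(\bx_i^t-\bbx^t)-\eta_t(\bg_i^t-\bbg^t)\Vert^2\Big]+\eta_t^2\,\E\Big[\sum_i\Vert\bw_i^t-\bbw^t\Vert^2\Big].
\]

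For the noise term, write $\sum_i\Vert\bw_i^t-\bbw^t\Vert^2=\sum_i\Vert\bw_i^t\Vert^2-N\Vert\bbw^t\Vert^2$; conditional independence and zero mean give $\E[\sum_i\Vert\bw_i^t-\bbw^t\Vert^2\mid\{\bx_j^t\}]=(1-\tfrac1N)\sum_i\E[\Vert\bw_i^t\Vert^2\mid\bx_i^t]$, which Assumption \ref{asm: noise strong growth} bounds by $(1-\tfrac1N)\sum_i(c\Vert\bg_i^t\Vert^2+\sigma^2)$. Multiplying by $\eta_t^2$ and taking total expectation yields exactly $(N-1)\eta_t^2\sigma^2+(1-\tfrac1N)\eta_t^2 c\,\E[\sum_i\Vert\bg_i^t\Vert^2]$, the last two terms of \eqref{eq: consensus 2}.

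It remains to bound $\sum_i\Vert(\bx_i^t-\bbx^t)-\eta_t(\bg_i^t-\bbg^t)\Vert^2$ by $(1-\eta_t\mu+\eta_t^2\mu L)\sum_i\Vert\bx_i^t-\bbx^t\Vert^2$. First I would replace $\bbg^t$ by $\nabla f(\bbx^t)$: with $\bh_i:=\bg_i^t-\nabla f(\bbx^t)$ and $\bar\bh:=\tfrac1N\sum_j\bh_j$ we have $\bg_i^t-\bbg^t=\bh_i-\bar\bh$, and since $\sum_i(\bx_i^t-\bbx^t)=0$ a short computation gives
\[
\sum_i\Vert(\bx_i^t-\bbx^t)-\eta_t(\bh_i-\bar\bh)\Vert^2=\sum_i\Vert(\bx_i^t-\bbx^t)-\eta_t\bh_i\Vert^2-\eta_t^2 N\Vert\bar\bh\Vert^2\le\sum_i\Vert(\bx_i^t-\bbx^t)-\eta_t\bh_i\Vert^2.
\]
Each summand is the familiar one-step form $\Vert(\bx_i^t-\bbx^t)-\eta_t(\nabla f(\bx_i^t)-\nabla f(\bbx^t))\Vert^2$; expanding it and using strong convexity, $\langle\nabla f(\bx_i^t)-\nabla f(\bbx^t),\,\bx_i^t-\bbx^t\rangle\ge\mu\Vert\bx_i^t-\bbx^t\Vert^2$, together with the $L$-smoothness (co-coercivity) bound $\Vert\nabla f(\bx_i^t)-\nabla f(\bbx^t)\Vert^2\le L\langle\nabla f(\bx_i^t)-\nabla f(\bbx^t),\,\bx_i^t-\bbx^t\rangle$ and $\eta_t\le 1/L$, bounds each summand by $(1-2\eta_t\mu+\eta_t^2\mu L)\Vert\bx_i^t-\bbx^t\Vert^2\le(1-\eta_t\mu+\eta_t^2\mu L)\Vert\bx_i^t-\bbx^t\Vert^2$. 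Summing over $i$, adding back the noise contribution, and taking expectation gives \eqref{eq: consensus 2}.

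The individual steps are elementary; the one place that needs care is the bookkeeping between the two different averages $\bbx^t$ and $\bbg^t$ (equivalently $\nabla f(\bbx^t)$) — in particular verifying that the variance-reduction term $-\eta_t^2 N\Vert\bar\bh\Vert^2$ arising when $\bbg^t$ is replaced by $\nabla f(\bbx^t)$, as well as the noise cross term, are correctly signed so that they may be discarded, and that the contraction bound only invokes $\eta_t\le 1/L$, which holds for the step-sizes of Theorem \ref{thm: general}.
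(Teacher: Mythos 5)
Your proof is correct and follows essentially the same route as the paper's: a bias--variance split of $\bx_i^{t+1}-\bbx^{t+1}$ into a conditionally deterministic drift and a zero-mean noise term, the identity $\sum_i\Vert\bw_i^t-\bbw^t\Vert^2=\sum_i\Vert\bw_i^t\Vert^2-N\Vert\bbw^t\Vert^2$ with conditional independence for the noise, and a strong-convexity/smoothness contraction of the drift under $\eta_t\le 1/L$. The only cosmetic difference is that you apply gradient monotonicity and co-coercivity between $\bx_i^t$ and $\bbx^t$ pointwise, whereas the paper uses the equivalent function-value forms together with Jensen's inequality; both yield the stated factor $1-\eta_t\mu+\eta_t^2\mu L$.
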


This lemma, bounds how much the consensus error grows at each iteration. Of course, when workers communicate, this error resets to zero and thus, we can calculate an upper bound for the consensus error, knowing the last iteration communication occurred and the step-size sequence. The following lemma takes care of that. Before stating the following lemma, let us define $G^t := \frac{1}{n}\sum_{i=1}^N \Vert \bg_i^t \Vert^2$.

\begin{lemma}\label{lem: consensus 2}
	Let assumptions of Theorem \ref{thm: general} hold. Then,
	\begin{align}
	\E\left[ \sum_{i=1}^N \Vert \bx_i^{t} - \bbx^t \Vert^2 \right] \leq
	12(N-1)\sum_{k=\tau(t)}^{t-1}\frac{c\E \left[ G^k \right] + \sigma^2}{\mu^2(t+\beta)^2}.
	\end{align}
\end{lemma}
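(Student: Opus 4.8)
The plan is to unroll the one-step recursion of Lemma~\ref{lem: consensus 1} from the last communication time $\tau(t)$ up to $t$, using the fact that at $t=\tau(t)$ all workers agree so the consensus error vanishes there. Writing $S_t := \E[\sum_{i=1}^N \Vert \bx_i^t - \bbx^t\Vert^2]$, Lemma~\ref{lem: consensus 1} gives $S_{t+1} \le (1 - \eta_t\mu + \eta_t^2\mu L) S_t + (N-1)\eta_t^2\sigma^2 + (1-1/N)\eta_t^2 c\, N\,\E[G^t]$, where I used $\sum_i\Vert\bg_i^t\Vert^2 = N G^t$. With $\eta_t = 3/(\mu(t+\beta))$ and $\beta \ge 9\kappa$, one has $\eta_t \le 1/(3L)$, so the contraction factor satisfies $1 - \eta_t\mu + \eta_t^2\mu L \le 1 - \eta_t\mu + \eta_t\mu/3 = 1 - \tfrac{2}{3}\eta_t\mu = 1 - \tfrac{2}{t+\beta} \le 1$. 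Hence iterating from $k = \tau(t)$ to $t-1$,
\begin{align*}
S_t \le \sum_{k=\tau(t)}^{t-1} \Big( \prod_{j=k+1}^{t-1}\big(1 - \eta_j\mu + \eta_j^2\mu L\big) \Big) \Big( (N-1)\eta_k^2\sigma^2 + (N-1)\eta_k^2 c\,\E[G^k] \Big),
\end{align*}
and since each factor in the product is at most $1$ (in fact I will want to keep a slightly sharper bound on the product — see below), this is at most $\sum_{k=\tau(t)}^{t-1} (N-1)\eta_k^2 (c\,\E[G^k] + \sigma^2)$.

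The remaining work is to control $\eta_k^2$ by $12/(\mu^2(t+\beta)^2)$ uniformly over $k \in [\tau(t), t-1]$. Naively $\eta_k^2 = 9/(\mu^2(k+\beta)^2) \le 9/(\mu^2(\tau(t)+\beta)^2)$, which is in terms of $\tau(t)+\beta$, not $t+\beta$; to convert I will use the key structural input that $\beta$ is large enough (condition \eqref{eq: beta condition}, or equivalently the choices in Theorem~\ref{thm: logT}) to guarantee $t + \beta \le 2(\tau(t) + \beta)$ for all $t$, i.e. the interval length $H_i = t-\tau(t)$ never exceeds $\tau_i + \beta$. Indeed condition \eqref{eq: beta condition} forces $\tau_i + \beta \ge 3\kappa(1+c/N) \ge 3$ and, more importantly, bounds $H_i$ relative to $\tau_i + \beta$; combined with the telescoping product $\prod_{j=k+1}^{t-1}(1 - 2/(j+\beta)) = \prod (j+\beta-2)/(j+\beta) \le ((k+\beta)/(t+\beta-1))^2$ — wait, this actually gives an \emph{extra} factor $((k+\beta)/(t+\beta))^2$ that cancels the $1/(k+\beta)^2$ from $\eta_k^2$ and replaces it with $1/(t+\beta)^2$ directly. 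So the cleaner route is: keep the product, write $\big(\prod_{j=k+1}^{t-1}(1-\tfrac{2}{j+\beta})\big)\eta_k^2 \le \tfrac{(k+\beta-1)(k+\beta-2)}{(t+\beta-1)(t+\beta-2)}\cdot\tfrac{9}{\mu^2(k+\beta)^2} \le \tfrac{9}{\mu^2(t+\beta-2)^2}$, and finally absorb the shift by $2$ using $\beta$ large (so $(t+\beta-2)^2 \ge \tfrac{3}{4}(t+\beta)^2$, giving the constant $9/(3/4) = 12$). Summing over $k$ from $\tau(t)$ to $t-1$ and pulling out the factor $(N-1)$ yields exactly
\begin{align*}
S_t \le 12(N-1)\sum_{k=\tau(t)}^{t-1}\frac{c\,\E[G^k] + \sigma^2}{\mu^2(t+\beta)^2},
\end{align*}
which is the claim.

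The main obstacle is the bookkeeping in the second paragraph: getting the denominator to read $(t+\beta)^2$ rather than $(k+\beta)^2$ with a clean numerical constant. This is where the telescoping of the contraction product $\prod(1 - 2/(j+\beta))$ must be exploited carefully — it is not enough to bound each factor by $1$; one needs the product to actually shrink like $((k+\beta)/(t+\beta))^2$ to trade the $k$-dependence for $t$-dependence, and then the constant $12$ comes from the modest slack $(t+\beta-2)^2 \ge \tfrac{3}{4}(t+\beta)^2$, valid once $t+\beta \ge 8$, which is ensured by $\beta \ge 9\kappa \ge 9$. Everything else (applying Lemma~\ref{lem: consensus 1}, the bound $\eta_t \le 1/(3L)$, the identity $\sum_i\Vert\bg_i^t\Vert^2 = NG^t$) is routine.
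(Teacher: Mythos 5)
Your proposal follows essentially the same route as the paper: unroll the recursion of Lemma~\ref{lem: consensus 1} from the last communication time (where the consensus error vanishes), bound the contraction factor by $1-\tfrac{2}{k+\beta}$ using $\beta\ge 9\kappa$, and telescope the product $\prod(1-\tfrac{2}{j+\beta})$ to trade the $(k+\beta)^{-2}$ from $\eta_k^2$ for $(t+\beta)^{-2}$. The only flaw is in the final constant bookkeeping: the claim $(t+\beta-2)^2\ge\tfrac34(t+\beta)^2$ ``once $t+\beta\ge 8$'' is false (it needs $t+\beta\gtrsim 15$), so your chain does not deliver the constant $12$ when $t+\beta\in\{10,11\}$; the paper sidesteps this by bounding the product as $\bigl((k+\beta+1)/(t+\beta)\bigr)^2$, so that $(t+\beta)^2$ appears directly and the only loss is $9(k+\beta+1)^2/(k+\beta)^2\le 9(1+1/\beta)^2\le 12$. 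This is a trivially fixable constant-tracking slip, not a structural gap.
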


\paragraph{Variance.}
Our next lemma bounds $E[\Vert \tbg^t \Vert^2]$.
\begin{lemma}\label{lem: variance}
	Under Assumption \ref{asm: strong convexity} we have,
	\begin{align*}
	\E \left[ \left\Vert \tbg^t \right\Vert^2 \right] \leq \left(1+\frac{c}{N} \right) 
	\E \left[G^t \right] + \frac{\sigma^2}{N}.
	\end{align*}
\end{lemma}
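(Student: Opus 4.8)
The plan is to decompose $\tbg^t$ into a ``signal'' part and a ``noise'' part and exploit that the per-worker noises $\bw_i^t = \hbg_i^t - \bg_i^t$ are conditionally zero-mean and independent across $i$ (Assumption~\ref{asm: noise strong growth}). Write $\tbg^t = \bbg^t + \frac1N\sum_{i=1}^N \bw_i^t$, where $\bbg^t := \frac1N\sum_{i=1}^N \bg_i^t$, and expand the square:
\[
\left\|\tbg^t\right\|^2 = \left\|\bbg^t\right\|^2 + \frac2N\Big\langle \bbg^t,\ \sum_{i=1}^N \bw_i^t\Big\rangle + \frac1{N^2}\Big\|\sum_{i=1}^N \bw_i^t\Big\|^2 .
\]

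First I would condition on the $\sigma$-field generated by all the iterates $\{\bx_i^t\}_{i=1}^N$ at time $t$. Since each $\bw_i^t$ is zero-mean given $\bx_i^t$, the cross term vanishes in conditional expectation. For the last term, conditional independence of the $\bw_i^t$ across workers annihilates all off-diagonal covariances, leaving $\E\big[\|\sum_i \bw_i^t\|^2 \mid \{\bx_i^t\}\big] = \sum_i \E[\|\bw_i^t\|^2 \mid \bx_i^t] \le \sum_i (c\|\bg_i^t\|^2 + \sigma^2) = cN\,G^t + N\sigma^2$, so this term contributes at most $cG^t/N + \sigma^2/N$. For the signal term, convexity of $\|\cdot\|^2$ (equivalently Cauchy--Schwarz) gives $\|\bbg^t\|^2 = \|\frac1N\sum_i \bg_i^t\|^2 \le \frac1N\sum_i\|\bg_i^t\|^2 = G^t$. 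Combining the three pieces yields $\E[\|\tbg^t\|^2 \mid \{\bx_i^t\}] \le (1 + c/N)\,G^t + \sigma^2/N$, and taking total expectation finishes the proof.

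I do not expect a genuine obstacle here; the only point requiring care is the bookkeeping of conditioning, so that the cross term drops \emph{and} the independence of the noise across workers can be applied to the quadratic term. (Note that Assumption~\ref{asm: strong convexity} enters the statement only to guarantee $f$ is differentiable, hence $\bg_i^t = \nabla f(\bx_i^t)$ is well defined; the estimate itself uses only the noise model and convexity of the squared norm.)
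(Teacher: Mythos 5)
Your proof is correct and follows essentially the same route as the paper's: decompose $\tbg^t = \bbg^t + \bbw^t$, kill the cross term by conditional zero-mean, use conditional independence to reduce the noise term to $\frac{1}{N^2}\sum_i \E[\Vert\bw_i^t\Vert^2\mid\bx_i^t] \le \frac{c}{N}G^t + \frac{\sigma^2}{N}$, and bound $\Vert\bbg^t\Vert^2 \le G^t$ by Jensen (the paper invokes its Lemma~\ref{lem: u - ubar} for this last step, which is the same inequality). Your parenthetical observation is also apt: the bound really rests on the noise model of Assumption~\ref{asm: noise strong growth} rather than on strong convexity.
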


\subsection{Proofs}
Let us define the following notations used in the proofs presented here.
\begin{align*}
\bbg^t := \frac{1}{N} \sum_{i=1}^n \bg_i^t, \qquad G^t := \frac{1}{N}\sum_{i=1}^N \Vert \bg_i^t \Vert^2, \qquad & \bw_i^t := \hbg_i^t - \bg_i^t.
\end{align*}
Moreover, define $\F^t:=\{\bx_i^k, \hbg_i^k | 1\leq i \leq N, 0\leq k \leq t-1 \} \cup \{\bx_i^t | 1 \leq i \leq N\}$.

\begin{proof}[Proof of Lemma \ref{lem: error decay}]
	By Assumptions \ref{asm: smoothness} and \ref{asm: strong convexity} and \eqref{eq: average x update} we have,
	\begin{align}\label{eq: opt E1}
	\E[ f(\bbx^{t+1}) - f(\bbx^t)] \leq -\eta_t \E[\langle \nabla f(\bbx^t), \tbg^t \rangle ] + \frac{\eta_t^2 L}{2} \E[\Vert \tbg^t \Vert_2^2 ].
	\end{align}
	We bound the first term on the R.H.S of \eqref{eq: opt E1} by conditioning on $\F^t$ as follows:
	\begin{align} \label{eq: opt E2}
	\E[\langle  \nabla f(\bbx^t), \tbg^t \rangle | \mathcal{F}^{t} ] &= \frac{1}{N} \sum_{i=1}^N \langle \nabla f(\bbx^t), \E[\hbg_i^t | \bx_i^t] \rangle \nonumber \\
	&= \frac{1}{2} \Vert \nabla f(\bbx^t) \Vert^2 + \frac{1}{2N} \sum_{i=1}^N \Vert \nabla f(\bx_i^t) \Vert^2 - \frac{1}{2N} \sum_{i=1}^N \Vert \nabla f(\bbx^t) - \nabla f(\bx_i^t) \Vert^2  \nonumber\\
	& \geq \mu (f(\bbx^t) - f^*) + \frac{1}{2N} \sum_{i=1}^N \Vert \nabla f(\bx_i^t) \Vert^2  - \frac{L^2}{2N} \sum_{i=1}^N \Vert \bbx^t - \bx_i^t \Vert^2,
	\end{align}
	where we used $\langle a, b \rangle  = \frac{1}{2} \Vert a \Vert ^2 + \frac{1}{2} \Vert b\Vert^2 - \frac{1}{2} \Vert a-b \Vert ^2$ in the second equation and $(1/2)\Vert \nabla f(\bx) \Vert^2 \geq \mu (f(\bx) - f^*)$ as well as smoothness of $f$ in the last inequality.
	Taking full expectation of \eqref{eq: opt E2} and combining it with \eqref{eq: opt E1} concludes the lemma.
\end{proof}

We state an important identity in the following lemma.
\begin{lemma}\label{lem: u - ubar}
	Let $\bu_1, \ldots \bu_n \in \R^d$ be $n$ arbitrary vectors. Define $\bar \bu = (\sum_{i=1}^n \bu_i)/n$. Then,
	\begin{align*}
	\sum_{i=1}^n \Vert \bu_i - \bar \bu \Vert^2 = \sum_{i=1}^n \Vert \bu_i \Vert^2 - n \Vert \bar \bu \Vert^2.
	\end{align*}
\end{lemma}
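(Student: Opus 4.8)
\textbf{Proof proposal for Lemma~\ref{lem: u - ubar}.}

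The plan is to prove this by a direct expansion of the squared norms, exploiting the defining property of the mean $\bar\bu$, namely that $\sum_{i=1}^n (\bu_i - \bar\bu) = \mathbf{0}$. First I would write
\begin{align*}
\sum_{i=1}^n \Vert \bu_i - \bar\bu \Vert^2
= \sum_{i=1}^n \left( \Vert \bu_i \Vert^2 - 2\langle \bu_i, \bar\bu \rangle + \Vert \bar\bu \Vert^2 \right).
\end{align*}
The middle term distributes as $-2\langle \sum_{i=1}^n \bu_i, \bar\bu \rangle = -2\langle n\bar\bu, \bar\bu \rangle = -2n\Vert \bar\bu \Vert^2$, using $\sum_{i=1}^n \bu_i = n\bar\bu$ straight from the definition of $\bar\bu$. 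The last term contributes $n\Vert \bar\bu \Vert^2$ since it does not depend on $i$.

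Collecting the three pieces gives $\sum_{i=1}^n \Vert \bu_i \Vert^2 - 2n\Vert \bar\bu \Vert^2 + n\Vert \bar\bu \Vert^2 = \sum_{i=1}^n \Vert \bu_i \Vert^2 - n\Vert \bar\bu \Vert^2$, which is exactly the claimed identity. This is the standard bias-variance / parallel-axis decomposition, so there is no real obstacle; the only thing to be careful about is correctly tracking the factor of $n$ in the cross term, which is where an off-by-$n$ slip could creep in. I would present it as a two-line computation without further commentary.
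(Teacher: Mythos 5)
Your proposal is correct and follows exactly the paper's argument: expand the squared norm, use $\sum_{i=1}^n \bu_i = n\bar\bu$ to evaluate the cross term as $-2n\Vert\bar\bu\Vert^2$, and combine with the $+n\Vert\bar\bu\Vert^2$ from the constant term. There is nothing to add.
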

\begin{proof}
	We have
	\begin{align*}
	\sum_{i=1}^n \Vert \bu_i - \bar \bu \Vert^2 &= \sum_{i=1}^n \Vert \bu_i \Vert^2 + n \Vert \bar \bu \Vert^2 - 2\sum_{i=1}^n \langle \bu_i, \bar \bu \rangle \\
	&= \sum_{i=1}^n \Vert \bu_i \Vert^2 + n \Vert \bar \bu \Vert^2 - 2n \langle \bar \bu, \bar \bu \rangle \\
	& = \sum_{i=1}^n \Vert \bu_i \Vert^2 - n \Vert \bar \bu \Vert^2.
	\end{align*}
\end{proof}

\begin{proof}[Proof of Lemma \ref{lem: consensus 1}]
	We have,
	\begin{align}\label{eq: consensus var}
	  \sum_{i=1}^{N} \E \left[\left\Vert \bx_i^{t+1} - \bbx^{t+1} \right\Vert^2 \right] = \sum_{i=1}^{N} \left\Vert \E \left[\bx_i^{t+1} - \bbx^{t+1} \right] \right\Vert^2 + \sum_{i=1}^{N} \E \left[ \left\Vert \bx_i^{t+1} - \bbx^{t+1} - \E \left[\bx_i^{t+1} - \bbx^{t+1} \right] \right\Vert^2 \right].
	\end{align}
	Let us consider the first term on the right hand side of \eqref{eq: consensus var}. Taking conditional expectation of both sides of \eqref{eq: average x update} implies,
	\begin{align} \label{eq: consensus 1}
	\sum_{i=1}^{N} \left\Vert \E \left[\bx_i^{t+1} - \bbx^{t+1} | \ \F^{t} \right] \right\Vert^2 &= \sum_{i =1}^{N} \Vert \bx_i^t - \bbx^t - \eta_t(\bg_i^t - \bar \bg^t) \Vert^2 
	\nonumber \\
	&= \sum_{i=1}^{N} \left(\Vert \bx_i^t - \bbx^t \Vert^2 + \eta_t^2\Vert \bg_i^t - \bar \bg^t \Vert^2 - 2 \eta_t \langle \bg_i^t, \bx_i^t - \bbx^t \rangle \right).
	\end{align}
	By $L$-smoothness of $F$, $\Vert \nabla f(\bx) - \nabla f(\by) \Vert^2 \leq 2L(f(\bx) - f(\by) - \langle \nabla f(\by), \bx - \by \rangle)$. Thus,
	\begin{multline} \label{eq: consensus L}
	\sum_{i=1}^N \Vert \bg_i^t - \bar \bg^t \Vert^2 \leq 
	\sum_{i=1}^N \Vert \bg_i^t - \nabla f(\bbx^t) \Vert^2 \leq \\
	\sum_{i=1}^N 2L\left( f(\bbx^t) - f(\bx_i^t) - \langle \bg_i^t, \bbx^t - \bx_i^t \rangle \right)  
	\leq 2L\sum_{i=1}^N \langle \bg_i^t, \bx_i^t  - \bbx^t \rangle.
	\end{multline}
	Moreover, by $\mu$-strong convexity of $F$,
	\begin{align} \label{eq: consensus mu}
	\sum_{i=1}^N \langle \bg_i^t, \bx_i^t - \bbx^t \rangle \geq 	\sum_{i=1}^N \left( f(\bx_i^t) - f(\bbx_i^t) + \frac{\mu}{2}\Vert \bx_i^t - \bbx^t\Vert^2 \right) \geq \frac{\mu}{2} \sum_{i=1}^N \Vert \bx_i^t - \bbx^t\Vert^2. 
	\end{align}
	We used the Jensen's inequality $\sum_{i =1}^N f(\bx_i^t) - f(\bbx^t) \leq 0$ in both equations above.
	Combining \eqref{eq: consensus 1}-\eqref{eq: consensus mu} and having $\eta_t<1/L$ we obtain,
	\begin{align*}
	\sum_{i=1 }^{N} \Vert \E[\bx_i^{t+1} - \bbx^{t+1} | \F^t] \Vert^2 &\leq \sum_{i=1}^{N} \Vert \bx_i^t - \bbx^t \Vert^2 - (2\eta_t - 2\eta_t^2 L) \sum_{i=1}^N \langle \bg_i^t, \bx_i^t - \bbx^t \rangle \\
	& \leq \sum_{i=1}^{N} \Vert \bx_i^t - \bbx^t \Vert^2 \left( 1 - \eta_t \mu + \eta_t^2 \mu L \right).
	\end{align*}
	Now, consider the second term on the right hand side of \eqref{eq: consensus var}.
	We have,
	\begin{align*}
	\sum_{i=1}^{N} \E \left[\left\Vert \bx_i^{t+1} - \bbx^{t+1} - \E[\bx_i^{t+1} - \bbx^{t+1}] \right\Vert^2 | \F^t \right] &= 
	\sum_{i=1}^{N} \E \left[\left\Vert \bx_i^{t+1} - \E[\bx_i^{t+1}] - (\bbx^{t+1} - \E[\bbx^{t+1}]) \right\Vert^2 | \F^t \right] \\
	&= \eta_t^2 \sum_{i=1}^{N} \E \left[ \left\Vert \bw_i^t - \bbw^t \right\Vert^2 | \F^t \right] \\
	&= \eta_t^2 \left(\sum_{i=1}^{N} \E \left[ \left\Vert \bw_i^t \right\Vert^2 | \F^t \right]- N \E \left[ \left\Vert \bbw^t \right\Vert^2 |\F^t \right]\right) \\
	&= \eta_t^2 \sum_{i=1}^{N} \E \left[ \left\Vert \bw_i^t \right\Vert^2 | \F^t \right](1-\frac{1}{N}) \\
	&\leq (N-1)\eta_t^2\sigma^2 + (1-\frac{1}{N})\eta_t^2 c\sum_{i=1}^N \Vert \bg_i^t \Vert^2,
	\end{align*}
	where $\bw_i^t$ are defined at the beginning of this section and $\bbw^t := (\sum_{i=1}^N \bw_i^t)/n$ and we used Lemma \ref{lem: u - ubar} in the third equation and the conditional independence of $\bw_i^t$ to use $\E[\Vert \bbw^t \Vert^2 | \F^t] = (1/N^2)\sum_{i=1}^N \E[\Vert \bw_i^t \Vert^2 | \F^t]$ in the last equality.
	Taking full expectation of the two relations above with respect to $\F^t$ and combining them with \eqref{eq: consensus var} completes the proof.
\end{proof}

Before proving Lemma \ref{lem: consensus 2}, let us state and prove the following lemma.
\begin{lemma}\label{lem: products}
	Let $b \geq a > 2$ be integers. Define $\Phi(a,b) = \prod_{i=a}^b \left( 1 - \frac{2}{i} \right)$. We then have $
	\Phi(a,b) \leq \left( \frac{a}{b+1} \right)^{2}.$
\end{lemma}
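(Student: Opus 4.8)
The plan is to evaluate $\Phi(a,b)$ in closed form by telescoping the product, and then verify the resulting rational inequality by splitting it into two easily-bounded factors.

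First I would rewrite each factor as $1 - \frac{2}{i} = \frac{i-2}{i}$, so that
\begin{align*}
\Phi(a,b) = \frac{\prod_{i=a}^{b}(i-2)}{\prod_{i=a}^{b} i}.
\end{align*}
Shifting the index in the numerator gives $\prod_{i=a}^{b}(i-2) = \prod_{j=a-2}^{b-2} j$, and cancelling the common factors $a, a+1, \ldots, b-2$ against the denominator $\prod_{j=a}^{b} j$ leaves
\begin{align*}
\Phi(a,b) = \frac{(a-1)(a-2)}{b(b-1)}.
\end{align*}
Since $a > 2$ we have $a - 2 \geq 1 > 0$, and since $b \geq a \geq 3$ we have $b - 1 \geq 2 > 0$, so every quantity appearing here is positive.

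Next I would write $\Phi(a,b) = \frac{a-1}{b} \cdot \frac{a-2}{b-1}$ and bound each factor by $\frac{a}{b+1}$. Because $a \leq b$ we also have $a \leq b+1$; hence $(a-1)(b+1) = ab + (a - b - 1) \leq ab$, which gives $\frac{a-1}{b} \leq \frac{a}{b+1}$ after dividing by $b(b+1)>0$. Similarly $(a-2)(b+1) = a(b-1) + (2a - 2b - 2) \leq a(b-1)$, using $a \leq b+1$ once more, so $\frac{a-2}{b-1} \leq \frac{a}{b+1}$. Multiplying these two inequalities yields $\Phi(a,b) \leq \left( \frac{a}{b+1} \right)^{2}$, which is the claim.

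There is no real obstacle here; the one subtlety worth noting is that the naive route — bounding the numerator $(a-1)(a-2)$ by $a^2$ and the denominator $b(b-1)$ by $(b+1)^2$ — does not work, because $b(b-1) < (b+1)^2$, so the denominator bound points the wrong way. The fix is to pair $a-1$ with $b$ and $a-2$ with $b-1$ so that each ratio separately collapses to the target $\frac{a}{b+1}$, which is exactly where the hypothesis $a \leq b$ is used.
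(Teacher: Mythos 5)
Your proof is correct, and it takes a genuinely different route from the paper. The paper does not telescope the product at all: it takes logarithms, uses $\ln(1-x) \leq -x$ termwise, and then bounds $\sum_{i=a}^{b} 1/i$ from below by the integral $\int_a^{b+1} dx/x = \ln\frac{b+1}{a}$, after which exponentiating gives the claim. Your argument instead writes each factor as $\frac{i-2}{i}$ and cancels to get the exact closed form $\Phi(a,b) = \frac{(a-1)(a-2)}{b(b-1)}$, then checks the two elementary inequalities $\frac{a-1}{b} \leq \frac{a}{b+1}$ and $\frac{a-2}{b-1} \leq \frac{a}{b+1}$, each of which reduces to $a \leq b+1$. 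Your approach buys more: it is purely algebraic (no logarithm or integral comparison needed), it yields the exact value of $\Phi(a,b)$ rather than just an upper bound, and your remark about why the naive numerator/denominator pairing fails correctly identifies the one place where care is needed. The paper's approach, on the other hand, generalizes more readily to products of the form $\prod (1 - c/i)$ with non-integer $c$, where no telescoping is available — which is presumably why the authors chose it, even though for $c=2$ your route is simpler. Both proofs correctly use the hypotheses $a > 2$ (so all factors are positive) and $b \geq a$.
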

\begin{proof}
	Indeed,
	\begin{align*}
	\ln (\Phi(a,b))  =  \sum_{i=a}^b \ln \left( 1 - \frac{2}{i} \right)   
	\leq  \sum_{i=a}^b - \frac{2}{i} 
	\leq  -  2\left[ \ln (b+1) - \ln (a) \right].
	\end{align*}
	where we used the inequality $\ln (1-x) \leq -x$ as well as the standard technique of viewing $\sum_{i=a}^b 1/i$ as a Riemann sum for $\int_{a}^{b+1} 1/x ~dx$ and observing that the Riemann sum overstates the integral. Exponentiating both sides now implies the lemma.
\end{proof}

\begin{proof}[Proof of Lemma \ref{lem: consensus 2}]
	Define $a^k =  \E\left[ \sum_{i=1}^N \Vert \bx_i^{k} - \bbx^k \Vert^2 \right]$ and $\Delta_k = (1 - \eta_k \mu + \eta_k^2 \mu L)$ for $k\geq 0$ . By Lemma \ref{lem: consensus 1},
	\begin{align*}
	a^{t} &\leq \Delta_{t-1}a^{t-1}  + \eta_{t-1}^2(N-1)(\sigma^2 + c \E[G^{t-1}]) \\
	& \leq \Delta_{t-1}(\Delta_{t-2}a^{t-2} + \eta_{t-2}^2(N-1)(\sigma^2+c\E[G^{t-2}])) + \eta_{t-1}^2(N-1)(\sigma^2+c\E[G^{t-1}]) \\
	& \leq \ldots \leq \prod_{k=\tau(t)}^{t-1} \Delta_k a^{\tau(t)} + 
	(N-1)\sum_{k=\tau(t)}^{t-1} \eta_k^2(\sigma^2+c\E[G^k])\prod_{i=k+1}^{t-1} \Delta_i \\
	&= (N-1)\sum_{k=\tau(t)}^{t-1} \eta_k^2(\sigma^2+c\E[G^k])\prod_{i=k+1}^{t-1} \Delta_i,
	\end{align*}
	where we used $a^{\tau(t)}= 0$ in the last equation. 
	By the choice of stepsize and $\beta\geq 9\kappa $,  we have 
	\begin{align*}
	\Delta_k = 1-\frac{3}{k+\beta} + \frac{9L}{\mu (k+\beta)^2} \leq 1 - \frac{3}{k+\beta} + \frac{9\kappa}{(k+\beta)\beta}
	\leq 1 - \frac{3}{k+\beta} + \frac{1}{(k+\beta)} =  1 - \frac{2}{k+\beta}.
	\end{align*}
	Therefore, by Lemma \ref{lem: products}, 
	\begin{align*}
	a^t \leq (N-1) \sum_{k = \tau(t)}^{t-1} \frac{9(\sigma^2+c\E[G^k])}{\mu^2 (k+\beta)^2 } \frac{(k+\beta+1)^2}{(t + \beta)^2} \leq (N-1) \sum_{k = \tau(t)}^{t-1} \frac{12(\sigma^2+c\E[G^k])}{\mu^2 (t+\beta)^2},
	\end{align*}
	where we used $9(k+\beta + 1)^2/(k+\beta)^2\leq 9(\beta+1)^2/\beta^2 \leq 9(10/9)^2\leq 12$ since $\beta \geq 9\kappa \geq 9$.
\end{proof}

\begin{proof}[Proof of Lemma \ref{lem: variance}]
	We have,
	\begin{align*}
	\E[\Vert \tbg^t \Vert^2|\F^t] = \E[\Vert \bbg^t + \bar \e^t \Vert^2 | \F_t] = \Vert \bbg^t \Vert^2 + \E[\Vert \bbw^t \Vert^2| \F^t] \leq  \frac{1}{N}\sum_{i=1}^N \Vert \bg_i^t \Vert^2 + \frac{1}{N^2}\sum_{i=1}^N(\sigma^2 + c\Vert \bg_i^t \Vert^2),
	\end{align*}
	where in the last inequality we used Lemma \ref{lem: u - ubar} and the conditional independency of $\bw_i^t$ to decouple the noise terms.
\end{proof}

\begin{proof}[Proof of Theorem \ref{thm: general}]
	Combining Equations Lemmas \ref{lem: error decay}-\ref{lem: variance} and plugging $\eta_t = 3/(\mu(t+\beta))$ we obtain
	\begin{multline*}
	\xi^{t+1} \leq \xi^t(1 - \mu \eta_t) + 
	\frac{18L^2}{\mu^3(t+\beta)^3} 
	\sum_{k=\tau(t)}^{t-1} \left(c\E[G^k] + \sigma^2 \right) \\
	+ \frac{9L}{2\mu^2(t+\beta)^2} \left( \left(1+\frac{c}{N} \right) \E[G^t] + \frac{\sigma^2}{N} \right)
	- \frac{3}{2\mu(t+\beta)} \E[G^t].
	\end{multline*}
	Let us multiply both sides of relation above by $(t+\beta)^2$ 
	and use the following inequality
	\begin{align*}
	(1-\mu \eta_t)(t+\beta)^2 = \left(1 - \frac{2}{t+\beta} \right)(t+\beta)^2 = (t+\beta)^2 - 2(t+\beta) < (t+\beta-1)^2,
	\end{align*}
	to obtain,
	\begin{multline*}
	\xi^{t+1} (t+\beta)^2 \leq \xi^t (t+\beta-1)^2 + \frac{9L \sigma^2}{2 \mu^2 N} + \\
	\frac{18L^2}{\mu^3(t+\beta)} \sum_{k=\tau(t)}^{t-1} \left( c\E[G^k]+\sigma^2 \right)  + 
	\left( \frac{9L}{2\mu^2} \left(1+\frac{c}{N} \right) - \frac{3(t+\beta)}{2\mu} \right)\E[G^t].
	\end{multline*}
	Summing relation above for $t=\tau_i,\ldots,\tau_{i+1}-1$, where $\tau_i,\tau_{i+1}\in \I$ are two consecutive communication times, implies,
	\begin{multline*}
	\xi^{\tau_{i+1}}(\tau_{i+1} + \beta -1 )^2 \leq \xi^{\tau_i}(\tau_{i} + \beta -1 )^2  +  \frac{9L \sigma^2}{2 \mu^2 N}(\tau_{i+1} - \tau_i)
	+ \frac{18L^2\sigma^2}{\mu^3}\sum_{t=\tau_i}^{\tau_{i+1}-1} \frac{t-\tau_i}{t+\beta} \\
	+ \sum_{t=\tau_i}^{\tau_{i+1}-1}\E[G^t] \left( \sum_{k=t+1}^{\tau_{i+1}-1}\frac{18 L^2 c}{\mu^3(k+\beta)} + \frac{9L}{2\mu^2} \left(1+\frac{c}{N} \right) - \frac{3 (t+\beta)}{2 \mu} \right).
	\end{multline*}
	Each of the coefficients of $\E[G^t]$ in above can be bounded by,
	\begin{scriptsize}
		\begin{align*}
		\sum_{k=t+1}^{\tau_{i+1}-1}\frac{18L^2 c}{\mu^3(k+\beta)} + \frac{9L}{2\mu^2} \left(1+\frac{c}{N} \right) - \frac{3(t+\beta)}{2\mu} &\leq 
		\frac{18L^2c}{\mu^3} \ln \left(\frac{\tau_{i+1}+\beta -1 }{\tau_i + \beta} \right)
		+ \frac{9L}{2\mu^2} \left(1+\frac{c}{N} \right) - \frac{3\tau_i +\beta}{2\mu} \\
		&= \frac{3}{2\mu}\left( 12\kappa^2c \ln \left(1 + \frac{H_i - 1}{\tau_i + \beta} \right) + 3\kappa \left(1+\frac{c}{N} \right) - (\tau_i + \beta)\right) \\
		&\leq 0,
		\end{align*}
	\end{scriptsize}
	where we used $\sum_{k=t_1+1}^{t_2} 1/k \leq \int_{t_1}^{t_2} dx/x = \ln(t_2/t_1)$ in the first inequality and the last inequality comes from the assumption of the theorem. Now that the coefficients of $\E[G^k]$ are non-positive, we can simply ignore them and obtain,
	\begin{align*}
	\xi^{\tau_{i+1}}(\tau_{i+1} + \beta -1 )^2 \leq \xi^{\tau_i}(\tau_{i} + \beta -1 )^2  +  \frac{9L \sigma^2}{2 \mu^2 N}(\tau_{i+1} - \tau_i)
	+ \frac{18L^2\sigma^2}{\mu^3}\sum_{t=\tau_i}^{\tau_{i+1}-1} \frac{t-\tau_i}{t+\beta}.
	\end{align*}
	Recursing relation above for $i=0,\ldots,R-1$ implies,
	\begin{align*}
	\xi^{T}(T + \beta -1 )^2 \leq \xi^{0}( \beta -1 )^2  +  \frac{9L \sigma^2}{2 \mu^2 N}T
	+ \frac{18L^2\sigma^2}{\mu^3}\sum_{t=0}^{T-1} \frac{t-\tau(t)}{t+\beta}.
	\end{align*}
	Dividing both sides by $(T+\beta -1 )^2$ concludes the proof.
\end{proof}

\begin{proof}[Proof of Theorem \ref{thm: logT}]
	We have,
	\begin{align*}
	\tau_j = \tau_0 + \sum_{i=0}^{j-1} H_i = a \frac{j(j+1)}{2}, \qquad j=0,\ldots,k-1.
	\end{align*}
	Hence,
	\begin{align*}
	1 + \frac{H_0-1}{\tau_0 + \beta} &= 1 + \frac{a-1}{\beta} \leq 1 + \frac{2T}{9 \kappa R^2} \leq 1 + \frac{T}{4 \kappa R^2},\\
	1 + \frac{H_i-1}{\tau_i + \beta} &\leq 1 + \frac{a(i+1)}{\frac{ai(i+1)}{2}} \leq 3, \qquad i\geq 1.
	\end{align*}
	Thus, $12 \kappa^2c \ln(1 + \frac{H_i - 1}{\tau_i + \beta}) + 3 \kappa(1+\frac{c}{N}) - (\tau_i + \beta)\leq 0, i=0,\ldots,R-1$ and we can use Theorem \ref{thm: general}.
	Moreover,
	\begin{align*}
	\sum_{t=0}^{T-1} \frac{t-\tau(t)}{t+\beta} &\leq \sum_{j=0}^{R-1} \sum_{i=1}^{H_j - 1} \frac{i}{\tau_j + i + \beta } \leq H_0 + \sum_{j=1}^{R-1} \sum_{i=1}^{H_j - 1} \frac{i}{\tau_j + 1 + \beta } \\
	&= a + \sum_{j=1}^{R-1} \frac{H_j (H_j - 1)}{2(\tau_j + 1 + \beta)} = a + \sum_{j=1}^{R-1} \frac{a(j+1) (a(j+1) - 1)}{a j(j+1) + 2(1 + \beta)} \\
	&\leq a + \sum_{j=1}^{R-1} \frac{a^2 (j+1)^2 }{aj(j+1)} \leq 2a R.
	\end{align*}
	Plugging the values of $R$ and $a$ implies,
	\[ \sum_{t=0}^{T-1} \frac{t-\tau(t)}{t+\beta} \leq 2aR \leq 2(\frac{2T}{R^2}+1)R = \frac{4T}{R} + 2R \leq \frac{4T}{R} + \frac{4T}{R} = \frac{8T}{R}, \]
	where we used $R\leq \sqrt{2T}$ in the last inequality. Using the relation above together with Theorem \ref{thm: general} concludes the proof.
\end{proof}
\newpage
\section{One-shot averaging}\label{sec: apx osa}
In this section we prove Theorem~\ref{thm: osa main} for one-shot averaging.
The main idea is to use second order approximation for gradients at any point with respect to the minimizer and show that the residual errors are \emph{insignificant}, using concentration results from \citet{karimi2016linear}.

\subsection{Preliminaries}
Define $\bv(\by,\bx) :=  \nabla f(\by) - \left(\nabla f(\bx) + \nabla^2 f(\bx)(\by-\bx) \right)$ and $\bv_i^t = \bv(\bx_i^t, \bx^*)$.

\begin{lemma}\label{lem: linear estimate}
	Let Assumption \ref{asm: 2-time diff} hold. Then $| [\bv(\bx, \bx^*)]_i | = o(\Vert \bx - \bx^* \Vert)$ for $i=1,\ldots,d$.
\end{lemma}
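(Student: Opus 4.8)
The plan is to show that $\bv(\bx,\bx^*) = \nabla f(\bx) - \nabla f(\bx^*) - \nabla^2 f(\bx^*)(\bx - \bx^*)$ is $o(\|\bx - \bx^*\|)$ coordinatewise, which is essentially a restatement of first-order (Fréchet) differentiability of $\nabla f$ at $\bx^*$. Since Assumption~\ref{asm: 2-time diff} gives that $f$ is twice continuously differentiable on an open neighborhood $U$ of $\bx^*$, the gradient map $\nabla f : U \to \R^d$ is continuously differentiable there, with Jacobian $\nabla^2 f(\cdot)$. The first step is therefore to invoke the mean value theorem in integral form along the segment from $\bx^*$ to $\bx$: for $\bx$ close enough to $\bx^*$ that the segment lies in $U$,
\begin{align*}
\nabla f(\bx) - \nabla f(\bx^*) = \int_0^1 \nabla^2 f\big(\bx^* + s(\bx - \bx^*)\big)(\bx - \bx^*)\, ds.
\end{align*}
Subtracting $\nabla^2 f(\bx^*)(\bx - \bx^*)$ from both sides, we get
\begin{align*}
\bv(\bx,\bx^*) = \int_0^1 \Big( \nabla^2 f\big(\bx^* + s(\bx - \bx^*)\big) - \nabla^2 f(\bx^*) \Big)(\bx - \bx^*)\, ds.
\end{align*}

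The second step is to bound the $i$-th coordinate. Taking the $i$-th component and using Cauchy--Schwarz on each row,
\begin{align*}
\big| [\bv(\bx,\bx^*)]_i \big| \leq \int_0^1 \big\| \big[\nabla^2 f(\bx^* + s(\bx-\bx^*)) - \nabla^2 f(\bx^*)\big]_{i,\cdot} \big\| \, ds \cdot \|\bx - \bx^*\| \leq \sup_{s\in[0,1]} \big\| \nabla^2 f(\bx^* + s(\bx-\bx^*)) - \nabla^2 f(\bx^*) \big\| \cdot \|\bx - \bx^*\|.
\end{align*}
By continuity of $\nabla^2 f$ at $\bx^*$, the supremum tends to $0$ as $\bx \to \bx^*$: given $\epsilon > 0$, there is $\delta > 0$ so that $\|\nabla^2 f(\by) - \nabla^2 f(\bx^*)\| < \epsilon$ whenever $\|\by - \bx^*\| < \delta$, and every point $\bx^* + s(\bx - \bx^*)$ on the segment satisfies this once $\|\bx - \bx^*\| < \delta$. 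Hence $|[\bv(\bx,\bx^*)]_i| \leq \epsilon \|\bx - \bx^*\|$ for $\bx$ in that neighborhood, which is precisely the claim $|[\bv(\bx,\bx^*)]_i| = o(\|\bx - \bx^*\|)$.

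There is no real obstacle here — the statement is a standard consequence of $C^2$ smoothness — but the one point requiring a little care is that the integral representation is only valid on the neighborhood $U$ where $f$ is twice differentiable, so the asymptotic statement must be read as a limit $\bx \to \bx^*$ (which is all that is needed, since the lemma will only be applied to iterates that eventually concentrate near $\bx^*$). One could alternatively give a slightly softer argument by directly appealing to the definition of differentiability of the map $\nabla f$ at $\bx^*$ with derivative $\nabla^2 f(\bx^*)$, bypassing the integral form entirely; I would include the integral version since it makes the quantitative dependence on the modulus of continuity of $\nabla^2 f$ transparent, which may be reused later.
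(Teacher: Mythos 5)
Your argument is correct and is essentially the paper's own proof in a slightly more quantitative form: the paper applies the Peano-remainder Taylor expansion directly to each component $h_i(\bx)=[\nabla f(\bx)]_i$ at $\bx^*$ (i.e., the "softer" alternative you mention at the end), whereas you reach the same $o(\Vert\bx-\bx^*\Vert)$ bound via the integral mean value form and continuity of $\nabla^2 f$ on a neighborhood of $\bx^*$. Both hinge on exactly the same ingredient — differentiability of $\nabla f$ at $\bx^*$ with derivative $\nabla^2 f(\bx^*)$ — so there is nothing to flag.
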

\begin{proof}
	Denote $h_i(\bx) = [\nabla f(\bx)]_i$. Then by Assumption \ref{asm: 2-time diff}, $h_i$ is continuously differentiable over an open set containing $\bx^*$. Thus,
	\begin{align*}
	h_i(\bx) &= h_i(\bx^*) + \nabla h_i(\bx^*)^\top (\bx - \bx^*) + o(\Vert \bx - \bx^* \Vert) 
	= \nabla h_i(\bx^*)^\top (\bx - \bx^*) + o(\Vert \bx - \bx^* \Vert).
	\end{align*}
	Therefore,
	\begin{align*}
	[\bv(\bx, \bx^*)]_i &= h_i(\bx) -  \sum_{j=1}^{d} \frac{\partial^2f}{\partial x_i \partial x_j}(\bx^*) [\bx - \bx^*]_j 
	= h_i(\bx) - \nabla h_i(\bx^*)^\top (\bx - \bx^*) = o(\Vert \bx - \bx^* \Vert).
	\end{align*}
\end{proof}

Let us define $u(r) := \max_{\Vert \bx-\bx^* \Vert \leq r} \Vert\bv(\bx,\bx^*)\Vert$. We have $u(r)=o(r)$.

\begin{theorem}[\cite{karimi2016linear}, Theorem 1] \label{thm: quadratic growth (QG)}
	Under Assumptions \ref{asm: smoothness} and \ref{asm: PL}, the following inequality, known as the quadratic growth (QG) condition holds:
	\begin{align*}
	\Vert \bx - \bx^*\Vert^2 \leq \frac{2}{\mu} (f(\bx) - f^*).
	\end{align*}
\end{theorem}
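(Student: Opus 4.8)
This is precisely the ``PL $\Rightarrow$ quadratic growth'' implication of \cite{karimi2016linear}, and the plan is to reprove it by following the gradient-flow trajectory emanating from $\bx$. First reduce to the case $f(\bx) > f^*$: if $f(\bx) = f^*$ then $\bx$ is a global minimizer, so by the uniqueness clause in Assumption~\ref{asm: PL} we have $\bx = \bx^*$ and the inequality reads $0 \le 0$. Now fix $\bx$ with $f(\bx) > f^*$ and let $\bx(\cdot)$ solve $\dot\bx(t) = -\nabla f(\bx(t))$, $\bx(0) = \bx$. By Assumption~\ref{asm: smoothness} the field $-\nabla f$ is Lipschitz, so a unique maximal solution exists, and the uniform path-length bound derived below confines the trajectory to a fixed ball, which rules out finite-time blow-up and yields existence on all of $[0,\infty)$.

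Put $g(t) := \sqrt{f(\bx(t)) - f^*}$. Along the flow, $\frac{d}{dt}(f(\bx(t)) - f^*) = -\|\nabla f(\bx(t))\|^2$, which by Assumption~\ref{asm: PL} is $\le -2\mu (f(\bx(t)) - f^*)$ and by smoothness ($\|\nabla f\|^2 \le 2L(f - f^*)$ since $\nabla f(\bx^*)=0$) is $\ge -2L(f(\bx(t)) - f^*)$; hence $e^{-2Lt}(f(\bx) - f^*) \le f(\bx(t)) - f^* \le e^{-2\mu t}(f(\bx) - f^*)$, so $g(t) > 0$ for every finite $t$ and $g(t) \to 0$. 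The core estimate is that, using $g'(t) = -\|\nabla f(\bx(t))\|^2 / (2 g(t))$ and the PL inequality in the form $\|\nabla f(\bx(t))\| \ge \sqrt{2\mu}\,g(t)$,
\[
\|\nabla f(\bx(t))\| = \frac{\|\nabla f(\bx(t))\|^2}{\|\nabla f(\bx(t))\|} = \frac{-2 g(t) g'(t)}{\|\nabla f(\bx(t))\|} \le \frac{-2 g(t) g'(t)}{\sqrt{2\mu}\,g(t)} = -\sqrt{\tfrac{2}{\mu}}\;g'(t).
\]
Since $\|\dot\bx(t)\| = \|\nabla f(\bx(t))\|$, integrating from $0$ to $T$ gives
\[
\|\bx(T) - \bx\| \le \int_0^T \|\nabla f(\bx(t))\|\,dt \le -\sqrt{\tfrac{2}{\mu}} \int_0^T g'(t)\,dt = \sqrt{\tfrac{2}{\mu}}\,(g(0) - g(T)) \le \sqrt{\tfrac{2}{\mu}}\;g(0).
\]
This is the promised uniform a priori bound, and since it shows the path length is finite, $\bx(T)$ is Cauchy as $T\to\infty$, so $\bx(T) \to \bx_\infty$ with $f(\bx_\infty) = f^*$ by continuity, hence $\bx_\infty = \bx^*$ by uniqueness. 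Letting $T \to \infty$ (so $g(T) \to 0$) yields $\|\bx - \bx^*\| \le \sqrt{(2/\mu)(f(\bx) - f^*)}$, and squaring gives the claim.

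I expect the only real friction here to be the analytic hygiene rather than the inequality itself: justifying global existence of the flow (handled by the a priori bound above fed back into the standard ODE escape lemma) and checking that the chain-rule manipulation of $g = \sqrt{\cdot}$ is legitimate, which it is because the exponential lower bound keeps $g > 0$ at every finite time. A purely discrete route would instead run gradient descent with step $1/L$ from $\bx$, combine the descent lemma with PL to get geometric decay of $a_k := f(\bx_k) - f^*$, bound each step by $\|\bx_{k+1} - \bx_k\| \lesssim (a_k - a_{k+1})/\sqrt{a_k} \le 2(\sqrt{a_k} - \sqrt{a_{k+1}})$, and telescope; this avoids the ODE entirely but produces a worse absolute constant, so to obtain the sharp factor $2/\mu$ I would present the gradient-flow version.
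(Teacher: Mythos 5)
The paper does not prove this statement at all---it is imported verbatim as Theorem~1 of \cite{karimi2016linear}---so there is no internal proof to compare against. Your gradient-flow argument is correct and is essentially the standard proof from that reference: the reparametrization $g(t)=\sqrt{f(\bx(t))-f^*}$, the PL lower bound on $\Vert\nabla f\Vert$ to control path length by $\sqrt{2/\mu}\,(g(0)-g(T))$, and convergence of the trajectory to the unique minimizer all check out, and the analytic caveats you flag (positivity of $g$ at finite times via the $e^{-2Lt}$ lower bound, global existence from the globally Lipschitz field) are handled correctly.
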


\begin{lemma}\label{lem: A> mu}
	Under Assumptions \ref{asm: smoothness}, \ref{asm: PL} and \ref{asm: 2-time diff} we have,
	\[  \nabla^2 f(\bx^*)  \succeq \mu.\]
\end{lemma}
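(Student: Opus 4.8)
The plan is to use the twice differentiability of $f$ near $\bx^*$ together with the quadratic growth condition of Theorem~\ref{thm: quadratic growth (QG)}. Since $f$ is continuously differentiable and $\bx^*$ is the (unique) global minimizer, $\nabla f(\bx^*)=\mathbf{0}$, so a second-order Taylor expansion of $f$ about $\bx^*$ has no first-order term; the idea is to sandwich the resulting quadratic form $(\bx-\bx^*)^\top\nabla^2 f(\bx^*)(\bx-\bx^*)$ from below by $\mu\|\bx-\bx^*\|^2$ using quadratic growth, and then let $\bx\to\bx^*$ along an arbitrary direction to read off the eigenvalue bound.

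First I would establish the expansion $f(\bx)-f^*=\tfrac12(\bx-\bx^*)^\top\nabla^2 f(\bx^*)(\bx-\bx^*)+o(\|\bx-\bx^*\|^2)$ as $\bx\to\bx^*$. One way is to write $f(\bx)-f^*=\int_0^1\langle\nabla f(\bx^*+t(\bx-\bx^*)),\bx-\bx^*\rangle\,dt$, substitute $\nabla f(\bx^*+t(\bx-\bx^*))=t\,\nabla^2 f(\bx^*)(\bx-\bx^*)+\bv(\bx^*+t(\bx-\bx^*),\bx^*)$, note that the leading term integrates to $\tfrac12(\bx-\bx^*)^\top\nabla^2 f(\bx^*)(\bx-\bx^*)$, and bound the remainder by $\int_0^1 u(t\|\bx-\bx^*\|)\,\|\bx-\bx^*\|\,dt\le u(\|\bx-\bx^*\|)\,\|\bx-\bx^*\|$, using that $u$ is nondecreasing; since $u(r)=o(r)$ by Lemma~\ref{lem: linear estimate}, this remainder is $o(\|\bx-\bx^*\|^2)$. (Alternatively, this is simply the Peano form of Taylor's theorem for a $C^2$ function, using continuity of $\nabla^2 f$ at $\bx^*$.)

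Next, fix a unit vector $\bu\in\R^d$ and apply the expansion along $\bx=\bx^*+t\bu$ with $t\downarrow 0$. Theorem~\ref{thm: quadratic growth (QG)} gives $\tfrac{\mu}{2}t^2\le f(\bx)-f^*=\tfrac{t^2}{2}\,\bu^\top\nabla^2 f(\bx^*)\bu+o(t^2)$; dividing by $t^2/2$ and letting $t\downarrow 0$ yields $\bu^\top\nabla^2 f(\bx^*)\bu\ge\mu$. Since $\bu$ is an arbitrary unit vector, this is exactly $\nabla^2 f(\bx^*)\succeq\mu I$, the claim.

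The only delicate point is controlling the Taylor remainder so that it is genuinely quadratically small and uniform in the direction $\bu$; this is precisely what the quantity $u(r)=\max_{\|\bx-\bx^*\|\le r}\|\bv(\bx,\bx^*)\|$ introduced just before the lemma (nondecreasing, with $u(r)=o(r)$) provides, so the substantive work is already in place. Everything else is elementary.
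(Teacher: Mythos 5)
Your proof is correct and takes essentially the same route as the paper's (which follows Beck, Theorem 2.26): a second-order Taylor expansion of $f$ about $\bx^*$, the quadratic growth bound from Theorem~\ref{thm: quadratic growth (QG)} applied along a ray $\bx^*+t\bu$, and a limit $t\downarrow 0$ for an arbitrary direction $\bu$. The only cosmetic difference is that you use the Peano form of the remainder (controlled via $u(r)=o(r)$) whereas the paper uses the Lagrange mean-value form with the Hessian at an intermediate point plus continuity of $\nabla^2 f$; both are valid under Assumption~\ref{asm: 2-time diff}.
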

\begin{proof}
	The result is established by using the \emph{linear approximation theorem} on a sequence of points converging to $x^*$ on a line, continuity of Hessian as well as the quadratic growth from Theorem \ref{thm: quadratic growth (QG)}. Similar approach can be found in the proof of Theorem 2.26 \cite{beck2014introduction}.
\end{proof}

Next, we state a Theorem from \cite{madden2020high} which we will use frequently in the rest of our results. 
\begin{theorem}[\cite{madden2020high}, Theorem 4 and 13] \label{thm: concentration}
	Under Assumptions \ref{asm: smoothness}, \ref{asm: PL} and \ref{asm: noise sub gaussian},
	SGD with step-size sequence $\{\eta_t\} = \{\theta_t\}$ defined in \eqref{eq: theta}, constructs a sequence of $\{\bx^t\}$ such that there exist $C_1,C_2>0$ such that for $t\geq t_0$,
	\begin{align*}
	\E[f(\bx^t)] - f^* = C_1 \frac{L \sigma^2}{\mu^2 t},
	\end{align*}
	and w.p. $\geq 1-\delta$ for all $\delta \in (0,1/e)$,
	\begin{align*}
	f(\bx^t) - f^* \leq C_2 \frac{L\sigma^2 \log(e/\delta)}{\mu^2 t}.
	\end{align*}
\end{theorem}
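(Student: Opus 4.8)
The plan is to handle the two displays by different means: the in-expectation rate is a routine SGD-under-PL induction, whereas the high-probability bound is the genuine work and rests on a martingale moment-generating-function (MGF) argument that exploits the sub-Gaussian noise. Throughout, write $\bx^{t+1}=\bx^t-\theta_t\hbg^t$ with $\hbg^t=\nabla f(\bx^t)+\bw^t$, and abbreviate $Y_t:=f(\bx^t)-f^*$, $\xi^t:=\E[Y_t]$.

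For the expectation bound I would start from the $L$-smoothness descent inequality, condition on $\bx^t$, use $\E[\bw^t\mid\bx^t]=0$ and $\E[\Vert\bw^t\Vert^2\mid\bx^t]\le\sigma^2$, and invoke $\theta_t\le 1/L$ to secure a factor $1/2$ on the gradient term, obtaining
\[
\E[f(\bx^{t+1})\mid\bx^t]\le f(\bx^t)-\tfrac{\theta_t}{2}\Vert\nabla f(\bx^t)\Vert^2+\tfrac{L}{2}\theta_t^2\sigma^2 .
\]
Applying the PL inequality (Assumption~\ref{asm: PL}) turns this into the scalar recursion $\xi^{t+1}\le(1-\mu\theta_t)\xi^t+\tfrac{L}{2}\theta_t^2\sigma^2$. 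With the prescribed $\theta_t=2t/(\mu(t+1)^2)$ one computes $1-\mu\theta_t=(t^2+1)/(t+1)^2$, and an induction on the ansatz $\xi^t\le C_1 L\sigma^2/(\mu^2 t)$ closes: multiplying the target inequality by $(t+1)^2$ reduces it to $2Lt^2\sigma^2/(\mu^2(t+1)^2)\le C_1 L\sigma^2(t-1)/(\mu^2 t)$, which holds for every $t\ge 2$ as soon as $C_1$ is a large enough absolute constant, the base case at $t=t_0$ being furnished by the constant-step warm-up phase, which drives $\xi^{t_0}$ down to the $O(L\sigma^2/(\mu^2 t_0))$ scale.

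For the high-probability statement I would not take expectations but unroll the stochastic recursion. Expanding the same descent inequality without conditioning-away the noise gives
\[
Y_{t+1}\le(1-\mu\theta_t)Y_t-\theta_t(1-L\theta_t)\langle\nabla f(\bx^t),\bw^t\rangle+\tfrac{L}{2}\theta_t^2\Vert\bw^t\Vert^2 .
\]
The middle term is a martingale difference whose conditional law is sub-Gaussian with variance proxy $O(\theta_t^2\Vert\nabla f(\bx^t)\Vert^2\sigma^2)$, and the last term is a nonnegative sub-exponential term of conditional scale $O(\theta_t^2\sigma^2)$. The decisive structural fact is \emph{self-boundedness}: smoothness gives $\Vert\nabla f(\bx^t)\Vert^2\le 2LY_t$, so the variance proxy of the martingale term is $O(\theta_t^2 L\sigma^2 Y_t)$, i.e.\ itself controlled by the current value $Y_t$. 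I would then study the weighted MGF $\E[\exp(\lambda\,t\,Y_t)]$ and show it is an approximate supermartingale: conditioning on the filtration $\F^t$ and bounding the conditional MGF of each noise term, the $Y_t$-proportional variance contribution $\lambda^2 O(\theta_t^2 L\sigma^2 Y_t)$ is absorbed into the contraction $-\lambda\mu\theta_t\,t\,Y_t$ whenever $\lambda$ lies below a threshold of order $\mu^2 t/(L\sigma^2)$, while the sub-exponential $\Vert\bw^t\Vert^2$ term is treated by a Bernstein-type bound valid for $\lambda$ below a scale $\sim 1/\sigma^2$ and contributes only a deterministic drift of order $\lambda\cdot L\sigma^2/\mu^2$. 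Telescoping over $t_0\le s\le t$ yields $\E[\exp(\lambda\,t\,Y_t)]\le\exp(\lambda\,C L\sigma^2/\mu^2)$ for all admissible $\lambda$, and a Chernoff bound $\Pr(tY_t\ge s)\le\exp(-\lambda s+\lambda C L\sigma^2/\mu^2)$ optimized over $\lambda\in(0,\lambda_{\max}]$ delivers $f(\bx^t)-f^*\le C_2 L\sigma^2\log(e/\delta)/(\mu^2 t)$ with probability at least $1-\delta$; the $\log(e/\delta)$ factor is precisely the price of the cap $\lambda_{\max}$ in the optimization.

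The main obstacle is the MGF recursion of the previous paragraph: one must simultaneously (i) absorb the self-bounding, $Y_t$-dependent variance of the linear martingale term into the contraction factor, which constrains $\lambda$ from above in a $t$-dependent way, and (ii) control the sub-exponential quadratic noise $\Vert\bw^t\Vert^2$ inside the same exponential-moment computation, valid only below a $\sigma$-dependent scale. Reconciling these two caps on $\lambda$ and tracking the weight $c_t=t$ so that the deterministic drift sums to exactly $O(L\sigma^2/\mu^2)$, rather than accumulating a spurious $\log t$, is the delicate part; everything else is bookkeeping. I note finally that the first display is written as an equality, but I would establish only the matching upper bound, which is all that the subsequent OSA analysis actually invokes.
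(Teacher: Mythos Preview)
The paper does not prove this theorem at all: it is quoted verbatim from \cite{madden2020high} (Theorems~4 and~13 there) and used as a black box in the OSA analysis. So there is no ``paper's own proof'' to compare against; the authors simply import the result.

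That said, your sketch is essentially the argument of \cite{madden2020high}. The in-expectation part is the standard PL descent recursion and is fine. For the high-probability part you have identified the two genuine ingredients: (i) the self-bounding structure $\Vert\nabla f(\bx^t)\Vert^2\le 2LY_t$, which lets the variance proxy of the martingale increment be absorbed into the contraction $(1-\mu\theta_t)$, and (ii) the sub-exponential control of $\Vert\bw^t\Vert^2$, which forces a cap on $\lambda$ and yields the $\log(e/\delta)$ dependence after Chernoff. The delicate reconciliation of the two $\lambda$-caps and the choice of weights $c_t$ so that the drift telescopes to $O(L\sigma^2/\mu^2)$ without a spurious $\log t$ is exactly what \cite{madden2020high} carry out; your outline is on the right track but you would need to execute that MGF recursion carefully to close it. Your observation that only the upper bound in the first display is ever used downstream is also correct.
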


\begin{lemma}
	Under Assumptions \ref{asm: smoothness} and \ref{asm: 2-time diff} we have,
	\begin{align}\label{eq: v linear growth}
		\Vert \bv(\bx, \bx^*) \Vert \leq 2L\Vert \bx - \bx^*\Vert.
	\end{align}
\end{lemma}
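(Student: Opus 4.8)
The plan is to expand the definition of $\bv$, use that $\nabla f(\bx^*) = \mathbf{0}$ since $\bx^*$ is the minimizer, and then split the resulting expression with the triangle inequality. Explicitly, $\bv(\bx,\bx^*) = \nabla f(\bx) - \nabla f(\bx^*) - \nabla^2 f(\bx^*)(\bx - \bx^*) = \bigl(\nabla f(\bx) - \nabla f(\bx^*)\bigr) - \nabla^2 f(\bx^*)(\bx - \bx^*)$, so that
\[
\Vert \bv(\bx,\bx^*) \Vert \leq \Vert \nabla f(\bx) - \nabla f(\bx^*) \Vert + \Vert \nabla^2 f(\bx^*)(\bx - \bx^*) \Vert .
\]
The first term is at most $L\Vert \bx - \bx^* \Vert$ directly by Assumption~\ref{asm: smoothness}. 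For the second term it suffices to show $\Vert \nabla^2 f(\bx^*) \Vert \leq L$ (spectral norm), after which the claim $\Vert \bv(\bx,\bx^*)\Vert \le 2L\Vert\bx-\bx^*\Vert$ follows immediately.

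The only step requiring a small argument is the bound $\Vert \nabla^2 f(\bx^*) \Vert \leq L$, since Assumption~\ref{asm: 2-time diff} only gives twice differentiability on a neighborhood of $\bx^*$, not globally. I would argue via difference quotients: for any unit vector $\bu \in \R^d$, twice differentiability at $\bx^*$ gives $\nabla^2 f(\bx^*)\bu = \lim_{s \to 0}\, s^{-1}\bigl(\nabla f(\bx^* + s\bu) - \nabla f(\bx^*)\bigr)$, and by Assumption~\ref{asm: smoothness} each quotient has norm at most $s^{-1}\, L\, \Vert s\bu \Vert = L$; taking $s \to 0$ gives $\Vert \nabla^2 f(\bx^*)\bu \Vert \leq L$, hence $\Vert \nabla^2 f(\bx^*) \Vert \leq L$.

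I do not anticipate any real obstacle here; this is a routine lemma whose only subtlety is making sure the Hessian norm bound uses only local twice differentiability plus global Lipschitz gradients rather than a global $C^2$ hypothesis. A one-line alternative for the second term, avoiding the operator-norm detour, is to note that the one-sided directional derivative of $\nabla f$ at $\bx^*$ along $\bx - \bx^*$ equals $\nabla^2 f(\bx^*)(\bx-\bx^*)$ and is a limit of increments of $\nabla f$, each controlled by $L\Vert \bx - \bx^*\Vert$; either route closes the proof.
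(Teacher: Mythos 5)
Your proposal is correct and follows essentially the same route as the paper: write $\bv(\bx,\bx^*)=\nabla f(\bx)-\nabla^2 f(\bx^*)(\bx-\bx^*)$ using $\nabla f(\bx^*)=\mathbf{0}$, apply the triangle inequality, bound the first term by $L\Vert\bx-\bx^*\Vert$ via Assumption~\ref{asm: smoothness}, and bound the second via $\Vert\nabla^2 f(\bx^*)\Vert\leq L$. The paper simply asserts the Hessian norm bound, whereas you supply the (correct) difference-quotient justification, which is a welcome addition but not a different argument.
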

\begin{proof}
	We have,
	\begin{align*}
	\Vert \bv(\bx, \bx^*) \Vert &= \Vert \nabla f(\bx) - \nabla^2 f(\bx^*) (\bx - \bx^*) \Vert  \\
	&\leq \Vert \nabla f(\bx) \Vert + \Vert \nabla^2 f(\bx^*)(\bx - \bx^*) \Vert  \\
	&\leq L\Vert \bx - \bx^* \Vert + \Vert \nabla^2 f(\bx^*) \Vert_2 \Vert \bx - \bx^*\Vert  \\
	&\leq 2L\Vert \bx - \bx^*\Vert, 
	\end{align*}
	where we used $\Vert \nabla^2 f(\bx^*) \Vert \leq L$ in the last inequality.
\end{proof}

The following lemma is the key result we need to show the asymptotic performance of OSA.
\begin{lemma}\label{lem: vt o(1/t)}
	Under Assumptions \ref{asm: smoothness}, \ref{asm: PL}, \ref{asm: 2-time diff} and \ref{asm: noise sub gaussian} and steps-size sequence $\{ \eta_t\} = \{ \theta_t\}$ defined in \eqref{eq: theta}, we have
	\begin{enumerate}
		\item $ \E[\Vert \bv_i^t \Vert^2] = o(\frac{1}{t})$,
		\item $ \E [\Vert \bv_i^t \Vert \Vert \bx_i^t - \bx^* \Vert] = o(\frac{1}{t})$.
	\end{enumerate}
\end{lemma}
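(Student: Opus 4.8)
The plan is to bound $\E[\Vert \bv_i^t \Vert^2]$ and $\E[\Vert \bv_i^t \Vert \Vert \bx_i^t - \bx^* \Vert]$ by splitting the expectation over the event where the iterate $\bx_i^t$ is close to $\bx^*$ and its complement. On the ``good'' event, $\Vert \bx_i^t - \bx^* \Vert$ is small, so by Lemma \ref{lem: linear estimate} (equivalently, $u(r) = o(r)$) the residual $\bv_i^t$ is of smaller order than the distance to the optimum; on the ``bad'' event, we do not have a pointwise bound on $\bv_i^t$ but we can use the crude linear bound \eqref{eq: v linear growth} together with the fact that this event is exponentially unlikely, which kills the $1/t$ rate. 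The concentration result Theorem \ref{thm: concentration} (from \cite{madden2020high}) is exactly what makes this split work, since each worker runs plain SGD with the step-size $\{\theta_t\}$ until the single averaging step.

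Concretely, first I would fix a radius $r>0$ and define the event $\mathcal{A}_t = \{ \Vert \bx_i^t - \bx^* \Vert \leq r \}$. On $\mathcal{A}_t$ we have $\Vert \bv_i^t \Vert \leq u(\Vert \bx_i^t - \bx^* \Vert)$, and since $u(s)/s \to 0$ as $s \to 0$, for any $\varepsilon > 0$ we may choose $r$ small enough that $u(s) \leq \varepsilon s$ for all $s \leq r$; hence $\Vert \bv_i^t \Vert^2 \1_{\mathcal{A}_t} \leq \varepsilon^2 \Vert \bx_i^t - \bx^* \Vert^2$. Taking expectations and using the quadratic growth condition (Theorem \ref{thm: quadratic growth (QG)}) gives $\E[\Vert \bx_i^t - \bx^* \Vert^2] \leq \tfrac{2}{\mu}(\E[f(\bx_i^t)] - f^*) = \tfrac{2 C_1 L \sigma^2}{\mu^3 t}$ by Theorem \ref{thm: concentration}, so the ``good'' part contributes at most $O(\varepsilon^2/t)$. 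Since $\varepsilon$ was arbitrary, this part is $o(1/t)$. For the ``bad'' part, I would use \eqref{eq: v linear growth} to write $\Vert \bv_i^t \Vert^2 \1_{\mathcal{A}_t^c} \leq 4L^2 \Vert \bx_i^t - \bx^* \Vert^2 \1_{\mathcal{A}_t^c}$ and then apply Cauchy--Schwarz: $\E[\Vert \bx_i^t - \bx^* \Vert^2 \1_{\mathcal{A}_t^c}] \leq \big(\E[\Vert \bx_i^t - \bx^*\Vert^4]\big)^{1/2}\big(\P(\mathcal{A}_t^c)\big)^{1/2}$. The fourth moment is $O(1/t^2)$ by a standard higher-moment SGD bound (or by integrating the sub-Gaussian-type tail from Theorem \ref{thm: concentration}), while by quadratic growth $\P(\mathcal{A}_t^c) \leq \P\big(f(\bx_i^t) - f^* > \tfrac{\mu r^2}{2}\big)$, which by the high-probability bound in Theorem \ref{thm: concentration} decays faster than any polynomial in $t$ (setting $\log(e/\delta) \asymp \tfrac{\mu^3 r^2 t}{2 C_2 L \sigma^2}$ gives $\delta = O(e^{-ct})$). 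Hence the ``bad'' part is $o(1/t)$, and summing the two parts proves item~1. Item~2 follows identically: on $\mathcal{A}_t$, $\Vert \bv_i^t \Vert \Vert \bx_i^t - \bx^*\Vert \leq \varepsilon \Vert \bx_i^t - \bx^*\Vert^2$, and on $\mathcal{A}_t^c$ we bound $\Vert \bv_i^t \Vert \Vert \bx_i^t - \bx^*\Vert \leq 2L\Vert \bx_i^t - \bx^*\Vert^2$ and repeat the Cauchy--Schwarz tail argument.

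The main obstacle I anticipate is making the tail estimate on $\mathcal{A}_t^c$ rigorous: Theorem \ref{thm: concentration} is stated with a restriction $\delta \in (0, 1/e)$, so for the exponential tail to apply one needs $t$ large enough that the chosen $\delta$ lies in that range — this is fine asymptotically (which is all that $o(1/t)$ requires) but needs to be stated carefully. A secondary technical point is justifying the $O(1/t^2)$ bound on $\E[\Vert \bx_i^t - \bx^*\Vert^4]$; one clean route is to integrate the high-probability bound of Theorem \ref{thm: concentration} over $\delta$, i.e.\ $\E[(f(\bx^t)-f^*)^2] = \int_0^\infty \P((f(\bx^t)-f^*)^2 > s)\,ds$, splitting the integral at $s \asymp 1/t^2$ and using the tail bound beyond that, then invoking quadratic growth once more. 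Everything else is routine.
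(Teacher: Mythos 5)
Your proposal is correct, and the overall skeleton matches the paper's proof: both split $\E[\Vert\bv_i^t\Vert^2]$ at a radius where $u(r)\le\sqrt{\epsilon}\,r$ (using Lemma~\ref{lem: linear estimate}), bound the near part by $\epsilon\,\E[\Vert\bx_i^t-\bx^*\Vert^2]=O(\epsilon/t)$ via quadratic growth (Theorem~\ref{thm: quadratic growth (QG)}) and the in-expectation bound of Theorem~\ref{thm: concentration}, and control the far part using the crude linear bound \eqref{eq: v linear growth}. Where you genuinely diverge is the tail term $\E[\Vert\bx_i^t-\bx^*\Vert^2\,\1_{\mathcal{A}_t^c}]$: the paper rewrites it as an integral of the quantile function $F_{(r_i^t)^2}^{-1}(1-\delta)$, dominates that by $\tfrac{2}{\mu}\int_0^{J_t^{-1}(\mu s^2/2)}J_t(\delta)\,d\delta$, and invokes a dedicated calculus lemma (Lemma~\ref{lem: int o(t)}) to show this integral is $o(1/t)$; you instead apply Cauchy--Schwarz, pairing a fourth-moment bound $\E[\Vert\bx_i^t-\bx^*\Vert^4]=O(1/t^2)$ with the exponentially small probability $\Pr(\mathcal{A}_t^c)=O(e^{-ct})$ extracted from the high-probability bound. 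Both routes close the argument. Yours costs you one extra ingredient not present in the paper --- the fourth-moment bound --- but your sketch of how to obtain it (integrating the tail of $f(\bx_i^t)-f^*$ from Theorem~\ref{thm: concentration}, splitting at the threshold $v\asymp 1/t$ below which $\delta\ge 1/e$ and the bound is unavailable) is sound and should be written out explicitly. In exchange you avoid the paper's change-of-variables manipulation with densities and inverse CDFs, which implicitly assumes $r_i^t$ admits a PDF; your version is arguably cleaner on that point. The one caveat you already flag correctly: the constant $c$ in the exponential tail depends on the radius $r=r(\varepsilon)$, so the conclusion is of the form ``for every $\varepsilon$ there is $T(\varepsilon)$ such that the bound is $\le C\varepsilon/t$ for $t\ge T(\varepsilon)$,'' which is exactly what $o(1/t)$ requires.
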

\begin{proof}
	Let us define $u(r) := \max_{\Vert \bx-\bx^* \Vert \leq r} \Vert\bv(\bx,\bx^*)\Vert$. By Lemma \ref{lem: linear estimate} we have $u(r)=o(r)$. 
	Also define random variable $r_i^t = \Vert \bx_i^t - \bx^* \Vert$. 
	
	Since $u(r) = o(r)$, for any $\epsilon>0$ there exists $s>0$ such that for $r \leq s$, $u(r) \leq \sqrt{\epsilon} r$ or $u(r)^2 \leq \epsilon r^2$. We have,
	\begin{align}
	\E[\Vert \bv_i^t \Vert^2] &= \E_{\bx_i^t}[\Vert \bv(\bx_i^t, \bx^*) \Vert^2] \leq \E_{r_i^t} [u(r_i^t)^2] \nonumber \\
	&= \int_{0}^{\infty} u(r)^2 p_{r_i^t}(r) dr \nonumber \\
	&= \int_{0}^s u(r)^2 p_{r_i^t}(r)dr + \int_{s}^{\infty} u(r)^2 p_{r_i^t}(r)dr  \nonumber\\
	&\leq \epsilon\int_{0}^s r^2 p_{r_i^t}(r)dr + 4L^2 \int_{s}^{\infty} r^2 p_{r_i^t}(r)dr \nonumber \\
	&= \epsilon \E[(r_i^t)^2] + (4L^2-\epsilon)\int_{s}^{\infty} r^2 p_{r_i^t}(r)dr, \label{eq: v2 bound1}
	\end{align}
	where $p_X$ denotes the Probability Density Function (PDF) for random variable $X$ and we used $u(r) \leq 2Lr$ from \eqref{eq: v linear growth}.
	
	Without loss of generality, we assume $t\geq t_0$ for the rest of the proof.
	By Theorems \ref{thm: quadratic growth (QG)} and \ref{thm: concentration} we have,
	\[ \E \left[(r_i^t)^2 \right] = \E \left[ \left\Vert \bx_i^t - \bx^* \right\Vert^2 \right] \leq  \frac{2}{\mu} \E[f(\bx_i^t) - f^*] \leq 
	\frac{2C_1 L \sigma^2}{\mu^2 t}=\O\left( \frac{1}{t} \right). \]
	Moreover, define $J_t(\delta) := C_2 L \sigma^2 \log (e/\delta)/(\mu^2 t)$. Then,
	\begin{align}\label{eq: F < 1 - delta}
	\Pr \left((r_i^t)^2 \leq \frac{2J_t(\delta)}{\mu} \right) \geq \Pr \left(f(\bx_i^t) - f^* \leq J_t(\delta) \right) \geq 1-\delta, \qquad \text{for }  \delta \in (0,1/e),
	\end{align}
	or, 
	\begin{align}\label{eq: F and J}
	F_{(r_i^t)^2}^{-1}(1-\delta) \leq \frac{2J_t(\delta)}{\mu}, \qquad \text{for }  \delta \in (0,1/e),
	\end{align}
	where $F_X$ denotes the Cumulative Distribution Function (CDF) for random variable $X$.
	Since $\lim_{t \rightarrow \infty} J_t(\delta) = 0$, $\exists t_1 \geq t_0$ such that for $t\geq t_1$, $J_t^{-1}(\mu s^2/2) \in (0,1/e)$. It follows that,
	\begin{align}
	{\int_{s}^{\infty} r^2 p_{r_i^t}(r)dr} & = 	{\int_{s^2}^{\infty} r^2 p_{(r_i^t)^2}(r^2)dr^2} 
	= {\int_{s^2}^{\infty} r^2 dF_{(r_i^t)^2}(r^2)} \nonumber \\
	&= \tcb{\int_{F_{(r_i^t)^2}(s^2)}^{1} F_{(r_i^t)^2}^{-1}(x) dx} 
	= \tcb{\int_{1-F_{(r_i^t)^2}(s^2)}^{0} - F_{(r_i^t)^2}^{-1}(1-\delta) d\delta} \nonumber \\
	&=\tcb{\int_{0}^{1-F_{(r_i^t)^2}(s^2)} F_{(r_i^t)^2}^{-1}(1-\delta) d\delta} \nonumber \\
	&\leq \frac{2}{\mu} \int_{0}^{1-F_{(r_i^t)^2}(s^2)} J_t(\delta)d\delta \leq \tcr{\frac{2}{\mu} \int_{0}^{J_t^{-1}(\frac{\mu s^2}{2})} J_t(\delta)d\delta}. \label{eq: integral r^2 bound}
	\end{align}
	In the equation above, we switched from Probability Density Function (PDF) $p_{r_i^t}$ to $p_{(r_i^t)^2}$ in the first equality. In the next equality we used $p_X = d F_X/ dX$ that holds for any continuous random variable $X$. In third equality, we simply changed variable to $x=F_{(r_i^t)^2(r^2)}$ and without loss of generality we define $F_X^{-1}(y) := \inf \{x | F_X(x) \geq y \}$. In the next equation, again, we simply changed variable to $\delta = 1 - x$. Finally, in the last two inequalities, we used \eqref{eq: F and J} and
	a direct result of \eqref{eq: F < 1 - delta}, $1-F_{(r_i^t)^2}(s^2) \leq J_t^{-1}(\frac{\mu s^2}{2})$ (see Figure \ref{fig: integrals}). 
	
		\begin{figure}
	
		\centering
		\begin{tikzpicture}
		\begin{axis}[
		ymin = 0.5, ymax=1,
		xmin = 2, xmax=10,
		xlabel = {$r^2$}, 
		ylabel = {$1-\delta$},
		legend pos=south east,
		xtick = {2,6,10},
		extra x ticks = {4,5},
		extra x tick labels={$\frac{2J(1/e)}{\mu}$,$s^2$},
		ytick = {0.5,1},
		extra y ticks = {1-exp(-1) ,0.8895, 0.7769},
		extra y tick labels={$1 - 1/e$, $F_{(r_i^t)^2}(s^2)$, $1 - J_t^{-1}(\frac{\mu s^2}{2})$}
		]
		\addplot[
		color=red,
		domain=4:10,
		thick,
		]
		{max(1 - exp(1 - x/2),1 - exp(-1))};
		\addlegendentry{$1 - J_t^{-1}(\frac{\mu r^2}{2})$}
		
		\addplot[
		color=blue,
		domain=2:10,
		thick,
		]
		{max(1 - exp(0.5-x/1.85),min(x*x/12,0.6))};
		\addlegendentry{$F_{(r_i^t)^2}(r^2)$}
		
		\addplot[
		xbar interval,
		color=red,
		bar width=0.02,
		domain=5:10,
		fill=red, opacity=0.2,
		samples=20
		]
		{1 - exp(1-x/2)};
		
		\addplot[
		xbar interval,
		color=blue,
		bar width=0.015,
		domain=5:10,
		fill=blue, opacity=0.5,
		samples=15
		]
		{1 - exp(0.5-x/1.85)};
		
		\addplot[
		color=black,
		thick,
		dotted,
		]
		coordinates{(5,0.8895) (5,0.7769)(5,0.4)};
		
		\addplot[
		color=black,
		thick,
		dotted,
		]
		coordinates{(2,0.6321) (4,0.6321)(4,0.5)};
		
		
		\end{axis}
		\end{tikzpicture}
		\caption{Illustration of integrals in \eqref{eq: integral r^2 bound}}
		\label{fig: integrals}
	\end{figure}
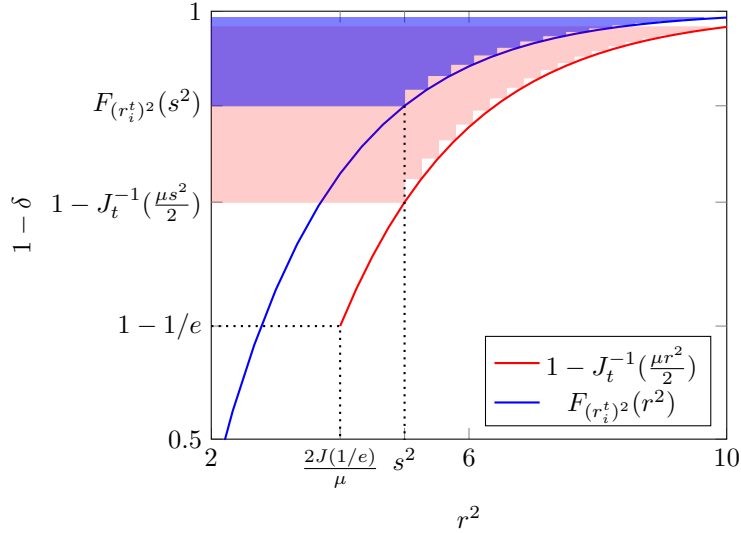
	
	By Lemma \ref{lem: int o(t)}, $\exists t_2 \geq t_1$ such that for $t\geq t_2$,
	$ \int_{0}^{J_t^{-1}(\mu s^2 / 2)} J_t(\delta)d\delta \leq \epsilon B_1/t$, where  $B_1 := 2C_2 L \sigma^2 e / \mu^2$.
	Combining with \eqref{eq: v2 bound1} we obtain,
	\begin{align*}
	\E[\Vert \bv_i^t \Vert^2 ] \leq  \frac{\epsilon}{t}\left( \frac{2C_1 L \sigma^2}{\mu^2} + \frac{16C_2L^3\sigma^2 e}{\mu^3}  \right), \qquad t \geq t_2.
	\end{align*}
	
	Next, we show $\E[\Vert \bv_i^t \Vert \Vert \bx_i^t - \bx^* \Vert] = o(1/t)$.
	Since $u(r) = o(r)$, for any $\epsilon>0$, there exists $s'>0$ such that for $r \leq s'$, $u(r) \leq \epsilon r$. Then,
	\begin{align*}
	\E[\Vert \bv_i^t \Vert \Vert \bx_i^t - \bx^* \Vert] &\leq \E[u(r_i^t) r_i^t] \\
	&= \int_{0}^{s'} u(r)r p_{r_i^t}(r) dr + \int_{s'}^{\infty} u(r)r p_{r_i^t}(r) dr \\
	& \leq \epsilon \int_{0}^{s'}  r^2 p_{r_i^t}(r) + 4L^2 \int_{s'}^{\infty} r^2 p_{r_i^t}(r)dr.
	\end{align*}
	Following the same steps from the first part of this proof, we obtain $\exists t_3>0$ such that,
	\begin{align*}
	\E[\Vert \bv_i^t \Vert \Vert \bx_i^t - \bx^* \Vert] \leq \frac{\epsilon}{t}\left( \frac{2C_1 L \sigma^2}{\mu^2} + \frac{16C_2L^3\sigma^2 e}{\mu^3}  \right), \qquad t \geq t_3.
	\end{align*}
	Since we could pick $\epsilon$ arbitrarily small, we showed that $\E[\Vert \bv_i^t \Vert^2 ] = o(1/t)$ and $\E[\Vert \bv_i^t \Vert \Vert \bx_i^t - \bx^* \Vert] = o(1/t)$.
\end{proof}

\begin{lemma}\label{lem: int o(t)}
	Let $q_t:(0,1/e) \rightarrow \R_+$ be defined as
	$q_t(\delta) = a_1 \log(e / \delta)/t $
	for some $a_1>0$ and $\forall t\geq 1$.
	Suppose $y \in range(q_t)$ for $t\geq t_1$, then for any $\epsilon>0$, there exists $t_2\geq t_1$ such that for any $t\geq t_2$,
	$$ \int_{0}^{q_t^{-1}(y)} q_t(\delta)d\delta \leq \frac{B \epsilon}{t},$$
	where $B = 2a_1 e$.
\end{lemma}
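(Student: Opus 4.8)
The plan is to carry out the integral in closed form; once that is done, the claim reduces to the elementary fact that $e^{-\alpha t}$ dominates any polynomial in $t$. First I would invert $q_t$. Since $q_t(\delta)=a_1\log(e/\delta)/t$ is continuous and strictly decreasing in $\delta$ on $(0,1/e)$, with $q_t(\delta)\to\infty$ as $\delta\to 0^+$ and $q_t(\delta)\to 2a_1/t$ as $\delta\to(1/e)^-$, its range is $(2a_1/t,\infty)$, and solving $a_1\log(e/\delta)/t=y$ gives $q_t^{-1}(y)=e^{1-yt/a_1}=:\delta_0(t)$. The hypothesis that $y\in\mathrm{range}(q_t)$ for every $t\ge t_1$ says exactly that $y>2a_1/t_1$, so $yt/a_1\ge yt_1/a_1>2$ and hence $\delta_0(t)<e^{-1}=1/e$ for all $t\ge t_1$; in particular the upper limit of integration stays inside the domain $(0,1/e)$ and $\delta_0(t)\to 0$ as $t\to\infty$.

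Next I would evaluate the integral. Writing $\log(e/\delta)=1-\log\delta$ and using $\int\log\delta\,d\delta=\delta\log\delta-\delta$ — the improper integral at $0$ converging because $\delta\log\delta\to 0$ — one gets $\int_0^{\delta_0}(1-\log\delta)\,d\delta=2\delta_0-\delta_0\log\delta_0$, so
\[
\int_0^{q_t^{-1}(y)} q_t(\delta)\,d\delta=\frac{a_1}{t}\,\delta_0(t)\bigl(2-\log\delta_0(t)\bigr)=\frac{a_1 e}{t}\,e^{-yt/a_1}\Bigl(1+\tfrac{yt}{a_1}\Bigr),
\]
where the last step substitutes $\log\delta_0(t)=1-yt/a_1$ and $\delta_0(t)=e\,e^{-yt/a_1}$.

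To finish, set $\alpha:=y/a_1>0$, a fixed constant. The bound above is $\frac{a_1 e}{t}\,e^{-\alpha t}(1+\alpha t)$, and it remains to force $e^{-\alpha t}(1+\alpha t)\le 2\epsilon$ for all large $t$. This is immediate: once $t$ is large enough that $1+\alpha t\le e^{\alpha t/2}$ (which holds for $t\ge 4/\alpha$ by comparing Taylor terms) we have $e^{-\alpha t}(1+\alpha t)\le e^{-\alpha t/2}\to 0$, so one chooses $t_2\ge t_1$ with $e^{-\alpha t/2}\le 2\epsilon$ for all $t\ge t_2$. Then $\int_0^{q_t^{-1}(y)} q_t(\delta)\,d\delta\le \frac{2a_1 e\,\epsilon}{t}=\frac{B\epsilon}{t}$ with $B=2a_1 e$, as desired.

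I do not expect a genuine obstacle here: the argument is a routine antiderivative computation followed by ``exponential beats polynomial.'' The only points that need a little care are checking that the range hypothesis really does keep $q_t^{-1}(y)$ inside $(0,1/e)$, so that the stated integral is legitimate, and noting that the integrand, although it blows up as $\delta\to 0^+$, is nonetheless integrable there.
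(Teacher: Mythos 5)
Your proposal is correct and follows essentially the same route as the paper's proof: invert $q_t$ to get $q_t^{-1}(y)=e^{1-yt/a_1}$, evaluate the integral in closed form (the paper writes the same value as $x_t(y+a_1/t)$, which equals your $\tfrac{a_1 e}{t}e^{-yt/a_1}(1+yt/a_1)$), and then conclude by ``exponential beats polynomial.'' Your added remarks that the range hypothesis keeps $q_t^{-1}(y)$ inside $(0,1/e)$ and that the integrand is integrable at $0$ are correct and slightly more careful than the paper, but do not change the argument.
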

\begin{proof}
	Define $x_t$ such that $q_t(x_t) = y$. Then,
	\begin{align}\label{eq: x to y}
	\frac{a_1 \log(e /x_t)}{t} = y \iff \log(\frac{e}{x_t})= \frac{y t}{a_1}	
	\iff x_t =  \exp(1 - \frac{yt}{a_1}).
	\end{align}
	
	Moreover, 
	\begin{align*}
	\int_{0}^{x_t} q_t(\delta)d\delta &= \frac{a_1}{t} \int_{0}^{x_t} \log(\frac{e}{\delta})d\delta \\
	&=\frac{a_1}{t}\left(x_t - x_t (\log(x_t) - 1)\right) \\
	&=\frac{a_1}{t}\left(x_t +  x_t (\frac{y t}{a_1})\right) 
	=x_t(y + \frac{a_1}{t}).,
	\end{align*}
	where we used \eqref{eq: x to y} in third equality.
	First, we note that for $t \geq a_1 /y$, we have $y + a_1 /t \leq 2y$.
	Next, we show that for $t$ large enough, $x_t \leq B\epsilon / (2yt)$ for some $B>0$. We have $\lim_{s\rightarrow \infty}\exp(s)/s = \infty$. Therefore $\exists s_0 \geq 1$ such that for $s\geq s_0$, $\exp(s)/s \geq 1/\epsilon$. Thus for $t \geq s_0 a_1 /y$ we have,
	\begin{align*}
	&\exp(\frac{ty}{a_1 }) \geq \frac{ty}{a_1 \epsilon},\\
	\Rightarrow & x_t =  \exp(1 - \frac{yt}{a_1}) \leq e (\frac{a_1 \epsilon}{t y }) = \frac{B \epsilon}{2yt},
	\end{align*}
	where $B := 2 a_1 e $.
	Therefore, for $t \geq t_2:=\max\{s_0 a_1 /y, t_1\}$ we have $	\int_{0}^{x_t} q_t(\delta)d\delta \leq 2x_ty \leq B\epsilon/t$.
\end{proof}

\subsection{One-step progress}
\begin{lemma}\label{lem: one step}
	Under Assumptions \ref{asm: smoothness}, \ref{asm: PL}, \ref{asm: 2-time diff} and \ref{asm: noise sub gaussian} and steps-size sequence $\{ \eta_t\} = \{ \theta_t\}$ defined in \eqref{eq: theta}, we have
	\begin{align}\label{eq: one-step progress}
	\E[\Vert \bbx^{t+1} - \bx^* \Vert^2] \leq (1-\eta_t \mu)^2 \E[\Vert \bbx^t - \bx^* \Vert^2] + \frac{\eta_t^2 \sigma^2}{N} + o\left(\frac{1}{t^2}\right).
	\end{align}
\end{lemma}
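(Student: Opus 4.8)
The plan is to track the averaged iterate $\bbx^{t+1} = \bbx^t - \eta_t \tbg^t$, Taylor-expand the gradients around $\bx^*$, and show that everything except the contraction factor and the noise variance is asymptotically negligible. Writing $\hbg_i^t = \bg_i^t + \bw_i^t$ gives $\tbg^t = \bbg^t + \bbw^t$ with $\bbw^t := \frac{1}{N}\sum_{i=1}^N \bw_i^t$. First I would condition on $\F^t$: by Assumption~\ref{asm: noise sub gaussian} the term $\bbw^t$ is zero-mean given $\F^t$, and by the independence across workers in that assumption $\E[\Vert \bbw^t \Vert^2 \mid \F^t] = \frac{1}{N^2}\sum_i \E[\Vert \bw_i^t\Vert^2\mid \F^t] \le \sigma^2/N$, so that
\begin{align*}
\E\big[\Vert \bbx^{t+1} - \bx^* \Vert^2 \,\big|\, \F^t\big] \le \Vert \bbx^t - \bx^* - \eta_t \bbg^t \Vert^2 + \frac{\eta_t^2 \sigma^2}{N}.
\end{align*}

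Next I would set $\bA := \nabla^2 f(\bx^*)$ and use the second-order expansion. Since $\nabla f(\bx^*) = \mathbf{0}$, the definition of $\bv_i^t = \bv(\bx_i^t,\bx^*)$ gives $\bg_i^t = \bA(\bx_i^t - \bx^*) + \bv_i^t$, hence $\bbg^t = \bA(\bbx^t - \bx^*) + \bar\bv^t$ with $\bar\bv^t := \frac{1}{N}\sum_i \bv_i^t$, so that $\bbx^t - \bx^* - \eta_t\bbg^t = (I - \eta_t\bA)(\bbx^t - \bx^*) - \eta_t\bar\bv^t$. By Lemma~\ref{lem: A> mu} and smoothness, $\mu I \preceq \bA \preceq L I$, and since $\eta_t \le 1/L$ the symmetric matrix $I - \eta_t\bA$ has spectral norm at most $1 - \eta_t\mu$; expanding the square and bounding the cross term with Cauchy--Schwarz then yields
\begin{align*}
\Vert \bbx^t - \bx^* - \eta_t \bbg^t \Vert^2 \le (1-\eta_t\mu)^2 \Vert \bbx^t - \bx^* \Vert^2 + 2\eta_t \Vert \bbx^t - \bx^* \Vert \, \Vert \bar\bv^t \Vert + \eta_t^2 \Vert \bar\bv^t \Vert^2 .
\end{align*}
Taking full expectations reduces the claim to showing $2\eta_t\, \E[\Vert \bbx^t - \bx^*\Vert \Vert \bar\bv^t\Vert] + \eta_t^2\, \E[\Vert \bar\bv^t\Vert^2] = o(1/t^2)$.

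For this I would combine three ingredients. Convexity of $\Vert\cdot\Vert^2$ gives $\E[\Vert\bar\bv^t\Vert^2] \le \frac1N\sum_i \E[\Vert\bv_i^t\Vert^2] = o(1/t)$ by part~1 of Lemma~\ref{lem: vt o(1/t)}, and since $\eta_t = \O(1/t)$ this makes $\eta_t^2\E[\Vert\bar\bv^t\Vert^2] = o(1/t^3)$. For the cross term, each worker runs plain SGD with step-sizes $\{\theta_t\}$, so Theorem~\ref{thm: concentration} applies to every $\bx_i^t$; combined with Theorem~\ref{thm: quadratic growth (QG)} and convexity this gives $\E[\Vert\bbx^t-\bx^*\Vert^2] \le \frac1N\sum_i \E[\Vert\bx_i^t-\bx^*\Vert^2] = \O(1/t)$. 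Then Cauchy--Schwarz yields $\E[\Vert\bbx^t-\bx^*\Vert \Vert\bar\bv^t\Vert] \le \sqrt{\E[\Vert\bbx^t-\bx^*\Vert^2]\, \E[\Vert\bar\bv^t\Vert^2]} = \sqrt{\O(1/t)\cdot o(1/t)} = o(1/t)$, so $2\eta_t\,\E[\Vert\bbx^t-\bx^*\Vert\Vert\bar\bv^t\Vert] = o(1/t^2)$. Adding back $\eta_t^2\sigma^2/N$ finishes the proof.

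I do not expect a genuine obstacle here, since the analytic work is already packaged in Lemma~\ref{lem: vt o(1/t)}. The one point requiring care is to avoid circularity: the $\O(1/t)$ control of $\E[\Vert\bbx^t-\bx^*\Vert^2]$ must come from the \emph{single-worker} guarantee of Theorem~\ref{thm: concentration} applied to each $\bx_i^t$, not from Theorem~\ref{thm: osa main}, which this lemma is used to establish. A secondary bookkeeping point is that the $o(\cdot)$ estimates should be tracked carefully enough in $t$ to survive the later unrolling of this recursion in the proof of Theorem~\ref{thm: osa main}.
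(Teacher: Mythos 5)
Your proposal is correct and follows essentially the same route as the paper: expand $\nabla f(\bx_i^t) = \nabla^2 f(\bx^*)(\bx_i^t-\bx^*) + \bv_i^t$, split off the zero-mean noise to get the $\eta_t^2\sigma^2/N$ term, use $\Vert I-\eta_t\bA\Vert \le 1-\eta_t\mu$ for the contraction, and absorb the $\bv_i^t$ contributions into $o(1/t^2)$ via Lemma~\ref{lem: vt o(1/t)}. The only (harmless, and arguably cleaner) deviation is in the cross term: you apply a single Cauchy--Schwarz $\E[\Vert\bbx^t-\bx^*\Vert\,\Vert\bar\bv^t\Vert]\le\sqrt{\E[\Vert\bbx^t-\bx^*\Vert^2]\,\E[\Vert\bar\bv^t\Vert^2]}$ together with the single-worker $\O(1/t)$ bound, which bypasses part~2 of Lemma~\ref{lem: vt o(1/t)} entirely, whereas the paper first bounds $\Vert\bbx^t-\bx^*\Vert$ by $\frac1N\sum_j\Vert\bx_j^t-\bx^*\Vert$ and handles the diagonal terms with part~2 and the off-diagonal terms using independence across workers.
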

\begin{proof}
	Let us define $\bA = \nabla^2 f(\bx^*)$. By definition,
	\begin{align}\label{eq: vi}
	\nabla f(\bx_i^t) = \bA(\bx _i^t- \bx^*) + \bv_i^t.
	\end{align}
	Plugging \eqref{eq: vi} in SGD process and averaging over all $i$ we obtain,
	\begin{align*}
	\bbx^{t+1} &= \bbx^t - \frac{\eta_t}{N}\sum_{i=1}^N \hbg_i^t 
	= \bbx^t - \frac{\eta_t}{N}\sum_{i=1}^N \left( \nabla f(\bx_i^t) + \bw_i^t \right) \\
	&= \bbx^t - \frac{\eta_t}{N}\sum_{i=1}^N \left( \bA(\bx_i^t - \bx^*) + \bv_i^t+  \bw_i^t \right) 
	= \bbx^t - \eta_t \bA (\bbx^t - \bx^*) + \frac{\eta_t}{N} \sum_{i=1}^N (\bv_i^t + \bw_i^t).
	\end{align*}
	Thus,
	\begin{align*}
	\E[\Vert \bbx^{t+1} - \bx^* \Vert^2| \F_t] &= 
	\E[\Vert (I - \eta_t \bA)(\bbx^t - \bx^*) + \frac{\eta_t}{N} \sum_{i=1}^N (\bv_i^t + \bw_i^t) \Vert^2 | \F_t]\\
	&= \E[\Vert (I - \eta_t \bA)(\bbx^t - \bx^*) + \frac{\eta_t}{N} \sum_{i=1}^N \bv_i^t \Vert^2 | \F_t] + \E[ \Vert \frac{\eta_t}{N} \sum_{i=1}^N \bw_i^t \Vert^2 | \F_t] \\
	& \leq  \E[\Vert (I - \eta_t \bA)(\bbx^t - \bx^*) + \frac{\eta_t}{N} \sum_{i=1}^N \bv_i^t \Vert^2 | \F_t] + \frac{\eta_t^2 \sigma^2}{N}.
	\end{align*}
	Taking full expectation with respect to $\F_t$ yields,
	\begin{align}
	\E[\Vert \bbx^{t+1} - \bx^* \Vert^2] &	\leq \E[\Vert (I - \eta_t \bA)(\bbx^t - \bx^*) + \frac{\eta_t}{N} \sum_{i=1}^N \bv_i^t  \Vert^2] + \frac{\eta_t^2 \sigma^2}{N} \nonumber\\
	& = \E[\Vert (I - \eta_t \bA)(\bbx^t - \bx^*)\Vert^2]  + \frac{\eta_t^2 \sigma^2}{N} \nonumber\\
	&+	\underbrace{\E[\Vert \frac{\eta_t}{N} \sum_{i=1}^N \bv_i^t \Vert^2]}_{T_2}
	+ \underbrace{\E[\Vert (I - \eta_t \bA)(\bbx^t - \bx^*)\Vert \Vert \frac{\eta_t}{N} \sum_{i=1}^N \bv_i^t \Vert]}_{T_3}. \label{eq: one-step 1}
	\end{align}
	Next we bound $T_2$ and $T_3$. Using Lemma \ref{lem: vt o(1/t)} we have,
	\begin{align}\label{eq: T2}
	\E[\Vert \frac{\eta_t}{N} \sum_{i=1}^N \bv_i^t \Vert^2] \leq \frac{\eta_t^2}{N} \sum_{i=1}^N \E[\Vert \bv_i^t \Vert^2] \leq \frac{4}{\mu^2 t^2} o\left(\frac{1}{t} \right) = o \left(\frac{1}{t^3} \right).
	\end{align}
	Moreover, by Lemma \ref{lem: A> mu} and $f$ being $L$-smooth we have  $\mu \preceq \bA \preceq L $. It follows
	\[1 - \eta_t L \preceq I - \eta_t\bA \preceq 1 - \eta_t \mu. \]
	Since $\eta_t \leq 1/L$ and $I - \eta_t\bA$ is symmetric, we have $\Vert I - \eta_t\bA \Vert \leq 1 - \eta_t \mu \leq 1$. Then,
	\begin{align*}
	\Vert (I - \eta_t \bA)(\bbx^t - \bx^*)\Vert \leq \Vert I - \eta_t \bA \Vert \Vert \bbx^t - \bx^* \Vert \leq \Vert \bbx^t - \bx^* \Vert.
	\end{align*}
	Thus,
	\begin{align} 
	\E[\Vert (I - \eta_t \bA)(\bbx^t - \bx^*)\Vert \Vert \frac{\eta_t}{N} \sum_{i=1}^N \bv_i^t \Vert] &\leq \E[\Vert \bbx^t - \bx^*\Vert \Vert \frac{\eta_t}{N} \sum_{i=1}^N \bv_i^t \Vert] \nonumber \\
	&\leq \E\left[ \left( \frac{1}{N} \sum_{i=1}^{N}\Vert \bx_i^t - \bx^* \Vert \right) \left( \frac{\eta_t}{N} \sum_{i=1}^{N}\Vert \bv_i^t \Vert \right) \right] \nonumber \\
	&\leq \frac{\eta_t}{N^2} \sum_{i=1}^{N}\E[ \Vert \bx_i^t - \bx^* \Vert \Vert \bv_i^t \Vert ] + \frac{\eta_t}{N^2} \sum_{i\neq j}\E[ \Vert \bx_j^t - \bx^* \Vert \Vert \bv_i^t \Vert ] \nonumber \\
	& = \frac{\eta_t}{N^2} \sum_{i=1}^{N}\E[ \Vert \bx_i^t - \bx^* \Vert \Vert \bv_i^t \Vert ] + \frac{\eta_t}{N^2} \sum_{i\neq j}\E[ \Vert \bx_j^t - \bx^* \Vert] \E[\Vert \bv_i^t \Vert ] \nonumber\\
	&\leq \frac{2}{\mu t} o \left(\frac{1}{t} \right ) + \frac{2}{\mu t}o\left(\frac{1}{\sqrt{t}} \right) o \left(\frac{1}{\sqrt{t}} \right) = o \left(\frac{1}{t^2} \right), \label{eq: T3}
	\end{align}
	where we used $|E[X]| \leq \sqrt{E[X^2]}$ for random variables $\Vert \bx_j^t - \bx^* \Vert$ and $\bv_i^t$ and Lemma \ref{lem: vt o(1/t)} in last equation above.
	plugging \eqref{eq: T2} and \eqref{eq: T3} in \eqref{eq: one-step 1} we obtain the desired result.
\end{proof}

Now we are ready to present the proof of Theorem \ref{thm: osa main}.
\begin{proof}[Proof of Theorem \ref{thm: osa main}]
	Denote $\psi^t := \E[\Vert \bbx^t - \bx^* \Vert]$ for $t\geq 0$. By Lemma \ref{lem: one step} we can write,
	\[ \psi^{t+1} \leq \psi^t(1-\eta_t \mu)^2 + \frac{\eta_t^2 \sigma^2}{N} + \nu^t, \]
	where $\nu^t\geq 0$ and $\nu^t = o(1/t^2)$. 
	It follows,
	\begin{align} \label{eq: ak}
	\psi^k \leq \underbrace{\psi^0 \prod_{t=0}^{k-1} (1-\eta_t \mu)^2}_{S_1} + \underbrace{\sum_{t=0}^{k-1}  \frac{\eta_t^2 \sigma^2}{N} \prod_{l=t+1}^{k-1} (1-\eta_l\mu)^2}_{S_2} + \underbrace{\sum_{t=0}^{k-1} \nu^t \prod_{l=t+1}^{k-1} (1-\eta_l\mu)^2}_{S_3}.
	\qquad \forall k\geq t_0.
	\end{align}
	Next, we will bound each of the terms $S_1$, $S_2$, and $S_3$. Before that, we note that for $t\geq t_0$, 
	\[ 1 - \eta_t \mu = 1 - \frac{2t}{(t+1)^2} \leq 1 - \frac{2}{t}.\]
	Therefore, for $t_2 > t_1 \geq t_0$ we have,
	\begin{multline}
	\prod_{l=t_1}^{t_2 - 1} (1 - \eta_l \mu) \leq  \prod_{l=t_1}^{t_2 - 1} (1 - \frac{2}{l}) = \exp\left( \sum_{l=t_1}^{t_2 - 1} \log(1 - \frac{2}{l})\right) \\
	\leq \exp \left( \sum_{l=t_1}^{t_2 - 1} \frac{-2}{l} \right) \leq \exp \left( 2\log(t_1) - 2\log(t_2) \right) = \left( \frac{t_1}{t_2}\right)^2.
	\end{multline}
	Now we have the tools we need to bound $S_1$, $S_2$, and $S_3$. we have,
	\begin{align}\label{eq: S1}
	S_1 = \Vert \bbx^0 - \bx^* \Vert^2 \prod_{t=0}^{t_0 - 1} (1-\eta_t \mu)^2 \prod_{t=t_0}^{k-1} (1-\eta_t \mu)^2 \leq (1 - \frac{\mu}{L})^{2t_0}\left( \frac{t_0}{k} \right)^4 \Vert \bbx^0 - \bx^* \Vert^2.
	\end{align}
	\begin{align}\label{eq: S2}
	S_2 &= \sum_{t=0}^{t_0-1}  \frac{\eta_t^2 \sigma^2}{N} \prod_{l=t+1}^{t_0-1} (1-\eta_l\mu)^2 \prod_{l=t_0}^{k-1} (1-\eta_l\mu)^2
	+ \sum_{t=t_0}^{k-1}  \frac{\eta_t^2 \sigma^2}{N} \prod_{l=t+1}^{k-1} (1-\eta_l\mu)^2 \nonumber \\ 
	& \leq \frac{\sigma^2}{N}\left[\sum_{t=0}^{t_0-1} \frac{1}{L^2}(1 - \frac{\mu}{L})^{2(t_0 - 1 - t)} \left( \frac{t_0}{k} \right)^4 + 
	\sum_{t=t_0}^{k-1} \left(\frac{2t}{\mu (t+1)^2} \right)^2 \left( \frac{t+1}{k} \right)^4 \right] \nonumber \\
	& = \frac{\sigma^2}{N}\left[\frac{t_0^4}{L^2 k^4}\sum_{t=0}^{t_0-1}(1 - \frac{\mu}{L})^{2t} + 
	\frac{4}{\mu^2 k^4}\sum_{t=t_0}^{k-1} t^2 \right] \nonumber \\
	& \leq \frac{\sigma^2}{N}\left[\frac{t_0^4}{L^2 k^4}\sum_{t=0}^{\infty}(1 - \frac{\mu}{L})^{2t} + 
	\frac{4}{\mu^2 k^4}\sum_{t=1}^{k-1} t^2 \right] \nonumber \\
	& = \frac{\sigma^2}{N}\left[\frac{t_0^4}{L^2 k^4(1 - (1-\frac{\mu}{L})^2)} + 
	\frac{2k(k-1)(2k-1)}{3\mu^2 k^4} \right] \nonumber \\
	& \leq \frac{\sigma^2}{N}\left[\frac{t_0^4}{L\mu k^4} + 
	\frac{4}{3 \mu^2 k} \right] = \frac{4\sigma^2}{3 N\mu^2 k}\left[1 + \frac{3 \mu t_0^4 }{4Lk^3} \right].
	\end{align}
	Next, we show $S_3 = o(1/k)$. 
	Since $\nu^t = o(1/t^2)$, without loss of generality, we can assume there exists $B_1,B_2>0$ such that for any $\epsilon>0$, there exists $k_1\geq t_0$ such that,
	\begin{align*}
	\nu^t \leq \begin{cases}
	\frac{B_1}{(t+1)^2}, \qquad & t \geq 0, \\
	\frac{\epsilon B_2}{(t+1)^2}, \qquad & t \geq k_1.
	\end{cases}
	\end{align*}
	It follows,
	\begin{align*}
	S_3	&\leq \sum_{t=0}^{t_0 - 1} (1 - \frac{\mu}{L})^{2(t_0 - 1 - t)} \left( \frac{t_0}{k}\right)^4 \frac{B_1}{(t+1)^2}  +	\sum_{t=t_0}^{k_1 - 1} \left( \frac{t+1}{k} \right)^4 \frac{B_1}{(t+1)^2}
	+ \sum_{t=k_1}^{k - 1} \left( \frac{t+1}{k} \right)^4 \frac{\epsilon B_2}{(t+1)^2} \\
	& \leq \sum_{t=0}^{\infty } \left( \frac{t_0}{k}\right)^4 \frac{B_1}{(t+1)^2} + 
	\sum_{t=t_0}^{k_1 - 1} \frac{B_1}{k^2} 
	+  \sum_{t=k_1}^{k - 1}  \frac{\epsilon B_2}{k^2} \\
	& \leq \frac{2t_0^4 B_1}{k^4} + \frac{B_1 (k_1 - t_0)}{k^2} + \frac{\epsilon B_2 (k - k_1)}{k^2} \\
	& \leq \frac{2\epsilon t_0 B_1}{k} + \frac{\epsilon B_1}{k} + \frac{\epsilon B_2}{k} = \frac{\epsilon( B_1(2t_0+1) + B_2)}{k}, \qquad \text{for } k\geq \left\lceil\frac{k_1}{\epsilon}\right\rceil.
	\end{align*}
	Thus, 
	\begin{align}\label{eq: S3}
	S_3 = o\left( \frac{1}{k} \right).
	\end{align}
	Plugging \eqref{eq: S1}-\eqref{eq: S3} in \eqref{eq: ak} results,
	\begin{align*}
	\E[\Vert \bbx^k - \bx^* \Vert^2] &\leq \frac{  (1 - \frac{\mu}{L})^{2t_0} t_0^4}{k^4}\Vert \bbx^0 - \bx^* \Vert^2 + \frac{4\sigma^2}{3 N\mu^2 k} + \frac{\sigma^2 t_0^4}{N L \mu k^4} + o\left( \frac{1}{k} \right) \\
	& = \frac{4\sigma^2}{3 N\mu^2 k} + o\left( \frac{1}{k} \right).
	\end{align*}
\end{proof}

\end{document}